  \providecommand\BibTeX{{%
    \normalfont B\kern-0.5em{\scshape i\kern-0.25em b}\kern-0.8em\TeX}}}
  \renewcommand\@acmplainheadfont{\scshape\bfseries}
  \renewcommand\@acmdefinitionheadfont{\scshape\bfseries}
\let\vec\boldsymbol%
\DeclareMathOperator*{\argmax}{argmax}
\DeclareMathOperator*{\rank}{rank}
\DeclarePairedDelimiter{\ceil}{\lceil}{\rceil}
\DeclarePairedDelimiter{\innerp}{\langle}{\rangle}
\newcommand\norm[1]{\left\lVert#1\right\rVert}
\LetLtxMacro{\oldsqrt}{\sqrt}
\renewcommand\sqrt[1][]{\oldsqrt[\leftroot{-1}\uproot{3}#1]}
\newcommand{\LL}[1]{\textcolor{black}{#1}} 
\newcommand{\eat}[1]{}
\newcolumntype{Y}{>{\centering\arraybackslash}X}
\newcommand{\cE}{\ensuremath{\mathcal{E}}}
\newcommand{\cK}{\ensuremath{\mathcal{K}}}
\newcommand{\cS}{\ensuremath{\mathcal{S}}}
\newcommand{\cT}{\ensuremath{\mathcal{T}}}
\newcommand{\cV}{\ensuremath{\mathcal{V}}}
\newcommand{\algoname}{HCF}
\def\author@bx@sep{1pt}
\begin{document}

\title{Predicting Cascading Failures with a Hyperparametric Diffusion Model}

\author{Bin Xiang}
\affiliation{%
  \institution{CNRS@CREATE}
  \city{}
  \country{}
}
\email{bin.xiang@cnrsatcreate.sg}

\author{Bogdan Cautis}
\affiliation{%
  \institution{University of Paris-Saclay, CNRS LISN}
  \city{}
  \country{}
}
\email{bogdan.cautis@universite-paris-saclay.fr}

\author{Xiaokui Xiao}
\affiliation{%
  \institution{National University of Singapore}
  \city{}
  \country{}
}
\email{xkxiao@nus.edu.sg}

\author{Olga Mula}
\affiliation{%
  \institution{Eindhoven University of Technology}
  \city{}
  \country{}
}
\email{o.mula@tue.nl}

\author{Dusit Niyato}
\affiliation{%
  \institution{Nanyang Technological University}
  \city{}
  \country{}
}
\email{dniyato@ntu.edu.sg}

\author{Laks V.S. Lakshmanan}
\affiliation{%
  \institution{University of British Columbia}
  \city{}
  \country{}
}
\email{laks@cs.ubc.ca}

\begin{abstract}
In this paper, we study cascading failures in power grids through the lens of information diffusion models. Similar to the spread of rumors or influence in an online social network, it has been observed that failures (outages) in a power grid can spread \emph{contagiously}, driven by viral spread mechanisms. We employ a stochastic diffusion model that is Markovian (memoryless) and local (the activation of one node, i.e., transmission line, can only be caused by its neighbors). Our model integrates viral diffusion principles with physics-based concepts, by correlating the diffusion weights (contagion probabilities between transmission lines) with the \emph{hyperparametric} Information Cascades (IC) model. 
 We show that this diffusion model can be learned from traces of cascading failures, enabling  accurate modeling and prediction of failure propagation. This approach facilitates actionable information through well-understood and efficient graph analysis methods and graph diffusion simulations. Furthermore, by leveraging the hyperparametric model, we can predict diffusion and mitigate the risks of cascading failures even in unseen grid configurations, whereas existing methods falter due to a lack of training data. Extensive experiments based on a benchmark power grid and simulations therein show that our approach effectively captures the failure diffusion phenomena and guides decisions to strengthen the grid, reducing the risk of large-scale cascading failures. Additionally, we characterize our model's sample complexity, improving upon the existing bound. 
\end{abstract}

\begin{CCSXML}
<ccs2012>
   <concept>
       <concept_id>10010147.10010257.10010293.10010300</concept_id>
       <concept_desc>Computing methodologies~Learning in probabilistic graphical models</concept_desc>
       <concept_significance>500</concept_significance>
       </concept>
   <concept>
       <concept_id>10003752.10010070.10010071.10010072</concept_id>
       <concept_desc>Theory of computation~Sample complexity and generalization bounds</concept_desc>
       <concept_significance>500</concept_significance>
       </concept>
 </ccs2012>
\end{CCSXML}

\ccsdesc[500]{Computing methodologies~Learning in probabilistic graphical models}
\ccsdesc[500]{Theory of computation~Sample complexity and generalization bounds}

\keywords{Cascading failures, diffusion graphs, independent cascades, hyperparametric model, learnability, power grids.}

\maketitle

\section{Introduction}%
\label{sec-introduction}
Interest in information diffusion and influence, initially motivated by social networks and viral marketing \cite{kempe2003maximizing,borgs2014maximizing}, has expanded significantly in recent years. Researchers have recognized the potential applicability of these methods beyond the social media sphere, as diffusion phenomena and algorithms for \emph{understanding} viral spread, \emph{leveraging} its reach or \emph{mitigating} its risks, are now being applied in public health and epidemiology \cite{fraser2007tracking}, transportation and logistics \cite{li2017optimizing}, internet virus propagation and cyber-security \cite{PhysRevLett.86.3200}, complex biological systems  \cite{https://doi.org/10.1046/j.1461-0248.2000.00130.x}, or ad-hoc communication networks \cite{6125993}.  

Besides these diverse fields, one critical domain stands out, pertaining to \emph{cascading failures} in power transmission networks, where viral spread phenomena can exert a significant impact, potentially disrupting our society. Just as rumors or influence may spread in an online social network, it has been observed and  confirmed by research that failures (outages) in a power grid are \emph{contagious} and can be described by viral spread mechanisms \cite{Kinney_2005, hines2009cascading}. Indeed, due to recent events like the 2021 Texas blackouts \cite{texas_blackouts_2021}, the issue of cascading grid failures has made  headlines and gained a lot of attention \cite{roth2022investigating}.
 In short, a cascading failure in a power grid represents the event of successive interdependent failures of components in the system. Usually initiated by one or a few source outages, due to endogenous or exogenous disturbances, they  propagate in a relatively short lapse of time, potentially leading to large blackouts~\cite{hines2009cascading}.

The analysis and prediction of such events is challenging because cascading failures are in general non-contiguous / non-local  with respect to the \textit{physical topology} of the power grid. This means that a failure of a specific transmission line may cause the failure of another one that is geographically distant and not directly coupled~\cite{hines2016cascading}, e.g., may be even located hundreds of kilometers away, as documented in the Western US blackout of July 2, 96~\cite{wscc}. This motivated researchers to study analytical frameworks that are based on graphs whose topology is not necessarily close to that the physical grid \cite{hines2010topological,cotilla2012comparing}. In particular, \emph{data-driven} approaches were proposed, leading to the development of  graph models of the observed interactions among components of the power grid \cite{hines2009cascading, hines2010topological, wu2021predicting, ghasemi2022higher}.   
See~\cite{nakarmi2020interaction} for a \LL{recent} survey on cascading failure analysis in power grids, based on  \emph{interaction graphs}, \LL{which model power grids with the nodes representing electrical components of interest such as buses or  transmission lines and the edges representing  interactions observed in known cascades of failures. }

However, constructing interaction / diffusion graphs that accurately capture cascading patterns hinges on two crucial factors: sufficient data quantity and high data quality. Only then can network analysis effectively reveal these patterns and ultimately guide decisions for mitigating cascading blackouts (e.g., by upgrading \LL{selected} transmission lines). Yet real-world historical data of cascading failures is hard to obtain, and the alternative to it, coming from grid simulators (e.g., quasi-steady-state or dynamic power network models~\cite{hines2016cascading,song2015dynamic}) may introduce artificial biases and errors. 
Furthermore, a significant limitation of these methods is the inability to adapt to \textit{unseen} grid configurations. They rely on past data, potentially missing rare or unique interactions  between transmission lines, \LL{which are latent}. This hinders their effectiveness in novel grid configurations and poses a significant challenge, as \LL{decisions involving grid configuration provisioning} may be time-critical. 

In this paper, we study cascading failures in power grids through the lens of information diffusion models. We  \LL{make use of} a classic stochastic diffusion model, known as Independent Cascades (IC), which is \emph{Markovian} (memoryless) and \emph{local} (the activation of one node, i.e., transmission line, can only be caused by its neighbors). Our model intentionally embraces such simplifying assumptions to exploit the efficiency and interpretability of established graph analysis tools and simulations for diffusion processes, thereby facilitating more readily actionable insights. Potential limitations in capturing intricate dependencies are counterbalanced by integrating \emph{physics-based concepts} into the viral diffusion model, aiming to (i) enhance model fidelity and (ii) reduce the sample complexity of the learning task. More precisely, we impose correlations on the diffusion weights (contagion probabilities between transmission lines) with the \emph{hyperparametric} IC  model of \cite{kalimeris2018learning}.  The model -- initially motivated by diffusion scenarios in social media -- assumes that each node in the diffusion graph has \emph{features} encapsulating its diffusion-relevant properties. Then, the diffusion weight between a pair of nodes is a function of their features and a global, low-dimensional hyperparameter.  Indeed, existing studies have demonstrated strong correlations between features of the transmission lines, which may be physics-based  or topology / connectivity-based, and their stability w.r.t. failures in the  grid~\cite{witthaut2016critical,titz2022predicting}. 
 Furthermore, since the diffusion probabilities are correlated, \emph{each observation} provides information about \emph{all
edges} in the network, thereby minimizing training data requirements for robust predictions.

\noindent  The main contributions of our paper can be summarized as follows: 
\begin{itemize}[topsep=5pt]
\item We revisit the hyperparametric IC model of \cite{kalimeris2018learning}, adapting it to the spread of failures in a diffusion graph having as nodes the transmission lines  of the power grid. 

\item We present a learning algorithm for this model, based on \LL{known cascades of failure traces}. Initializing with a complete graph representing all possible connections (accounting for non-local propagation), the  \LL{algorithm} learns diffusion probabilities over edges, effectively  sparsifying the network to reflect the most influential interactions.

\item Extensive experiments, based on a benchmark power grid and simulations therein, show that our approach can accurately model and predict how failures may propagate, and that its predictions can guide the decisions to consolidate the grid and thus reduce the risk of large cascading failures.  

\item Furthermore,  by leveraging the hyperparametric model, we show that we can make diffusion predictions and mitigate the risks of cascading failures even in \textit{unseen} grid configurations, for which existing methods cannot apply due to lack of training data.

\item Finally, we characterize the sample-complexity of our  \LL{learning algorithm}, improving upon the \LL{best known} bound of \cite{kalimeris2018learning}.  

\end{itemize}

\section{Related Work}%
\label{sec-related-work}
The state of the art for analyzing and learning cascading failures \LL{in power grids} can be mainly categorized into \emph{deterministic} and \emph{stochastic} ones. We note that while deep learning has shown promise in predicting cascading failures  (e.g.,~\cite{zhou2021power}), these approaches  lack interpretability and rigorous theoretical performance guarantees.

\emph{Deterministic methods} are mainly based on the OPA (ORNL-PSerc-Alaska) model~\cite{dobson2007complex,dobson2001initial}, which simulates the power system's response after contingencies of transmission line failures. They can provide detailed processes, tracing the cascading failures in the system, but they usually incur performance issues due to the computation of optimal power flow.

\emph{Stochastic approaches} adopt different types of models based on Markov chains or statistical learning. 
The main idea of these approaches is to estimate an influence (or contagion) probability matrix. Specifically, traditional Markov chain-based approaches~\cite{zhou2020markovian,rahnamay2016cascading} view a cascading process as a sequence of  transitions among states, where each state encapsulates the status of a group of nodes. The effectiveness of such an approach is impacted by the design of the state space and by the combinatorial characteristics of states. Specifically, by design, the individual node-to-node transition probabilities  are not directly estimated in such approaches. To address this issue, branching processes (a variant of Markov chains) have been used in~\cite{hines2016cascading,dobson2012estimating, hines2009cascading, hines2010topological}, where \LL{each failure in one stage is assumed to generate some random failures in the subsequent stage}, following an offspring distribution such as Poisson. 
Finally, the more recent statistical learning approaches~\cite{ghasemi2022higher,wu2021predicting} apply non-parametric regression models to fit the cascading processes, based on given historical traces of cascading failures. 

As mentioned previously, a major limitation of the aforementioned approaches is their inability to adapt to unseen grid configurations. More precisely, such models can only be applied to the same fixed power system configuration that produced the cascading logs. As shown in our experiments, any small change or perturbation in the power system may drastically impact their effectiveness. This poses a significant challenge, since some decisions may be time-critical and cannot be delayed until new cascading traces are produced  (e.g., by grid simulations) to retrain upon.

\emph{Information diffusion and influence} study was pioneered by the seminal works of~\cite{kempe2003maximizing,DBLP:conf/kdd/DomingosR01}. Diffusion models such as Independent Cascades (IC), Linear Threshold (LT), and generalizations thereof were introduced in~\cite{kempe2003maximizing}.  The problem of \emph{Influence Maximization} (IM) -- selecting a set of  seed nodes that  maximize the expected spread in a diffusion network under a certain diffusion model -- was also introduced and  studied extensively. See  the recent survey~\cite{10.1145/3604559} for a detailed review of the IM literature.

While classic IM methods assume  that the underlying diffusion network, including the influence probabilities associated with each connection, is known, this assumption rarely holds in practice. \eat{Instead, cascade data can be collected, and we have also seen a significant interest in the analysis of such information diffusion traces (e.g., tweet propagation in a microblogging platform), under the formal scope of \emph{Influence Estimation} (IE) \cite{GomezRodriguezS12,GomezRodriguezLK10,Gomez-RodriguezS12a}.} 
\LL{A rich body of work has been devoted to learning the underlying diffusion network when historical cascades of propagation traces are available~\cite{GomezRodriguezLK10, saito-etal-2008, DBLP:conf/wsdm/GoyalBL10, DBLP:conf/icml/DuLBS14, DBLP:conf/nips/NarasimhanPS15}.} 

Often, the drawback of such methods is their high sample-comple\-xity, as discussed in~\cite{kalimeris2018learning}, which proposes an alternative approach, with lower sample-complexity, based on a hyperparametric IC model. They assume nodes / edges have features, which induce correlations between the diffusion probabilities of different edges. This allows to minimize training data requirements for robust predictions, as also shown in \cite{kalimeris2019robust}. 

Our study starts with the thesis that methods of information diffusion analysis  from the rich IM literature may be applicable to other fields such as cascading failures in power grids, enabling us to leverage well-established techniques, thus leading more readily to actionable insights. Nevertheless, important differences have to be taken into account, such as the fact that  failures may propagate \emph{non-locally}, in a \emph{physical system} such as the power grid, while information in a social network may spreed to the connected ``neighbors'' in a sparse online network. By integrating \emph{physics-based concepts} into a 
diffusion model and by designing an adapted learning approach, we aim to address the specificity of this application scenario, in this way improving the model's effectiveness and its training data requirements such as sample-complexity. The merging of viral diffusion principles with physics-based
notions is supported by recent studies showing that  the physical features of a power grid exhibit strong correlations with the system's stability~\cite{witthaut2016critical,titz2022predicting}. This suggests that such features can be used to train a cascading failure model, one having a better potential for generalization and robustness to grid changes.  In comparison with the works discussed here, the hyperparametric approach allows us to (i) enhance model fidelity and (ii) reduce the sample complexity of the learning task. Moreover, its predictions can guide the decisions to consolidate the grid and thus reduce the risk of large CFs.  Importantly,  the hyperparametric approach enables us to make diffusion predictions and mitigate the risks of CFs even in \textit{unseen} grid configurations, for which existing methods cannot apply due to lack of training data.

\vspace{-3pt}

\section{Diffusion Model}%
\label{sec-system-model}
\subsection{Classic Diffusion Models} 

We start with the premise that CFs bear resemblance to the well-studied processes of information diffusion in generic settings, particularly using the classic Independent Cascade (IC) model~\cite{kempe2003maximizing}, which we briefly review next. 

\vspace{1mm}
\noindent 
\textbf{IC Model:} In the classic IC model, we have a  graph $\mathcal{G(V,E)}$ and a probability $p_{uv}$ associated with every edge $(u,v) \in \cE$. Diffusion proceeds in discrete time steps. At time $t=0$, only the seed nodes, the initiators of a cascade, are active. Once activated, a node remains active. Every node $u$ that became active at time $t>0$ has one chance to activate each of its inactive neighbors $v$, with probability $p_{uv}$. The propagation terminates when a fixpoint is reached and no more nodes can become active. The classic IM problem aims at finding $k$ seed nodes that lead to the maximum number of activated nodes in expectation, also known as \textit{(expected) spread}. 

\vspace{1mm}
\noindent 
\textbf{Hyperparametric IC Model:} An instance of IC model is characterized by the edge probabilities. Since these parameters may not be known or learned exactly, there has been an investigation of IM over a set of model instances corresponding to the uncertainty in our knowledge of edge probabilities \cite{chen2017robust, he2018stability, kalimeris2018learning}. A particularly elegant approach among those is the so-called hyperparametric IC model \cite{kalimeris2018learning}. It postulates a vector of features associated with each node which are relevant to the node exerting and experiencing influence. Examples of such features include age, gender, profession, degree, pagerank, etc. The influence between a pair of nodes is a function of the node features and a global low dimensional hyperparameter $\vec{\theta}$. More precisely, the hyperparametric model $H: \Theta \times X \rightarrow [0,1]$ restricts the IC model by imposing correlations among edge probabilities. Each edge $(u,v) \in \cE$ is associated with a $d$-dimensional vector $\vec{x}_{uv} \in X$, where $X \subseteq [0,1]^d$ which encodes the features associated with its endpoints. The probability of $u$ influencing $v$ is a function of $\vec{x}_{uv}$ and a low-dimensional hyperparameter, i.e., $H(\vec{x}_{uv}, \vec{\theta})$, where $\vec{\theta} \in \Theta \subseteq [-B,B]^d$, for some constant $B>0$.

\eat{ 
In this section, we first describe influence graphs in  Section \ref{sec-influence-graph} and then our adaptation of the hyperparametric Independent Cascades (IC) model, initially proposed in \cite{kalimeris2018learning}, for cascading failures in power grids (Section  \ref{sec-hp-model}). 
} 

\vspace{-1mm}

\subsection{Adaptation to Power Grids}  
\noindent 
\textbf{Diffusion graph:} 
Earlier attempts to model power grids as graphs, with nodes representing generators or buses and edges representing transmission lines were found to be ineffective in predicting failures as these can propagate in a manner that transcends the grid topology \cite{dobson2012estimating}. As such, we do not discuss these approaches further. 

Network models where nodes represent transmission lines and edges are based on observed cascades of failures have been found to be more successful in modeling and predicting failure cascades \cite{hines2009cascading}. However, 
unlike in applications such as social networks, the diffusion graph underlying cascading failures in power grids is not explicitly available and must  be learned from available cascades. In these applications, a set of cascades \cK\ is available, where for each $k \in \cK$, the cascade consists of a sequence of sets of nodes (i.e., transmission lines) $(\cV_0, \cV_1, ..., \cV_{\cT_k})$, where $\cV_t$ is the set of nodes that failed at time $t \in [0,\cT_k]$. It is assumed that any node that failed at time $t$ could influence any node that failed at the next time step $t+1$. With no further information available on failure propagation, we allow the possibility that the diffusion graph contains an edge $(u,v)$, for all $u\in \cV_t$ and $v\in \cV_{t+1}$. Given a set of such cascades $\cK$, we learn the most likely diffusion graph that explains all observed failure cascades using a learning algorithm, which tries to sparsify the above graph.

\eat{ 
\subsection{Influence graphs}
\label{sec-influence-graph}
A power grid can be modeled as a graph, where the nodes represent either generators or load buses, and edges correspond to transmission lines. The complex nature of cascading failures, as discussed in Section~\ref{sec-introduction}, transcends the physical topology of this initial graph representation, requiring the introduction of a directed \emph{influence graph}, $\mathcal{G(V,E)}$ instead. In this graph, each node $v\in\mathcal{V}$ represents a transmission line within the power grid, and each edge $e\in\mathcal{E}$ (i) represents the potential interactions between its endpoints, and (ii) is associated with an influence probability. Conceptually, the influence graph is a complete one, due to the unknown hidden interactions among possibly distant and not directly coupled transmission lines.

Cascading failures (CFs) recorded in the event data can be modeled as a sequence of failures of transmission lines with time stamps, i.e., for any given time $t\in \bR_{\geqslant 0}$, a subset $\mathcal{V}_{t}\subseteq\mathcal{V}$ comprises the transmission lines that have failed at time $t$.
The onset of cascading is triggered by the failure of $n\geq 1$ lines, referred to as $N-n$ contingencies, where $N=|\mathcal{V}|$ is the total number of lines.

Given the cascading event data, there are several ways to describe the interactions or influence among  transmission lines. A straightforward way is to interpret successive (pairwise) line failures as interacting, but this overlooks the synergistic effects multiple lines might exert on other lines. A more nuanced approach involves segmenting the data into sequences of \emph{generations / groups}, based on a temporal threshold, assuming that failures within one generation precipitate the failures in the subsequent one~\cite{dobson2012estimating}.

Our representation of historical event data will be as follows. Let $\mathcal{K}$ denote the set of all the cascades recorded in the event log, let $\mathcal{T}_k$ represent the set of discrete time steps in cascade $k\in\mathcal{K}$ (where the discrete time is for the convenience of representation), and let $\mathcal{V}_{k,t}\subseteq\mathcal{V}$ denote the group of nodes failed at step $t\in\mathcal{T}_k$.
Here, $\mathcal{V}_{k,0}$ marks the initially failed nodes, and $\mathcal{V}_{k,|\mathcal{T}_k|+1}=\varnothing$ indicate the end of the cascade $k$.
Then, the influence graph can be constructed based on all the consecutive groups $\{\mathcal{V}_{k,t},\,\mathcal{V}_{k,t+1}\}_{k\in\mathcal{K},t\in\mathcal{T}_k}$.
} 
\eat{ 
\subsection{Hyperparametric IC model}
\label{sec-hp-model} 
} 

\vspace{1mm}
\noindent 
\textbf{Choice of Model:} 
As discussed in Section \ref{sec-related-work}, most prior research on cascading failures (CFs) uses statistical or probabilistic frameworks, such as branching processes or Markov chains, to analyze interactions. However, these methods often work only for specific cases, rendering them less reliable for broader applications. We posit that hyperparametric modeling can offer a more effective solution, as evidenced by its potential in studies of information diffusion in social networks~\cite{kalimeris2018learning, kalimeris2019robust}. Nevertheless, CFs in power grids pose unique challenges: non-local cascades, complex networks, and intricate physical properties.
To address these issues, we leverage the physical and topological features of power grids to unravel the intricate dynamics of cascading failures. We propose an adaptation of the  hyperparametric model that quantifies influence probabilities using these features, coupled with data from observed CF events.
\eat{This model is particularly suited to capture the complex interplay between various factors influencing cascading failures.}

\vspace{1mm}
\noindent 
\textbf{Influence probability:}
Various functions can be used to model the relationship between features, the hyperparameter, and the likelihood of influence. We adopt here the logistic function, which is frequently used in the existing literature. Let $p_{uv}$ denote the influence probability from line $u$ to $v$, $\vec{x}_{uv}$  the vector of features of the endpoints (transmission lines) $u$ and $v$, and $\vec{\theta}$  the hyperparameter vector. Then, the influence probability function is 
\begin{equation} 
    p_{uv}=\frac{1}{1+e^{-\vec{\theta}^\mathsf{T}\vec{x}_{uv}}}.\label{eq_p}
\end{equation}

The feature values are assumed to be normalized, i.e., $\vec{x}_{uv}\in [-1,1]^d$, where $d$ is the dimensionality of the hyperparameter. While most features exhibit non-negative values, some, such as impedance, may span both negative and non-negative ranges.
The hyperparameter vector is confined within a specific hypothesis space, i.e., $\vec{\theta}\in\mathcal{H}=[-B, B]^d$, for some constant  $B>0$.

\eat{We start with the premise that CFs bear resemblance to the well-studied processes of information diffusion in generic settings, particularly to the classic independent cascading (IC) model~\cite{kempe2003maximizing}. To facilitate comparison, we briefly review this diffusion model here. The process begins with an initial set of active nodes, which then attempt to activate their inactive neighbors in a stochastic manner at the subsequent time step. Each \emph{newly} active node $u$ at time step $t$ \emph{independently} tries to activate each inactive neighbor $v$  once at time $t+1$, and succeeds with probability $p_{uv}$. An activated node remains active throughout the process, with the diffusion reaching a fixed point when no further activations occur.
} 

In the context of CFs, a line failure is analogous to the activation of a node in the IC model. Despite the complexity of failure propagations in power grids, it is often assumed for analytical purposes that the influence between elements is independent, simplifying the study of transmission line interactions \cite{hines2016cascading}.

In summary, the primary distinctions between CFs in a power grid and traditional IC settings include: (i) the influence graph for CFs, unlike the physical grid of a power system, is conceptual and initially viewed as a complete graph (due to non-contiguous events), whereas a social network's known topology serves as the influence medium, and (ii) the influence probability matrix for CFs will be linked to physical and topological features from the power grid, by a hyperparametric model. 

\eat{The analysis and representation of event data suggest that each diffusion step in a cascade $k\in\mathcal{K}$ is characterized by two consecutive sets of failed nodes, i.e., $\{\mathcal{V}_{k,t},\,\mathcal{V}_{k,t+1}\}$. Recall that $\mathcal{V}_{k,t}$ denotes the nodes failed at step $t$ in cascade $k$.
From the independent cascading assumption, a set of \emph{positive} samples $\mathcal{S}^+_{kt}$ can be defined to represent each diffusion step, written as:} 
Recall that a cascade of failures in a power grid consists of a sequence of sets of lines $(\cV_0, \cV_1, ..., \cV_{\cT_k})$ that fail at successive time steps. Recall that the influence graph $\mathcal{G(V,E)}$ starts as a complete one, where any node $u\in\mathcal{V}_{k,t}$ can potentially influence any node $v\in\mathcal{V}_{k,t+1}$. From the independent cascading assumption, a set of \emph{positive} samples $\mathcal{S}^+_{kt}$ can be associated with  each diffusion step, defined as:
\begin{equation}
    \mathcal{S}^+_{kt}=\{(u,v)\,|\, u \in \cV_{k,t}, v\in\mathcal{V}_{k,t+1}\},\;\forall k\in\mathcal{K},\forall t\in\mathcal{T}_k.
\end{equation}
That is, in cascade $k$, each $v\in\mathcal{V}_{k,t+1}$ at time $t+1$ is ``activated'' due to the set of nodes $\mathcal{V}_{k,t}$ \textcolor{black}{which became active} at time $t$.
\textcolor{black}{Note that \textit{any} node  $u\in \cV_{k,t}$ could have activated $v\in\mathcal{V}_{k,t+1}$, we have no information on exactly which one, and we use this pairwise notation  to adhere to the standard notation where every node is atomic and does not contain sets.}
For convenience, we sometimes abuse notation and denote the positive samples as $\mathcal{S}^+_{kt}=\{(\cV_{k,t},v)\,|\,  v\in\mathcal{V}_{k,t+1}\},\;\forall k\in\mathcal{K},\forall t\in\mathcal{T}_k.$ 

This setup also implicitly includes a set $\mathcal{S}^-_{kt}$ of \emph{negative} samples, representing node pairs that do not influence each other within a given time step: 
\begin{equation}
    \mathcal{S}^-_{kt}=\{(u,v)\,|\, u \in \cV_{k,t}, v\notin\textstyle\bigcup^{t+1}_{\tau=0}\cV_{k,\tau}\},\;\forall k\in\mathcal{K},\forall t\in\mathcal{T}_k.
\end{equation} 
Again, we sometimes abuse notation for convenience and write the negative sample as 
$\mathcal{S}^-_{kt} = \{(\cV_{k,t},v)\,|\,  v\notin\textstyle\bigcup^{t+1}_{\tau=0}\cV_{k,\tau}\},\;\forall k\in\mathcal{K},\forall t\in\mathcal{T}_k.$
\eat{ 
\begin{equation}
    \mathcal{S}^-_{kt}=\{(\mathcal{V}_{k,t},v)\,|\,v\in\mathcal{V}-\textstyle\bigcup^{t+1}_{\tau=0}\mathcal{V}_{k,\tau}\},\;\forall k\in\mathcal{K},\forall t\in\mathcal{T}_k.
\end{equation} 
} 
\LL{Let  $\vec{s}=(\mathcal{V}_{\vec{s}},v) \in \cS^+_{k,t}$ be a positive sample,} where $\mathcal{V}_{\vec{s}}$ is the set of nodes that may influence $v$ in sample $\vec{s}$. Based on Eq.~\eqref{eq_p}, the \LL{likelihood} of a positive sample in one diffusion step, \LL{for any} $\vec{s}\in\mathcal{S}^+_{kt}$, is:
\begin{equation}
    P(\vec{\theta}\,|\,\vec{x},\vec{s})=1-\textstyle\prod_{u\in\mathcal{V}_{\vec{s}}}(1-p_{uv}).\label{eq_p_theta}
\end{equation}
\LL{For any negative sample $\vec{s}=(\mathcal{V}_{\vec{s}},v) \in \cS^-_{k,t}$ , the likelihood is $1-P(\vec{\theta}\,|\,\vec{x},\vec{s})$.} 

\vspace{1mm}
\noindent 
\textbf{Estimator:}
Given this CF model, the traditional Maximum Likelihood Estimation (MLE) method can be applied to estimate the hyperparameter. The idea is to maximize the likelihood between the predictions of the hyperparametric model and the ground truth of event data.

Based on Eq.~\eqref{eq_p_theta}, the likelihood of one diffusion step can be formulated as $\prod_{\vec{s}\in\mathcal{S}^+_{kt}}P(\vec{\theta}\,|\,\vec{x},\vec{s})\prod_{\vec{s}\in\mathcal{S}^-_{kt}}(1-P(\vec{\theta}\,|\,\vec{x},\vec{s}))$.
Let $y=\mathbf{I}_{\mathcal{S}^+_{kt}}(\vec{s})$ denote an indicator function which equals 1 if $\vec{s}\in\mathcal{S}^+_{kt}$, 0 otherwise.
Then, the log-likelihood, known as the cross-entropy is: 
\begin{equation}
    L(\vec{\theta}\,|\,\vec{x},\vec{s},y)=y\log P(\vec{\theta}\,|\,\vec{x},\vec{s})+(1-y)\log(1-P(\vec{\theta}\,|\,\vec{x},\vec{s})). \label{eq_l_loss}
\end{equation}
Let $\mathcal{S}=\{(\vec{s},y)\,|\,\vec{s}\in\mathcal{S}^+_{kt}\cup\mathcal{S}^-_{kt}, y=\mathbf{I}_{\mathcal{S}^+_{kt}}(\vec{s}), k\in\mathcal{K},t\in\mathcal{T}_k\}$ denote all the samples of the cascade events.
\eat{For convenience, we also define the} \LL{Notice that the  set of positive samples  is $\mathcal{S}^+=\{\vec{s} \mid (\vec{s},1)\in\mathcal{S}\}$ and the set of negative samples is  $\mathcal{S}^-=\{\vec{s} \mid (\vec{s},0)\in\mathcal{S}\}$. } 

Then, the expected log-likelihood over $\mathcal{S}$ can be written as:
\begin{equation}
    \mathcal{L}_{\mathcal{S}}(\vec{\theta}\,|\,\vec{x})=\frac{1}{|\mathcal{S}|}\sum_{(\vec{s},y)\in\mathcal{S}}L(\vec{\theta}\,|\,\vec{x},\vec{s},y).\label{eq_ls}
\end{equation}

Finally, the empirical estimator can be written as:
\begin{equation}
    \hat{\vec{\theta}}=\argmax_{\vec{\theta}\in\mathcal{H}}\mathcal{L}_{\mathcal{S}}(\vec{\theta}\,|\,\vec{x}).\label{eq_theta_h}
\end{equation}

\section{Learnability}%
\label{sec-learnability}
In this section, we study the \emph{Probably Approximately Correct} (PAC) learnability~\cite{valiant1984theory,shalev2014understanding} of our model, drawing on the theory of sample complexity analysis~\cite{shalev2014understanding} for the MLE approach. The log-likelihood function family w.r.t. the cascading failure model, which is initially part of an infinite hypothesis space, is transformed into a finite hypothesis space using covering theory and Lipschitz continuity analysis for the diffusion probability function. This transformation allows us to examine the complexity more effectively. We examine the conditional concavity of the empirical log-likelihood function, which paves the way for applying Rademacher complexity to assess the model's \LL{sample} complexity. Rademacher complexity is crucial as it evaluates the expressiveness of a function class by its ability to fit a hypothesis set to a random distribution, which is closely linked to sample complexity~\cite{mohri2018foundations}.

Building on this foundation, we derive the sample complexity for our model. 
Detailed proofs of these theoretical findings can be found in 
 the appendix. 

\begin{definition}[Agnostic PAC learnability~\cite{valiant1984theory,shalev2014understanding}]
    A hypothesis class $\mathcal{H}$ is PAC learnable if, for any $\varepsilon,\delta\in(0,1)$, and for any distribution $\mathcal{D}$ over the product space  of examples $\mathcal{X}$ and labels $\mathcal{Y}$, there exists a polynomial function $m_{\mathcal{H}}$ and a learning algorithm $\mathcal{A}$, such that when $\mathcal{A}$ is run on $\geq m_{\mathcal{H}}(\varepsilon,\delta)$ i.i.d.\ samples from $\mathcal{D}$, \eat{of size $m\geqslant m_{\mathcal{H}}(\varepsilon,\delta)$,} it produces a hypothesis $h$ that, with probability $\geq 1-\delta$, achieves a loss $\mathcal{L}_\mathcal{D}(h)$ that is within $\varepsilon$ of the minimum possible loss over all hypotheses in $\mathcal{H}$.    
\end{definition}

\LL{Following~\cite{narasimhan2015learnability,kalimeris2018learning}, we assume that the influence probability $p_{uv}$ is restricted to the interval $[\lambda, 1-\lambda]$, where $\lambda\in(0,0.5)$ is a constant that controls the precision of our estimates. W.l.o.g., we also assume that each node in the network has at least one significant feature. Together, these assumptions allow us to  bound the magnitude of the influence weights, which \eat{, when combined with the assumption that each node in the network possesses at least one significant feature,} enables us to define the range of the hypothesis space $\mathcal{H}$. }

We next analyze the Lipschitz continuity of the log-likelihood function, based on its gradient:
\begin{equation}
    \nabla_{\vec{\theta}} L(\vec{\theta}\,|\,\vec{x},\vec{s},y)
    = \left(\frac{y}{P(\vec{\theta}\,|\,\vec{x},\vec{s})}-\frac{1-y}{1-P(\vec{\theta}\,|\,\vec{x},\vec{s})}\right)\nabla_{\vec{\theta}} P(\vec{\theta}\,|\,\vec{x},\vec{s}),\label{eq_dL}
\end{equation}
where the gradient of $P$ w.r.t.\ $\vec{\theta}$ is given by:
\begin{equation}
    \nabla_{\vec{\theta}} P(\vec{\theta}\,|\,\vec{x},\vec{s})
    = \sum_{u\in\mathcal{V}_{\vec{s}}}\nabla_{\vec{\theta}}p_{uv} \prod_{u'\in\mathcal{V}_{\vec{s}}\backslash \{u\}}(1-p_{u'v}).\label{eq_dP_o}
\end{equation}
Previous work~\cite{kalimeris2018learning,narasimhan2015learnability} had established  a relatively loose bound of the Lipschitz continuity of $L(\cdot)$ w.r.t.\ $\ell_q$-norm as follows:
    \[\norm{\nabla L}_q=\left|\frac{y}{P}-\frac{1-y}{1-P}\right|\norm{\nabla P}_q\leqslant\max\{\frac{1}{P},\frac{1}{1-P}\}\cdot\sup\norm{\nabla P}_q.\]
    Based on the fact that $\norm{\nabla p_{uv}}_{\infty}\leqslant 1$, $p_{uv}\in[\lambda,1-\lambda]$, and the definition of $\nabla P$ (see Eq.~\eqref{eq_dP_o}), we have $\norm{\nabla P}_{\infty}\leqslant N(1-\lambda)^{N-1}$, where $N\leqslant|\mathcal{V}|$ is the maximum number of  active neighbor nodes.
    \LL{From Eq.~\eqref{eq_p_theta}, $P(\vec{\theta}\,|\,\vec{x},\vec{s})\in[\lambda,1-\lambda^{V_{\vec{s}}}]$, from which we can show $\max\{\frac{1}{P},\frac{1}{1-P}\}=\frac{1}{\lambda^N}$.} 
    This leads to a loose Lipschitz bound of $\frac{\rho}{\lambda^N}$, where $\rho=N(1-\lambda)^{N-1}\sqrt[q]{d}$.
    Note that $\lambda>0$ is a small value defining the precision, e.g., $10^{-9}$. 
    We derive a tighter bound in what follows. 

\begin{lemma}\label{lemma_l_lip}
    The log-likelihood function $L(\vec{\theta}\,|\,\vec{x},\vec{s},y)$
is bounded and $(V_{\vec{s}}\sqrt[q]{d}\log\frac{1}{\lambda})$-Lipschitz w.r.t.\ $\ell_q$-norm  $\forall q\geqslant 1$, where $d=\mathsf{dim}(\vec{x})$.
\end{lemma}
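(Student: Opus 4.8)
The plan is to differentiate $L(\vec\theta\mid\vec x,\vec s,y)$ explicitly and exploit two structural facts: the logistic identity $\nabla_{\vec\theta}\,p_{uv}=p_{uv}(1-p_{uv})\,\vec x_{uv}$ coming from Eq.~\eqref{eq_p}, and the product form $1-P=\prod_{u\in\mathcal{V}_{\vec s}}(1-p_{uv})$ from Eq.~\eqref{eq_p_theta}, where I abbreviate $P:=P(\vec\theta\mid\vec x,\vec s)$. Plugging the first identity into Eq.~\eqref{eq_dP_o} and using $\prod_{u'\in\mathcal{V}_{\vec s}\setminus\{u\}}(1-p_{u'v})=\tfrac{1-P}{1-p_{uv}}$, the sum telescopes to the clean form $\nabla_{\vec\theta}P=(1-P)\sum_{u\in\mathcal{V}_{\vec s}}p_{uv}\,\vec x_{uv}$, equivalently $\nabla_{\vec\theta}\log(1-P)=-\sum_{u}p_{uv}\vec x_{uv}$ and $\nabla_{\vec\theta}\log P=\tfrac{1-P}{P}\sum_{u}p_{uv}\vec x_{uv}$. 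Since $L=y\log P+(1-y)\log(1-P)$ with $y\in\{0,1\}$, this immediately separates the two regimes $y=0$ and $y=1$.

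\textbf{Boundedness.}
Here I would simply observe that $1-P=\prod_{u\in\mathcal{V}_{\vec s}}(1-p_{uv})\in[\lambda^{V_{\vec s}},(1-\lambda)^{V_{\vec s}}]$, hence $P\in[\,1-(1-\lambda)^{V_{\vec s}},\,1-\lambda^{V_{\vec s}}\,]\subset(0,1)$; so $\log P$ and $\log(1-P)=\sum_u\log(1-p_{uv})$ are finite, and in fact $|L|\le V_{\vec s}\log\tfrac1\lambda$ because each term $\log(1-p_{uv})$ lies in $[\log\lambda,\log(1-\lambda)]$.

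\textbf{Lipschitz constant.}
Because $\vec x_{uv}\in[-1,1]^d$ we have $\norm{\sum_u p_{uv}\vec x_{uv}}_q\le\sqrt[q]{d}\,\sum_{u\in\mathcal{V}_{\vec s}}p_{uv}$, and the elementary inequality $p\le-\log(1-p)$ for $p\in[0,1)$ together with $1-p_{uv}\ge\lambda$ gives $\sum_u p_{uv}\le-\sum_u\log(1-p_{uv})\le V_{\vec s}\log\tfrac1\lambda$. For $y=0$ this is already the claim, since $\norm{\nabla L}_q=\norm{\sum_u p_{uv}\vec x_{uv}}_q$. For $y=1$, $\norm{\nabla L}_q=\tfrac{1-P}{P}\norm{\sum_u p_{uv}\vec x_{uv}}_q\le\sqrt[q]{d}\cdot\tfrac{1-P}{P}\sum_u p_{uv}$; bounding $\sum_u p_{uv}\le-\log\prod_u(1-p_{uv})=-\log(1-P)$ and then applying the one-variable estimate $-r\log r\le 1-r$ for $r\in(0,1)$ (equivalently $\log r\ge 1-\tfrac1r$) with $r=1-P$ collapses the bound to $\sqrt[q]{d}\le V_{\vec s}\sqrt[q]{d}\log\tfrac1\lambda$ (using $V_{\vec s}\ge1$, $\log\tfrac1\lambda\ge1$). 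Taking the larger of the two cases yields the $(V_{\vec s}\sqrt[q]{d}\log\tfrac1\lambda)$-Lipschitz bound.

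\textbf{Expected main obstacle.}
The step I expect to be delicate is the $y=1$ branch: this is exactly where the crude argument recalled before the lemma loses control and pays $\max\{\tfrac1P,\tfrac1{1-P}\}=\tfrac1{\lambda^{N}}$. The improvement hinges on (i) retaining the factor $1-P$ produced by the telescoping so that it cancels $\tfrac1P$ down to the benign ratio $\tfrac{1-P}{P}$, and (ii) trading $\sum_u p_{uv}$ for $-\log(1-P)$ so that the residual quantity $\tfrac{-(1-P)\log(1-P)}{P}$ is bounded by a constant through the one-dimensional inequality $-r\log r\le 1-r$. Everything else is routine algebra; the only caveat, worth noting in a remark, is that the stated constant presumes $\lambda$ lies in the intended high-precision regime $\log\tfrac1\lambda\ge1$ (e.g.\ $\lambda\approx10^{-9}$), otherwise the $y=1$ term still contributes a bare $\sqrt[q]{d}$.
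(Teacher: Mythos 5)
Your proof follows essentially the same route as the paper's: the same telescoping identity $\nabla_{\vec{\theta}}P=(1-P)\sum_{u\in\mathcal{V}_{\vec{s}}}p_{uv}\vec{x}_{uv}$, the same use of $z\leqslant-\log(1-z)$ to trade $\sum_{u}p_{uv}$ for $-\log(1-P)$, the same case split on $y$, and the same boundedness argument. The one point of divergence is the closing step of the $y=1$ branch: you bound $\frac{1-P}{P}\bigl(-\log(1-P)\bigr)$ by $1$ via $-r\log r\leqslant 1-r$, which is why you need the extra hypothesis $V_{\vec{s}}\log\frac{1}{\lambda}\geqslant 1$ flagged in your caveat; as stated, your argument leaves uncovered the corner $\lambda\in(1/e,0.5)$ with $V_{\vec{s}}=1$, which the lemma's assumption $\lambda\in(0,0.5)$ permits. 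That caveat is avoidable and the claimed constant holds on the whole range: since $r\mapsto\frac{-r\log r}{1-r}$ is increasing in $r$ and $P\geqslant\lambda$ (so $r=1-P\leqslant 1-\lambda$), one has the sharper bound $\frac{1-P}{P}\bigl(-\log(1-P)\bigr)\leqslant\frac{1-\lambda}{\lambda}\log\frac{1}{1-\lambda}$, and the function $g(\lambda)=\log\frac{1}{\lambda}-\frac{1-\lambda}{\lambda}\log\frac{1}{1-\lambda}$ is positive on $(0,0.5)$ (it decreases monotonically to $0$ as $\lambda\to 0.5$), so the $y=1$ gradient is bounded by $\sqrt[q]{d}\log\frac{1}{\lambda}\leqslant V_{\vec{s}}\sqrt[q]{d}\log\frac{1}{\lambda}$ with no condition on $\lambda$ beyond $\lambda<0.5$. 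This is precisely how the paper closes that case in its proof of Lemma~\ref{lemma_l_lip}; substituting this estimate for your $-r\log r\leqslant 1-r$ step makes your argument match the paper's in full and removes the need for the remark.
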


To examine the sample complexity of our model, we first investigate whether the log-likelihood function is concave. The idea is that if it is concave, then we can derive the sample complexity based on optimization theory~\cite{shalev2014understanding}. If not, then a canonical approach is to utilize the Rademacher complexity framework. The following lemma settles the question.

\begin{lemma}\label{lemma_l_nonc}
    The log-likelihood for one sample, i.e., $L(\vec{\theta}\,|\,\vec{x},\vec{s},y)$, is concave in $\vec{\theta}$ if the sample is either negative with $y=0$, or positive with $y=1$ and $|\mathcal{V}_{\vec{s}}|=1,\forall\vec{s}\in\mathcal{S}^+$. 
     Otherwise, it is not concave.
\end{lemma}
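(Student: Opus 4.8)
The plan is to analyze the Hessian $\nabla^2_{\vec\theta} L(\vec\theta\mid\vec x,\vec s,y)$ in the three cases and determine when it is negative semidefinite. Recall from Eq.~\eqref{eq_l_loss} that $L = y\log P + (1-y)\log(1-P)$ where $P = P(\vec\theta\mid\vec x,\vec s)$, so it suffices to understand the curvature of $\log P$ and $\log(1-P)$ as functions of $\vec\theta$.

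\textbf{Case $y=0$ (negative sample).} Here $L = \log(1-P)$ with $1-P = \prod_{u\in\mathcal V_{\vec s}}(1-p_{uv})$, hence $L = \sum_{u\in\mathcal V_{\vec s}}\log(1-p_{uv})$. Each summand is $\log\!\bigl(1-\sigma(\vec\theta^\mathsf{T}\vec x_{uv})\bigr) = -\log\!\bigl(1+e^{\vec\theta^\mathsf{T}\vec x_{uv}}\bigr)$, which is (minus) a log-sum-exp composed with an affine map, hence concave in $\vec\theta$; a sum of concave functions is concave. So first I would do this short, clean computation.

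\textbf{Case $y=1$, $|\mathcal V_{\vec s}|=1$.} Then $L = \log p_{uv} = \log\sigma(\vec\theta^\mathsf{T}\vec x_{uv}) = -\log(1+e^{-\vec\theta^\mathsf{T}\vec x_{uv}})$, again concave by the same log-sum-exp argument. This handles the "positive" part of the claimed concave cases.

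\textbf{Case $y=1$, $|\mathcal V_{\vec s}|\ge 2$ (the non-concave case — the main obstacle).} Now $L = \log\!\bigl(1-\prod_{u\in\mathcal V_{\vec s}}(1-p_{uv})\bigr)$, and I expect this to generically fail to be concave. The strategy is to exhibit a point $\vec\theta$ and a direction $\vec v$ along which the second directional derivative is strictly positive, or equivalently to reduce to a one-dimensional counterexample. A convenient reduction: take all feature vectors $\vec x_{uv}$ equal (or collinear with a common unit vector $\vec e$), so that along $\vec\theta = t\vec e$ the function becomes $g(t) = \log\!\bigl(1-(1-\sigma(ct))^m\bigr)$ for the appropriate scalar $c>0$ and $m = |\mathcal V_{\vec s}|\ge 2$. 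Then I would show $g''(t)>0$ for suitable $t$ — intuitively because for large negative $t$ the term $(1-\sigma(ct))^m\to 1$ so $1 - (1-\sigma)^m$ behaves like $m\,\sigma(ct)$ which is convex-ish in that regime, and taking $\log$ does not restore concavity when $m\ge2$; the product structure introduces an inflection. Concretely one can compute $g''$ explicitly in terms of $\sigma$, $\sigma' = \sigma(1-\sigma)$, and $p = (1-\sigma)^m$, and check its sign changes; it is cleanest to just evaluate numerically at one point (e.g. $m=2$, $c=1$, some moderate negative $t$) to certify strict positivity, then note that by continuity the same holds on an open set, so the function is not concave. The only subtlety is making sure the chosen $\vec\theta$ lies in the admissible box $\mathcal H=[-B,B]^d$ and that the induced $p_{uv}$ lie in $[\lambda,1-\lambda]$ — since $\lambda$ is tiny and $B$ is a fixed positive constant, there is plenty of room, so this is not a real obstruction. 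I would also remark that the "Otherwise" clause should be read as: there exist feature configurations making $L$ non-concave, which is what matters for the sample-complexity argument (one cannot rely on concavity in general), and hence the Rademacher-complexity route is the one to pursue in the sequel.
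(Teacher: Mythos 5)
Your two concavity cases are correct, and in fact cleaner than the paper's route: the paper establishes them by computing the Hessian $\nabla^2_{\vec{\theta}}L$ explicitly and checking semi-definiteness, whereas you observe directly that for $y=0$, $L=\sum_{u\in\mathcal{V}_{\vec{s}}}\log(1-p_{uv})$ is a sum of concave (negative log-sum-exp of affine) functions, and similarly for $y=1$ with $|\mathcal{V}_{\vec{s}}|=1$. That part is fine.

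The gap is in the non-concave case, and it is not a technicality: the collinear-feature reduction you propose actually yields a \emph{concave} function, so no $t$ with $g''(t)>0$ exists. Concretely, with all $\vec{x}_{uv}$ equal to a common $\vec{x}$ and $p_{uv}=\sigma(ct)=s$ for every $u$, the curvature of $\log\bigl(1-(1-s)^m\bigr)$ along $\vec{x}$ has the sign of $P(1-s)-ms$ with $P=1-(1-s)^m$; since Bernoulli's inequality gives $P\leqslant ms$, this is at most $ms(1-s)-ms=-ms^2<0$, and all directions orthogonal to $\vec{x}$ contribute zero curvature, so the Hessian is negative semi-definite everywhere (for $m=2$ one can also verify by hand that $\frac{d^2}{dt^2}\log\bigl(\sigma(2-\sigma)\bigr)<0$). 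Your heuristic that $1-(1-\sigma)^m\approx m\sigma$ for very negative $t$ ``stays convex after taking log'' is the misstep: $\log(m\sigma)=\log m+\log\sigma$ is concave. The non-concavity for $|\mathcal{V}_{\vec{s}}|\geqslant 2$ genuinely requires at least two linearly independent feature vectors among the activating nodes -- an assumption the paper invokes explicitly -- and the positive-curvature direction is \emph{orthogonal} to $\hat{\vec{x}}_{\vec{s}}=\sum_{u\in\mathcal{V}_{\vec{s}}}p_{uv}\vec{x}_{uv}$, not along a common feature direction. The paper's argument: the Hessian for $y=1$ is $\frac{1-P_{\vec{s}}}{P_{\vec{s}}^2}\mathbf{Q}_{\vec{s}}$ with $\mathbf{Q}_{\vec{s}}=P_{\vec{s}}\nabla_{\vec{\theta}}\hat{\vec{x}}_{\vec{s}}-\hat{\vec{x}}_{\vec{s}}\hat{\vec{x}}^{\mathsf{T}}_{\vec{s}}$ and $\nabla_{\vec{\theta}}\hat{\vec{x}}_{\vec{s}}=\sum_{u\in\mathcal{V}_{\vec{s}}}p_{uv}(1-p_{uv})\vec{x}_{uv}\vec{x}^{\mathsf{T}}_{uv}$; linear independence makes $\rank(\nabla_{\vec{\theta}}\hat{\vec{x}}_{\vec{s}})=|\mathcal{V}_{\vec{s}}|>1=\rank(\hat{\vec{x}}_{\vec{s}}\hat{\vec{x}}^{\mathsf{T}}_{\vec{s}})$, so one can pick $\vec{z}\perp\hat{\vec{x}}_{\vec{s}}$ outside the kernel of $\nabla_{\vec{\theta}}\hat{\vec{x}}_{\vec{s}}$, giving $\vec{z}^{\mathsf{T}}\mathbf{Q}_{\vec{s}}\vec{z}=P_{\vec{s}}\,\vec{z}^{\mathsf{T}}(\nabla_{\vec{\theta}}\hat{\vec{x}}_{\vec{s}})\vec{z}>0$ and hence non-concavity for the given features, not merely for some contrived feature configuration. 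Finally, your softening of the ``Otherwise'' clause to an existence statement over feature configurations proves strictly less than the lemma asserts; the fix is to adopt the rank/orthogonal-direction argument above rather than a one-dimensional counterexample.
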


\eat{By Lemma~\ref{lemma_l_nonc}, the expected log-likelihood $\mathcal{L}_{\mathcal{S}}(\vec{\theta}\,|\,\vec{x})$ depends on the dataset $\mathcal{S}$. For example, suppose that $\mathcal{S}$ contains only one specific type of positive samples: $\{((\mathcal{V}\backslash u,\,u), 1), \forall u\in\mathcal{V}\}$, where $|\mathcal{V}|>1$.} 
\LL{It follows from Lemma~\ref{lemma_l_nonc} that, in general, the expected log-likelihood $\mathcal{L}_{\mathcal{S}}(\vec{\theta}\,|\,\vec{x})$ over $\mathcal{S}$ is non-concave.} More results on the general and conditional concavity analysis can be found in  
 the appendix. 

\eat{Since $\mathcal{L}_{\mathcal{S}}(\vec{\theta}\,|\,\vec{x})$ is non-concave in general,} Given this, we resort to Rademacher bound theory to characterize the sample complexity of our model~\cite{shalev2014understanding}.
Let $\mathcal{F}$ be the family of log-likelihood functions $L(\vec{\theta}\,|\,\vec{x},\vec{s},y)$ on hypothesis $\vec{\theta}\in\mathcal{H}$, defined as:
\[\mathcal{F}\stackrel{\text{def}}{=}\{f:(\vec{s},y)\mapsto L(\vec{\theta}\,|\,\vec{x},\vec{s},y)\,|\,\vec{\theta}\in\mathcal{H}\}.\]
\LL{Let $\mathcal{F(S)}$ denote the set of vectors of likelihood values to which $\mathcal{F}$ evaluates each sample in $\mathcal{S}$, defined as:
\[\mathcal{F(S)}\stackrel{\text{def}}{=}\{[f(\vec{s},y)]_{\vec{s},y\in\mathcal{S}}\,|\,f\in\mathcal{F}\}.\]} 
We have the following lemmas.

\begin{lemma}[\textbf{Covering Number}]\label{lemma_c_num}
    Let $\mathcal{N}_{\infty}(\varepsilon,\mathcal{F(S)})$ be the $\ell_{\infty}$-norm $\varepsilon$-covering number of $\mathcal{F(S)}$, then
    \begin{equation}
        \mathcal{N}_{\infty}(\varepsilon,\mathcal{F(S)})\leqslant\ceil*{\frac{B V\log\frac{1}{\lambda}}{\varepsilon}}^d,
    \end{equation}
    where $V=\max_{\vec{s}\in\mathcal{S}}V_{\vec{s}}\leqslant|\mathcal{V}|$ is the maximum number of  activated nodes.
\end{lemma}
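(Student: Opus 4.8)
The plan is the standard reduction of an infinite function class to a finite cover: discretize the hypothesis space $\mathcal{H}=[-B,B]^d$ into a finite $\ell_\infty$-grid, then transfer this grid to an $\ell_\infty$-cover of $\mathcal{F(S)}$ via the Lipschitz bound of Lemma~\ref{lemma_l_lip}. The key quantitative input is that, for every fixed sample $(\vec{s},y)$, the map $\vec{\theta}\mapsto L(\vec{\theta}\,|\,\vec{x},\vec{s},y)$ is $(V_{\vec{s}}\log\frac{1}{\lambda})$-Lipschitz with respect to the $\ell_\infty$-norm: this follows by taking $q\to\infty$ in Lemma~\ref{lemma_l_lip} (so that $\sqrt[q]{d}\to 1$), or equivalently by writing $\|\vec{\theta}-\vec{\theta}'\|_q\le d^{1/q}\|\vec{\theta}-\vec{\theta}'\|_\infty$ for large $q$. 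Since $V=\max_{\vec{s}\in\mathcal{S}}V_{\vec{s}}$, the constant $V\log\frac1\lambda$ is a \emph{uniform} Lipschitz bound, valid simultaneously for all coordinates (samples) of $\mathcal{F(S)}$, which is exactly what lets a single net of $\mathcal{H}$ do the job.

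Concretely, I would set $\alpha=\varepsilon/(V\log\frac1\lambda)$ and partition each coordinate interval $[-B,B]$ into $\lceil B/\alpha\rceil$ subintervals of length at most $2\alpha$; let $T$ be the grid of midpoints of the resulting cells, so $|T|=\lceil B/\alpha\rceil^d=\lceil \frac{BV\log(1/\lambda)}{\varepsilon}\rceil^d$ and every $\vec{\theta}\in\mathcal{H}$ lies within $\ell_\infty$-distance $\alpha$ of some $\vec{\theta}'\in T$. Then, for any $f\in\mathcal{F}$ with parameter $\vec{\theta}$, choosing such a $\vec{\theta}'\in T$ and applying the Lipschitz bound coordinate-wise gives, for every $(\vec{s},y)\in\mathcal{S}$,
\[
 |L(\vec{\theta}\,|\,\vec{x},\vec{s},y)-L(\vec{\theta}'\,|\,\vec{x},\vec{s},y)|\le V_{\vec{s}}\log\tfrac1\lambda\cdot\alpha\le V\log\tfrac1\lambda\cdot\alpha=\varepsilon,
\]
so the vector associated with $\vec{\theta}'$ is within $\ell_\infty$-distance $\varepsilon$ of $[f(\vec{s},y)]_{(\vec{s},y)\in\mathcal{S}}$. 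Hence $\{[L(\vec{\theta}'\,|\,\vec{x},\vec{s},y)]_{(\vec{s},y)\in\mathcal{S}}:\vec{\theta}'\in T\}$ is an $\ell_\infty$ $\varepsilon$-cover of $\mathcal{F(S)}$, giving $\mathcal{N}_\infty(\varepsilon,\mathcal{F(S)})\le|T|=\lceil \frac{BV\log(1/\lambda)}{\varepsilon}\rceil^d$; finally I would recall $V\le|\mathcal{V}|$ to match the statement.

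The only real obstacle is securing the $d$-free Lipschitz constant $V_{\vec{s}}\log\frac1\lambda$ in the $\ell_\infty$ metric and being careful that it is uniform over samples; a naive use of the $\ell_1$- or $\ell_2$-Lipschitz constant, or of the loose $\rho/\lambda^N$ bound noted before Lemma~\ref{lemma_l_lip}, would drag extra factors of $d$ (or of $\lambda^{-N}$) inside the ceiling and spoil the bound. Everything else — the volumetric count of grid points in $[-B,B]^d$ and the coordinate-wise triangle inequality over the $|\mathcal{S}|$ samples — is routine.
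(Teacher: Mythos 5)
Your proposal is correct and follows essentially the same route as the paper's proof: both take the $q=\infty$ case of Lemma~\ref{lemma_l_lip} (so $\sqrt[q]{d}=1$ and the Lipschitz constant $V\log\frac{1}{\lambda}$ is uniform over samples), cover $\mathcal{H}=[-B,B]^d$ by an even $\ell_\infty$ grid of $\ceil*{B/\varepsilon'}^d$ cells of side $2\varepsilon'$, and transfer this to an $\ell_\infty$ $\varepsilon$-cover of $\mathcal{F(S)}$ by setting $\varepsilon'=\varepsilon/(V\log\frac{1}{\lambda})$. The only cosmetic difference is that the paper first records the generic $\ell_q$ cover bound $\left(\tfrac{2B}{\varepsilon'}+1\right)^d$ before specializing to the tighter $q=\infty$ grid, whereas you go to the grid directly.
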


\begin{lemma}[\textbf{Rademacher Bound}]\label{lemma_r_bound}
    Let $\mathcal{R(F(S))}$ denote the Rademacher complexity of $\mathcal{F}$ w.r.t. $\mathcal{S}$ with \LL{sample} size $m=|\mathcal{S}|$, then
    \begin{equation}
        \mathcal{R(F(S))}\leqslant V\log{\frac{1}{\lambda}}\sqrt{\frac{2\log\mathcal{N}_{\infty}(\varepsilon,\mathcal{F(S)})}{m}}+\varepsilon.
    \end{equation}
\end{lemma}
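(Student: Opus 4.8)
The plan is to prove Lemma~\ref{lemma_r_bound} by a single-scale discretization of $\mathcal{F(S)}$ through its $\ell_\infty$-cover, followed by Massart's finite-class lemma. Recall that, with $(\vec{s}_i,y_i)_{i=1}^m$ enumerating $\mathcal{S}$ and $\vec{\sigma}=(\sigma_1,\dots,\sigma_m)$ i.i.d.\ Rademacher signs, $\mathcal{R(F(S))}=\frac1m\,\mathbb{E}_{\vec{\sigma}}\big[\sup_{f\in\mathcal{F}}\sum_{i=1}^m\sigma_i f(\vec{s}_i,y_i)\big]$. First I would fix an (internal) $\varepsilon$-cover $C\subseteq\mathcal{F(S)}$ with $|C|\le\mathcal{N}_\infty(\varepsilon,\mathcal{F(S)})$, guaranteed by Lemma~\ref{lemma_c_num}, and for each $f\in\mathcal{F}$ choose $g\in C$ with $\max_i|f(\vec{s}_i,y_i)-g(\vec{s}_i,y_i)|\le\varepsilon$. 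Splitting $\sum_i\sigma_i f(\vec{s}_i,y_i)=\sum_i\sigma_i g(\vec{s}_i,y_i)+\sum_i\sigma_i\big(f-g\big)(\vec{s}_i,y_i)$ and bounding the last sum deterministically by $\sum_i|f-g|(\vec{s}_i,y_i)\le m\varepsilon$, then taking the supremum over $f$ (which makes $g$ range within $C$) and the expectation over $\vec{\sigma}$, gives $\mathcal{R(F(S))}\le\frac1m\,\mathbb{E}_{\vec{\sigma}}\big[\sup_{g\in C}\sum_{i=1}^m\sigma_i g(\vec{s}_i,y_i)\big]+\varepsilon$.

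Next I would control the finite maximum with Massart's lemma~\cite{shalev2014understanding,mohri2018foundations}: for any finite set $C$ of vectors in $\mathbb{R}^m$, $\mathbb{E}_{\vec{\sigma}}\big[\sup_{g\in C}\langle\vec{\sigma},g\rangle\big]\le\big(\max_{g\in C}\norm{g}_2\big)\sqrt{2\log|C|}$. The one nontrivial input is a uniform bound on $\norm{g}_2$ over $g\in C$, i.e.\ on $|L(\vec{\theta}\,|\,\vec{x},\vec{s},y)|$ across all samples; this is precisely the boundedness half of Lemma~\ref{lemma_l_lip}. Concretely, for a positive sample $L=\log P$ with $P\in[\lambda,1-\lambda^{V_{\vec{s}}}]$ (see the discussion around Eq.~\eqref{eq_p_theta}), so $|L|\le\log\tfrac1\lambda$; for a negative sample $L=\log(1-P)$ with $1-P\in[\lambda^{V_{\vec{s}}},1-\lambda]$, so $|L|\le V_{\vec{s}}\log\tfrac1\lambda$. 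In either case $|L|\le V\log\tfrac1\lambda$ with $V=\max_{\vec{s}\in\mathcal{S}}V_{\vec{s}}$, hence every $g\in C$ has $\norm{g}_2\le\sqrt{m}\,V\log\tfrac1\lambda$. Substituting this and $|C|\le\mathcal{N}_\infty(\varepsilon,\mathcal{F(S)})$ into Massart's bound, dividing by $m$, and adding the $\varepsilon$ slack from the first paragraph yields $\mathcal{R(F(S))}\le V\log\tfrac1\lambda\sqrt{\tfrac{2\log\mathcal{N}_\infty(\varepsilon,\mathcal{F(S)})}{m}}+\varepsilon$, which is the claim.

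I expect the covering bookkeeping and the invocation of Massart's lemma to be entirely routine. The step that needs genuine care --- and the one I would flag as the main obstacle --- is ensuring the cover can be taken \emph{inside} $\mathcal{F(S)}$ (an internal/proper cover, which costs at most a harmless constant factor on $\varepsilon$) so that the coordinatewise bound $|L|\le V\log\tfrac1\lambda$ from Lemma~\ref{lemma_l_lip} applies to the cover vectors themselves, rather than only $|L|\le V\log\tfrac1\lambda+\varepsilon$, which would perturb the stated constant. A secondary point is to double-check the boundary regimes of Eq.~\eqref{eq_p_theta} (e.g.\ $V_{\vec{s}}=1$, or $P$ at the extremes of $[\lambda,1-\lambda^{V_{\vec{s}}}]$) to confirm the bound $|L|\le V\log\tfrac1\lambda$ is never exceeded.
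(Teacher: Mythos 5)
Your proposal is correct and follows essentially the same route as the paper's proof: split each $f$ against a nearest cover element, bound the residual term by $m\varepsilon$, and apply Massart's lemma to the cover with the coordinatewise bound $|L|\leqslant V\log\frac{1}{\lambda}$ from Lemma~\ref{lemma_l_lip}, so that $\norm{\vec{z}}_2\leqslant\sqrt{m}\,V\log\frac{1}{\lambda}$. The internal-cover concern you flag is already handled by the construction in Lemma~\ref{lemma_c_num}, whose cover centers are images of actual hypotheses $\vec{\theta}'\in\mathcal{H}$ and hence lie in $\mathcal{F(S)}$, so the stated constant is unaffected.
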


Let $\mathcal{L}_{\mathcal{D}}(\vec{\theta}\,|\,\vec{x})$ denote the expected  log-likelihood on a hypothesis $\vec{\theta}\in\mathcal{H}$ w.r.t. the distribution $\mathcal{D}$ over data $(\vec{s},y)$ and a fixed feature $\vec{x}$, defined as:
\begin{equation}
    \mathcal{L}_{\mathcal{D}}(\vec{\theta}\,|\,\vec{x})\stackrel{\text{def}}{=}\mathbb{E}_{\vec{s},y\sim\mathcal{D}}[L(\vec{\theta}\,|\,\vec{x},\vec{s},y)].
\end{equation}

Based on Lemmas~\ref{lemma_c_num} and~\ref{lemma_r_bound}, we have the following result.
\begin{lemma}[\textbf{Sample Complexity}]\label{lemma_s_complx}
    For every $\varepsilon,\delta\in(0,1)$ and data distribution $\mathcal{D}$, when running the MLE algorithm on $\geqslant m(\varepsilon,\delta)$ i.i.d. samples drawn from $\mathcal{D}$, where
    \[m(\varepsilon,\delta)=\mathcal{O}\left(\frac{V^2\log^2{\frac{1}{\lambda}}}{\varepsilon^2}\left(d\log\frac{B V\log\frac{1}{\lambda}}{\varepsilon} + \log{\frac{1}{\delta}}\right)\right),\]
    the algorithm returns a hypothesis $\hat{\vec{\theta}}$ (i.e., Eq.~\eqref{eq_theta_h}) that satisfies,
    w.p. at least $1-\delta$,
    \begin{equation}
        \sup\nolimits_{\vec{\theta}\in\mathcal{H}}\mathcal{L}_{\mathcal{D}}(\vec{\theta}\,|\,\vec{x})-\mathcal{L}_{\mathcal{D}}(\hat{\vec{\theta}}\,|\,\vec{x})\leqslant\varepsilon.
    \end{equation}
\end{lemma}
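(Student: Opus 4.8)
The plan is a standard agnostic-PAC uniform-convergence argument on top of the Rademacher machinery already assembled in Lemmas~\ref{lemma_c_num} and~\ref{lemma_r_bound}, with a bounded-differences step to turn the in-expectation bound into a high-probability one. First I would decouple the estimator from the optimum: letting $\vec{\theta}^\star\in\argmax_{\vec{\theta}\in\mathcal{H}}\mathcal{L}_{\mathcal{D}}(\vec{\theta}\,|\,\vec{x})$ and writing $\mathcal{L}_{\mathcal{D}}(\vec{\theta}^\star\,|\,\vec{x})-\mathcal{L}_{\mathcal{D}}(\hat{\vec{\theta}}\,|\,\vec{x})$ as the telescoping sum $\big(\mathcal{L}_{\mathcal{D}}(\vec{\theta}^\star)-\mathcal{L}_{\mathcal{S}}(\vec{\theta}^\star)\big)+\big(\mathcal{L}_{\mathcal{S}}(\vec{\theta}^\star)-\mathcal{L}_{\mathcal{S}}(\hat{\vec{\theta}})\big)+\big(\mathcal{L}_{\mathcal{S}}(\hat{\vec{\theta}})-\mathcal{L}_{\mathcal{D}}(\hat{\vec{\theta}})\big)$, the middle bracket is $\le 0$ since $\hat{\vec{\theta}}$ maximizes $\mathcal{L}_{\mathcal{S}}$ (Eq.~\eqref{eq_theta_h}), so the whole quantity is at most $2\sup_{\vec{\theta}\in\mathcal{H}}\lvert\mathcal{L}_{\mathcal{S}}(\vec{\theta}\,|\,\vec{x})-\mathcal{L}_{\mathcal{D}}(\vec{\theta}\,|\,\vec{x})\rvert$. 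It therefore suffices to prove this supremum is at most $\varepsilon/2$ with probability at least $1-\delta$.

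Second, I would bound that supremum. By Lemma~\ref{lemma_l_lip} the per-sample log-likelihood is bounded; tracking the constant one checks that $L\in[-V\log\frac1\lambda,\,0]$ (the binding case is a negative sample, where $\log(1-P)\ge\log\lambda^{V_{\vec{s}}}\ge-V\log\frac1\lambda$), so altering a single one of the $m$ i.i.d.\ samples perturbs $\mathcal{L}_{\mathcal{S}}$, hence $\sup_{\vec{\theta}}\lvert\mathcal{L}_{\mathcal{S}}-\mathcal{L}_{\mathcal{D}}\rvert$, by at most $V\log\frac1\lambda/m$. McDiarmid's inequality then gives, with probability $\ge 1-\delta$, $\sup_{\vec{\theta}}\lvert\mathcal{L}_{\mathcal{S}}-\mathcal{L}_{\mathcal{D}}\rvert\le\mathbb{E}\big[\sup_{\vec{\theta}}\lvert\mathcal{L}_{\mathcal{S}}-\mathcal{L}_{\mathcal{D}}\rvert\big]+V\log\tfrac1\lambda\sqrt{\tfrac{2\log(1/\delta)}{m}}$, and symmetrization bounds the expectation by $2\,\mathcal{R(F(S))}$. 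Substituting Lemma~\ref{lemma_r_bound}, and then the covering bound of Lemma~\ref{lemma_c_num} for any internal radius $\varepsilon'>0$, yields an estimate of the form $O\!\big(V\log\tfrac1\lambda\sqrt{\tfrac{d\log(BV\log(1/\lambda)/\varepsilon')}{m}}\big)+O(\varepsilon')+O\!\big(V\log\tfrac1\lambda\sqrt{\tfrac{\log(1/\delta)}{m}}\big)$.

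Third, I would choose $\varepsilon'$ to be a small constant multiple of $\varepsilon$ (so the $O(\varepsilon')$ term is $\le\varepsilon/6$, say, and the logarithm collapses to $\log\frac{BV\log(1/\lambda)}{\varepsilon}$) and impose that each of the two $1/\sqrt m$ terms be $\le\varepsilon/6$. Solving for $m$: the covering term forces $m=\Omega\!\big(\frac{V^2\log^2(1/\lambda)}{\varepsilon^2}\,d\log\frac{BV\log(1/\lambda)}{\varepsilon}\big)$ and the confidence term forces $m=\Omega\!\big(\frac{V^2\log^2(1/\lambda)}{\varepsilon^2}\log\frac1\delta\big)$; taking $m$ to be their sum recovers exactly the stated $m(\varepsilon,\delta)$, and then $2\sup_{\vec{\theta}}\lvert\mathcal{L}_{\mathcal{S}}-\mathcal{L}_{\mathcal{D}}\rvert\le\varepsilon$, which proves the claim.

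As for the hard part: the skeleton is routine once Lemmas~\ref{lemma_l_lip}--\ref{lemma_r_bound} are in place, so the real work is constant bookkeeping — getting the boundedness constant $V\log\frac1\lambda$ right, since it enters $m$ squared, and choosing the covering radius $\varepsilon'$ so it is absorbed into $\varepsilon$ without inflating the logarithmic factor. It is also worth recalling why this route is forced: by Lemma~\ref{lemma_l_nonc} the empirical objective $\mathcal{L}_{\mathcal{S}}$ is non-concave in general, ruling out a direct optimization-theoretic sample-complexity bound. Finally, both symmetrization and McDiarmid use that the $m$ samples are i.i.d.\ from $\mathcal{D}$; this is assumed in the statement, although samples extracted from a single cascade are in practice only approximately independent.
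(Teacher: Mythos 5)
Your proposal is correct and follows essentially the same route as the paper: the paper invokes the standard agnostic Rademacher-based generalization bound from~\cite{shalev2014understanding} as a black box (with loss range $V\log\frac{1}{\lambda}$ from Lemma~\ref{lemma_l_lip}), plugs in Lemmas~\ref{lemma_c_num} and~\ref{lemma_r_bound}, sets $\varepsilon=3\varepsilon'$, and solves for $m$, exactly as you do. The only difference is that you unpack that cited black box into its standard proof (telescoping decomposition, the ERM/MLE optimality of $\hat{\vec{\theta}}$, symmetrization, and McDiarmid), which the paper leaves implicit.
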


    \textit{It immediately follows from  Lemma~\ref{lemma_s_complx} that the hypothesis for the cascading failure model is PAC learnable. }

\section{Overall Solution}%
\label{sec-solution}

Section~\ref{sec-learnability} establishes that the expected log-likelihood is in general non-concave. The standard solution strategies for this type of problem include stochastic gradient descent (SGD) or a quasi-Newton approach such as L-BFGS-B~\cite{byrd1995limited}. In our work, we adopt the latter, as it  offers \LL{faster convergence} and can be \LL{more} easily parallelized.

From a computational standpoint, the availability of an analytic gradient enables us to accurately compute values for either the entire dataset, or for large batches thereof in parallel. This computation is facilitated by a GPU-accelerated workflow, designed to leverage the structural properties of the gradient and of the dataset. Specifically, observe that the gradient of the log-likelihood functions mainly consists of terms such as  $P(\vec{\theta}\,|\,\vec{x},\vec{s})$ and $\hat{\vec{x}}_{\vec{s}}(\vec{\theta})$.
With $\vec{\theta}$ fixed in each iteration, the terms ${[e^{\vec{\theta}^\mathsf{T}\vec{x}_{uv}}]}_{uv\in\mathcal{V}^2}$ can be pre-computed. The subsequent computation of the terms $P(\vec{\theta}\,|\,\vec{x},\vec{s})$ and $\hat{\vec{x}}_{\vec{s}}(\vec{\theta})$ involves simple operations such as summation or multiplication with $u\in\mathcal{V}_{\vec{s}}$.
Consequently, the gradient $\nabla_{\vec{\theta}} L(\vec{\theta}\,|\,\vec{x},\vec{s},y)$ for individuals, as well as $\nabla_{\vec{\theta}}\mathcal{L}_{\mathcal{S}}(\vec{\theta}\,|\,\vec{x})$ for entire sample set, can be computed in parallel.

Moreover, as discussed in Sections~\ref{sec-system-model} and~\ref{sec-learnability}, the interactions within our model form a complete graph. For each cascade $k\in\mathcal{K}$, a given sample $((\mathcal{V}_{\vec{s}},v),1)\in\mathcal{S}$ implies the existence of samples $\{((\mathcal{V}_{\vec{s}},v'),0)\,|\,v'\in\mathcal{V}\backslash \mathcal{V}_{\vec{s}} \cup \{v\}\}$, for all $v'$ s.t. $((\mathcal{V}_{\vec{s}},v'),1)\notin \mathcal{S}$.  
Therefore, $|\mathcal{S}|\geqslant|\mathcal{K}||\mathcal{V}|$.
When accounting for the $|\mathcal{V}|-n$ contingencies cascading (see Section~\ref{sec-system-model}, $n\geqslant 1$), the size of $S$ is $\mathcal{O}(|\mathcal{V}|^{n+1})$.
Despite the relative sparsity of samples, they can be efficiently encoded and aggregated using the sample key $(\mathcal{V}_{\vec{s}},v)$ into different matrices for optimized storage and computation.

Algorithm~\ref{lst_lw} summarizes the overall workflow for learning a Hyperparametric diffusion model for predicting Cascading Failures (denoted \algoname).
\begin{algorithm}[H]
    \small
    \caption{Learning workflow of \algoname}
    \label{lst_lw}
    \begin{algorithmic}[1]
        \Statex \textbf{Input:} cascading failures, power grid topology and settings.
        \Statex \textbf{Output:} predicted cascading failures.
        \State Extract physical/topological features $\vec{x}$ from power grid
        \State Encode samples $\mathcal{S}$ and construct MLE optimization (see Eq.~\eqref{eq_theta_h})
        \State Run L-BFGS-B for MLE to learn the hyperparameter $\vec{\theta}$
        \State Compute diffusion probability matrix based on Eq.~\eqref{eq_p}
        \State Run Monte Carlo simulations to predict cascading failures
    \end{algorithmic}
\end{algorithm}

\section{Experiments}%
\label{sec-experiments}

Our approach was implemented in Python using GPUs, with the OpenCL standard\footnote{Source code is available at \url{https://github.com/unkux/Learn_CF}.}. The experiments were mainly performed on a laptop Intel(R) Core(TM) i7-8550U CPU @ 1.80GHz with 16 GB memory (denoted L) and a server with AMD EPYC 7713 processor and 128 GB memory (denoted S).

\noindentparagraph{\bf Datasets.} 
Acquiring real-world cascading failure datasets in the power systems community is challenging, due to privacy concerns and confidentiality. Available data, such as the one used in \cite{zhou2018can}, often lacks crucial physical settings like power supply / demand, while the topology can only be partially recovered  from the cascading traces. This limited information hinders the application of parametric models that depend on it for reasoning.

Therefore, for our analysis, we leverage the IEEE300 standard power grid (i.e, a 300Bus having $516$ transmission lines), which offers comprehensive details like power supply / demand, line impedance, capacities, and topology. This information facilitates the extraction of the relevant physical features used in our learning process. Specifically, we extract $22$ basic features based on the pre-cascading balanced power flow and grid topology.

To generate CF events, we leverage the widely used DCSepSim~\cite{eppstein2012random} simulator. We simulate power flow and cascading processes by strategically cutting lines, triggering power imbalances, and rerouting flow. If overloaded lines exceed their capacity, outages occur, cascading through the system until equilibrium is reached. We generate 100k Monte Carlo traces per power grid instance, by having each transmission line fail with a probability of $1/516$. 

A similar experiment on a dataset obtained from the larger \emph{Polish grid} is described in the appendix.

\eat{
For obtaining the cascading failure event data, we adapt a well-known open source simulator DCSepSim~\cite{eppstein2012random} to simulate the power flow and cascading process. The basis idea is that: each time breaking a selected initial lines, which will lead to the imbalance of the system, then, the power flow needs to be re-dispatched in order to host the missing flow and balance the power system. During this process, if the power flow re-dispatched exceeds the limit/capacity of others lines, then, outages is happening, leading to the failure cascading to other lines. This process will propagate to the rest of the system until reaching a balanced state. In experiments, for each instance of power grid, we run 100k Monte Carlo cascading failure traces.
}

\noindentparagraph{\bf Validation and testing approach.}
Due to the stochastic nature of our model, the traditional cross-validation and testing process can not be directly applied. Instead, we propose a different validation and testing process, relying on the IEEE300 standard power grid and the DCSepSim simulations. Given an initial power grid configuration, we apply perturbations on the power demand settings or on the physical topology, to obtain a set of different power grid instances. The model will be trained on only one of the power grid instances, and tested on the others.
 In some of the experiments, the initial power grid instance is close to saturation with respect to the power supply / demand -- and thus more inclined to exhibit cascading failures -- in order to observe the performance of the tested methods under extreme conditions.
Here we define $2$ subscript notations, in the context of a each figure: ${ii}$ for a model learned from the instance  $i$ and tested on that same instance, 
and ${i1}$ for a model learned on instance $1$ in that figure and tested on instance $i$.  The top 5\% most critical transmission lines w.r.t. cascading failure size were considered for the evaluation.

\noindentparagraph{\bf Evaluation metrics.}
Our evaluation focuses on two error types related to predicting cascading failures. The first metric, \textbf{Distribution Error} (DE), compares the expected number of failures per line across the actual DCSepSim simulations  and the model-generated cascades. For each line in the cascading failure dataset, we count the failures. Then, using the trained model and Monte Carlo diffusions, we generate new cascades and repeat the count on those simulated lines. Finally, we compare the distributions of failure counts using either mean absolute error or relative error. 

The second metric, \textbf{Probability Error} (PE), is based on the probability matrix, and can only be applied for our hyperparametric model. We stress that in the cascading failures data the interactions between lines are only partially observed, i.e., given a cascading sample $(\mathcal{V}_{\vec{s}},\,v)$, there is no information for which node $u\in\mathcal{V}_{\vec{s}}$ triggered the activation of node $v$. This means that simple frequency-based estimation would lead to biased results due to over-counting, hence ground truth influence probabilities are unavailable. To mitigate this issue, we define $3$ types of  mean absolute / relative errors (using the same subscript notation): $P_{ii}-P_{i1}$ for the prediction error, $P_{ii}-P_{11}$ for the probabilities' changes from instance $i$ to instance $1$, and $P_{i1}-P_{11}$ for the probabilities' changes  when applying the model learned on instance $1$ in the figure to instance $i$. 

\noindentparagraph{\bf Baseline methods.}
\eat{
The main goal of this work is to learn a hyperparametric diffusion model for cascading failures, that can be not only used for the power grid with a single fixed configuration, but also can be generalized for predicting the cascading failures for other power grids with different demands and topologies without the needs of cascading dataset or frequent retraining process, both of which are too expensive. To our knowledge, there is no existing works towards the latter direction, most of them focus on learning and predicting for the power grid with single setting.
}
Recall that existing non-parametric approaches, like ~\cite{hines2016cascading,wu2021predicting,ghasemi2022higher}, struggle with even slight changes in a power grid (e.g., power demand or topology). Trained on data from a specific grid instance, they lack generalizability and require retraining for new instances. Due to this reliance on specific grid data, our method offers an advantage under certain conditions. If the topology remains fixed while other factors like power demand change, we can directly compare our approach with the existing ones, by evaluating their pre-trained models on unseen instances. However, if the topology itself undergoes changes, adapting these models for comparison on new instances becomes unfeasible. 

In what follows, for a fixed topology incurring power demand changes, we compare the performance of our approach with the state-of-the-art Branching Process (in short, \textbf{BP}) method of~\cite{hines2016cascading}. 

A similar experiment involving another state-of-the-art method \cite{wu2021predicting} is described in the appendix.

\subsection{Distribution of Cascades}
We first compare the probability distribution and occurrence rate of CFs in the original dataset (standard IEEE300 with power demand factor $1$).

Fig.~\ref{fig_cas_pmf} shows the probability distribution of cascade sizes (number of failures) for (i) the cascades original CF data, (ii) the ones simulated by our hyperparametric diffusion model, (iii) the ones simulated by the BP approach. Note that the original data exhibits an abnormal distribution, beginning with many small size cascades, reaching a trough around size $10$, and following with several small peaks around size $20$. Regarding the simulated data, both models align well with the original data for small cascades, while for the moderate and larger cascade, our model shows better prediction.

Fig.~\ref{fig_occ_rate_1} compares the failure occurrence rate of lines (excluding initial failures) in the original cascade dataset (empirical) and in our hyperparametric diffusion model. Our prediction aligns well with the dataset in the range of occurrence rate larger than $10^{-3}$, and diverges moderately for the lower range, where are situated the small cascading failures, which are more difficult to be predicted.

\begin{figure}
    \captionsetup{skip=5pt}
    \captionsetup[subfigure]{skip=5pt}
    \centering
    \begin{subfigure}[t]{0.49\linewidth}\centering
        \includegraphics[width=\linewidth]{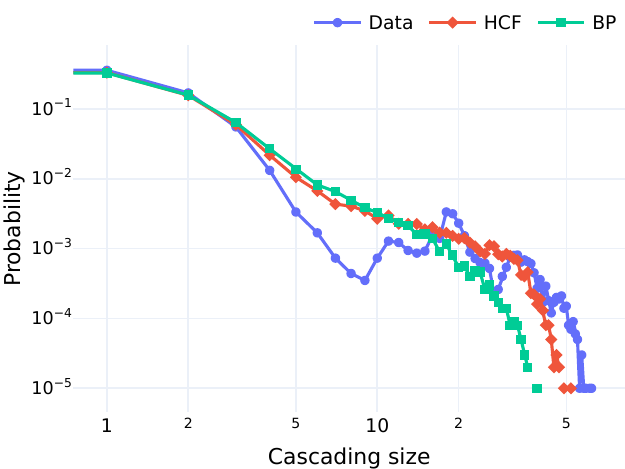}
        \caption{Probability distribution.}
        \label{fig_cas_pmf}
    \end{subfigure}
    \hfill
    \begin{subfigure}[t]{0.49\linewidth}\centering
        \includegraphics[width=\linewidth]{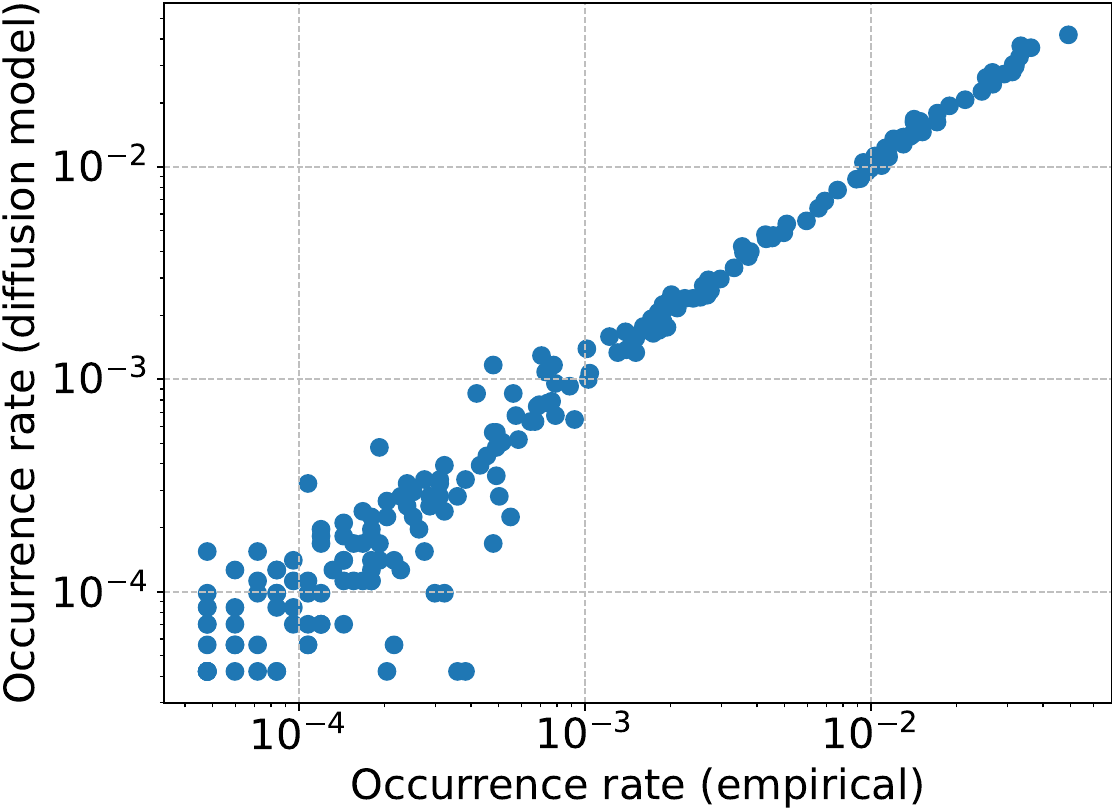}
        \caption{Failure occurrence rate.}
        \label{fig_occ_rate_1}
    \end{subfigure}
    \caption{Probability distribution of cascades and failure occurrence rate (excluding initial failures) for IEEE300.}
    \vspace{-3mm}
\end{figure}

\begin{figure*}[!t]
    \captionsetup{skip=5pt}
    \captionsetup[subfigure]{skip=5pt}
    \centering
    \begin{subfigure}[t]{0.24\linewidth}\centering
        \includegraphics[width=\textwidth]{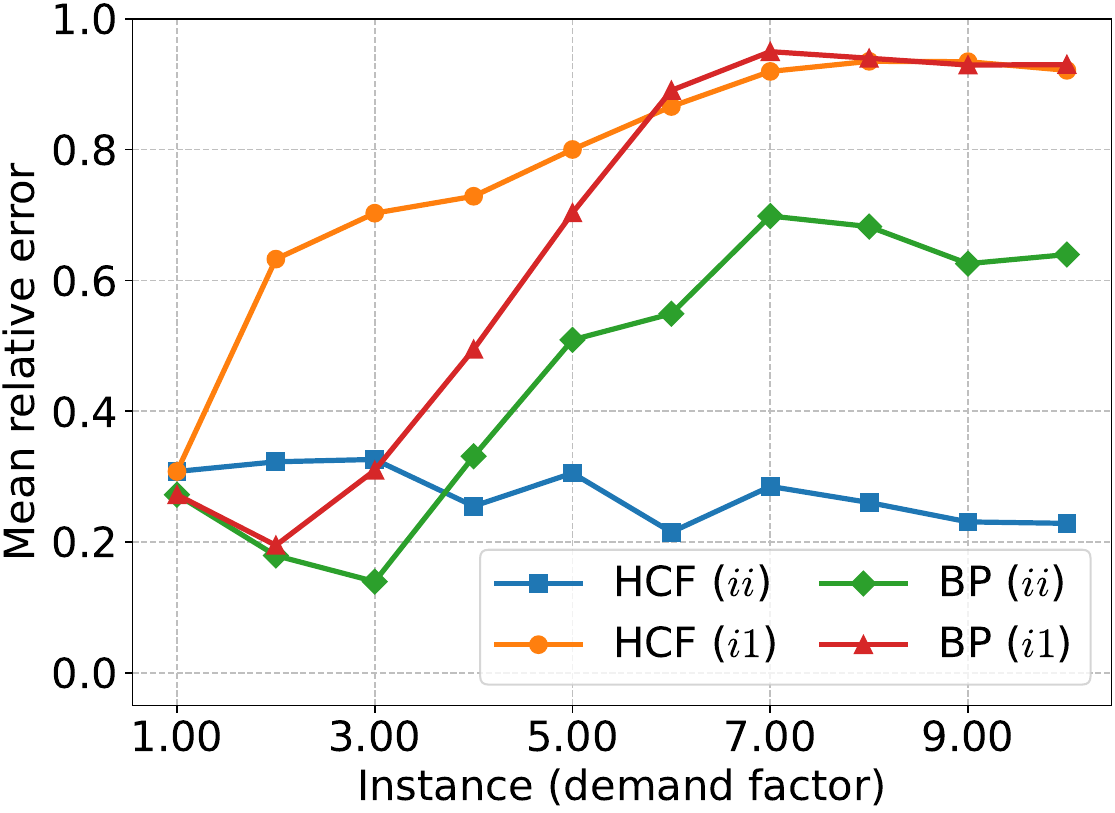}
        \caption{Failure distribution error}
        \label{fig_vcnt_err_r_dc_1.0}
    \end{subfigure}
    \hfill
    \begin{subfigure}[t]{0.24\linewidth}\centering
        \includegraphics[width=\textwidth]{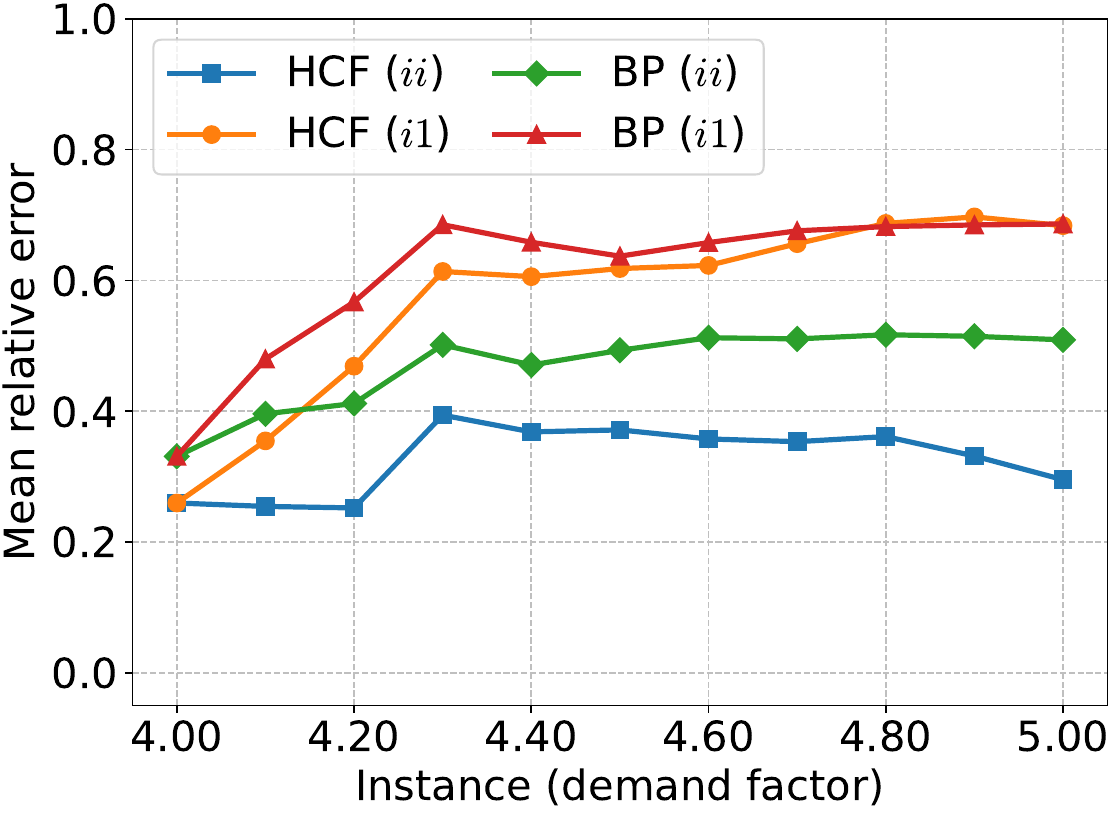}
        \caption{Failure distribution error}
        \label{fig_vcnt_err_r_dc_4.0}
    \end{subfigure}
    \hfill
    \begin{subfigure}[t]{0.24\linewidth}\centering
        \includegraphics[width=\textwidth]{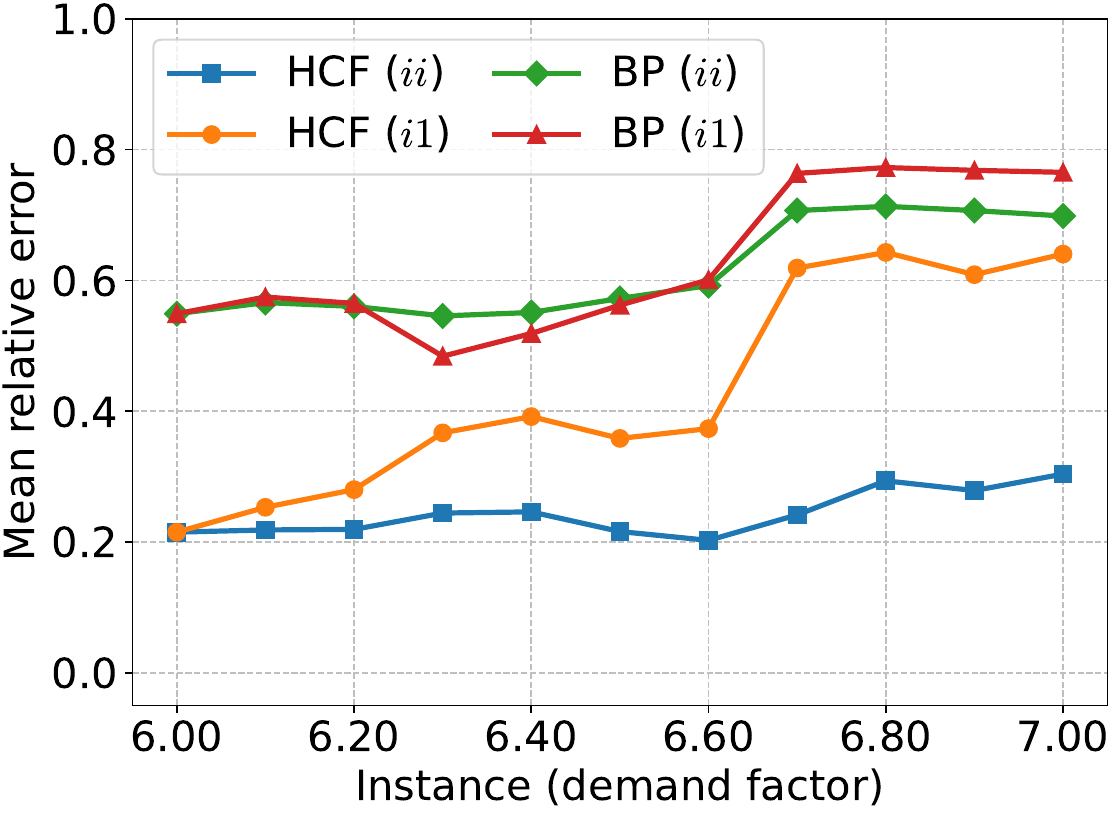}
        \caption{Failure distribution error}
        \label{fig_vcnt_err_r_dc_6.0}
    \end{subfigure}
    \hfill
    \begin{subfigure}[t]{0.24\linewidth}\centering
        \includegraphics[width=\textwidth]{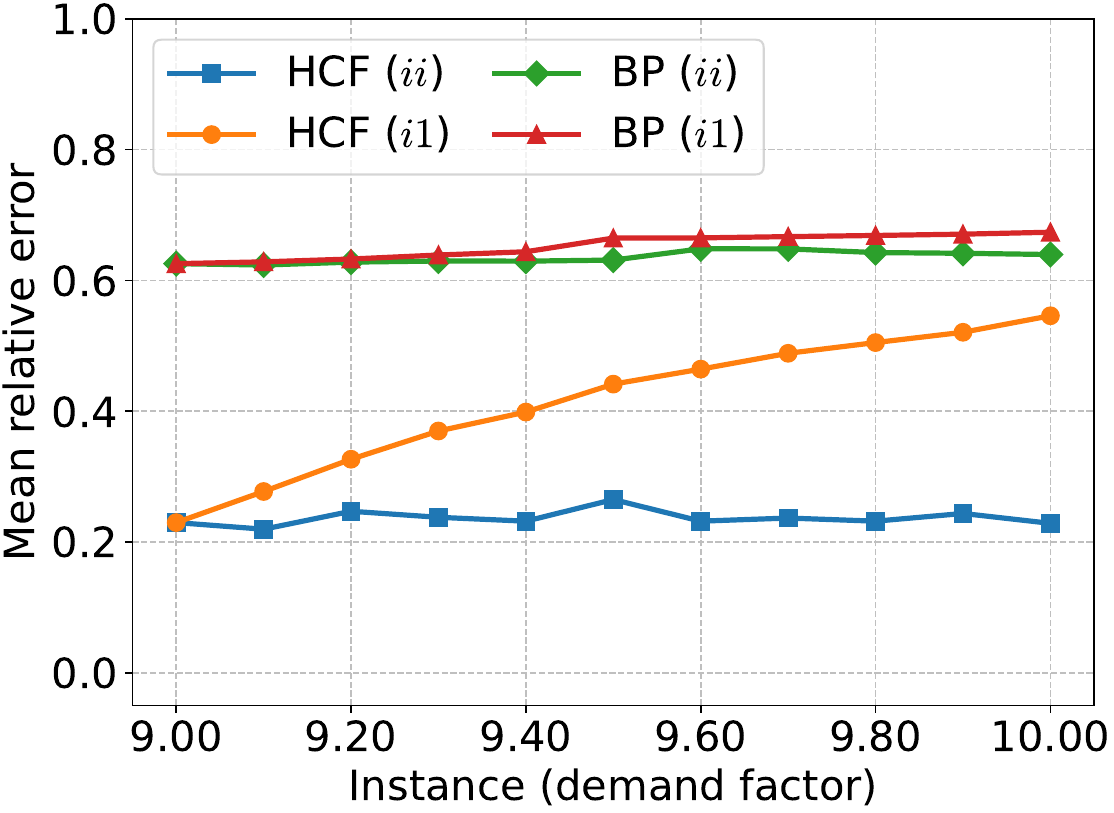}
        \caption{Failure distribution error}
        \label{fig_vcnt_err_r_dc_9.0}
    \end{subfigure}\\[1pt]
    \begin{subfigure}[t]{0.24\linewidth}\centering
        \includegraphics[width=\textwidth]{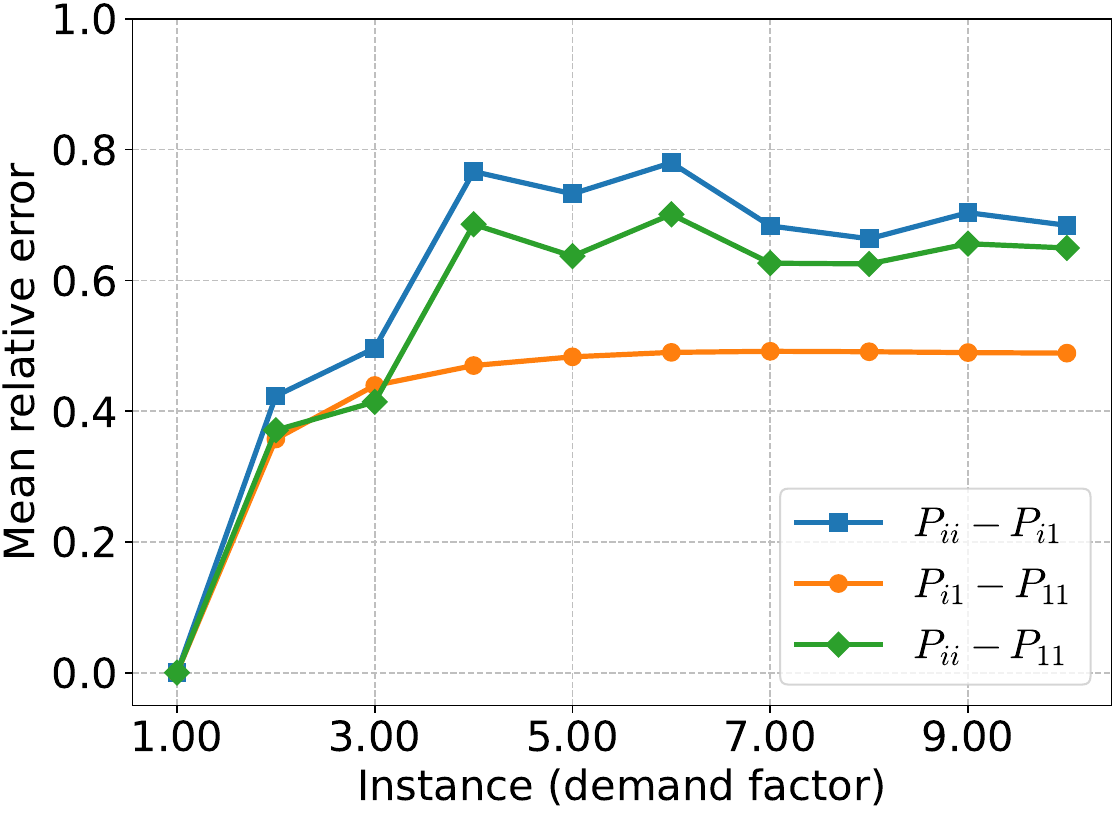}
        \caption{Probability error}
        \label{fig_pmat_err_r_dc_1.0}
    \end{subfigure}
    \hfill
    \begin{subfigure}[t]{0.24\linewidth}\centering
        \includegraphics[width=\textwidth]{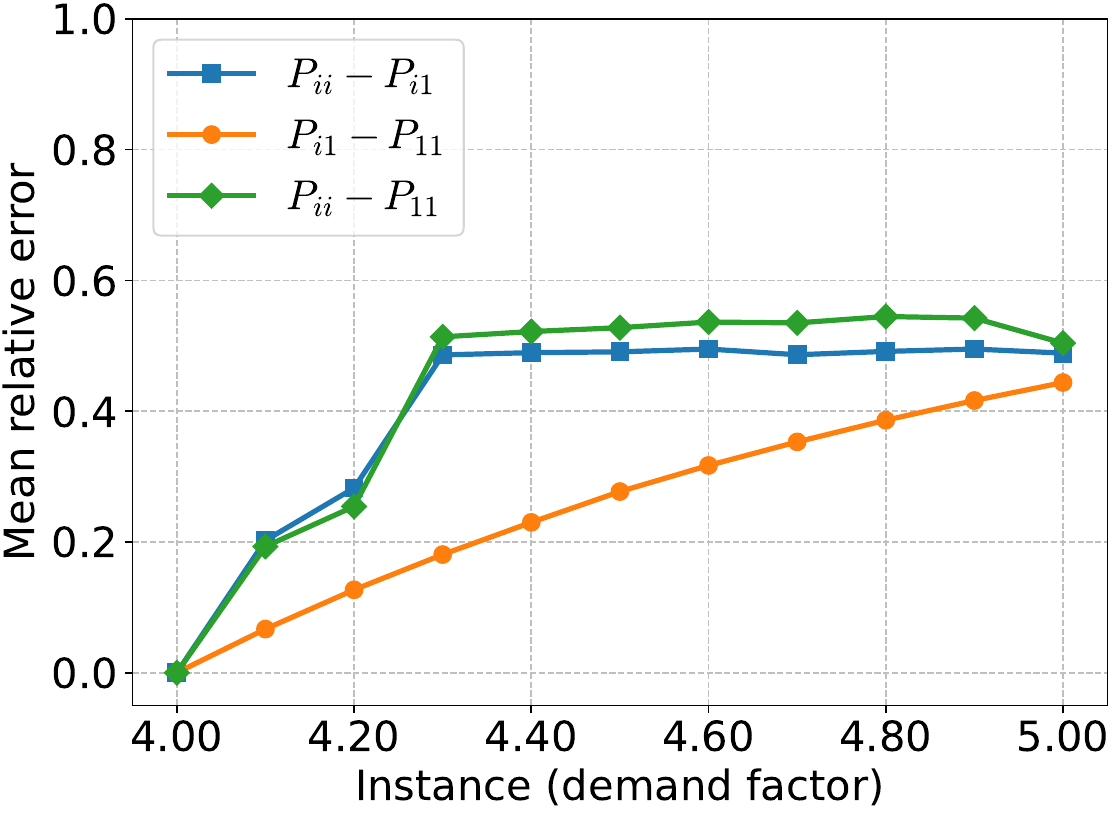}
        \caption{Probability error}
        \label{fig_pmat_err_r_dc_4.0}
    \end{subfigure}
    \hfill
    \begin{subfigure}[t]{0.24\linewidth}\centering
        \includegraphics[width=\textwidth]{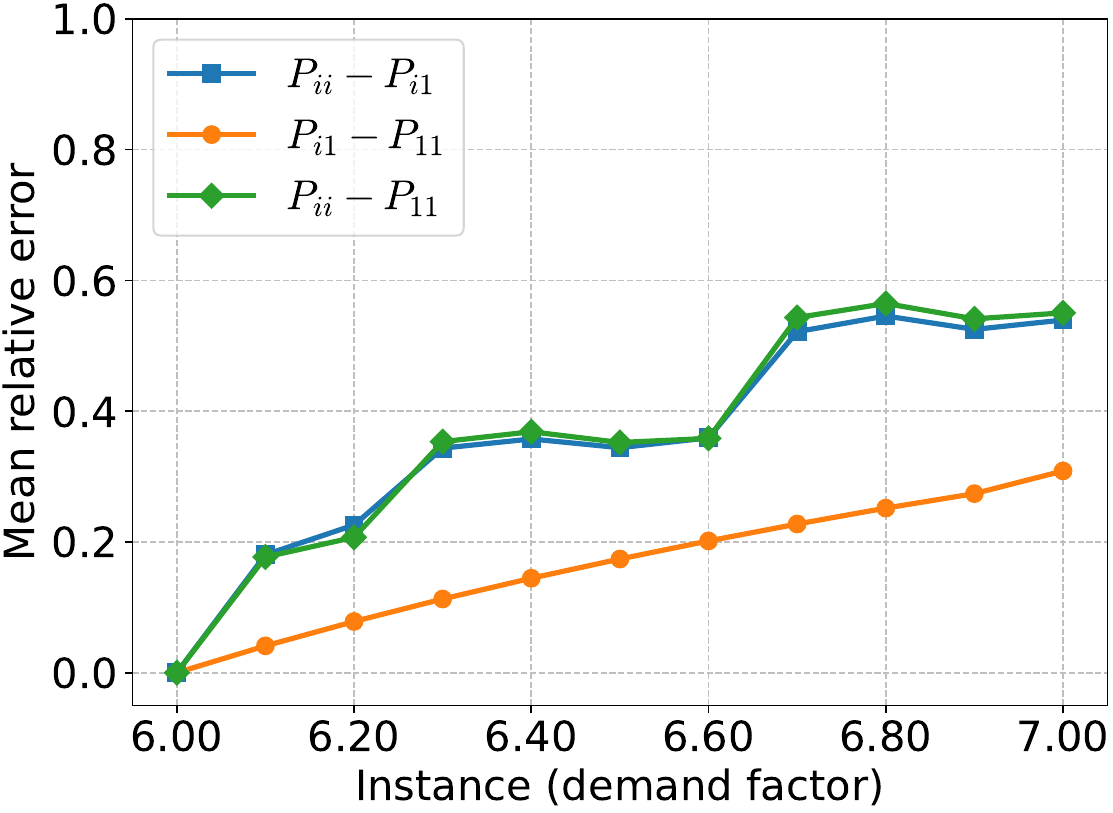}
        \caption{Probability error}
        \label{fig_pmat_err_r_dc_6.0}
    \end{subfigure}
    \hfill
    \begin{subfigure}[t]{0.24\linewidth}\centering
        \includegraphics[width=\textwidth]{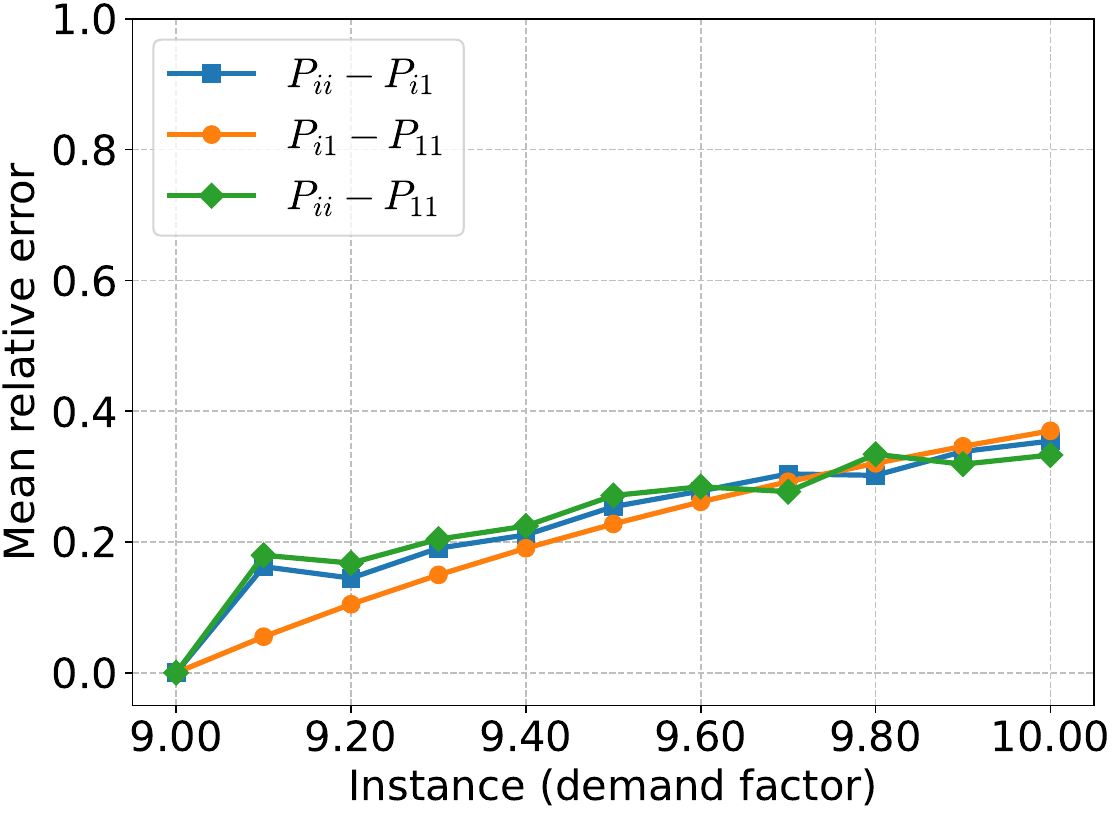}
        \caption{Probability error}
        \label{fig_pmat_err_r_dc_9.0}
    \end{subfigure}
    \caption{Mean relative error for line failure distribution and diffusion probability matrix (IEEE300 dataset).}
    \label{fig_err_r_dc_1.0}
\end{figure*}

\begin{figure*}[!t]
    \captionsetup{skip=5pt}
    \captionsetup[subfigure]{skip=5pt}
    \centering
    \begin{subfigure}[t]{0.19\linewidth}\centering
        \includegraphics[width=\textwidth]{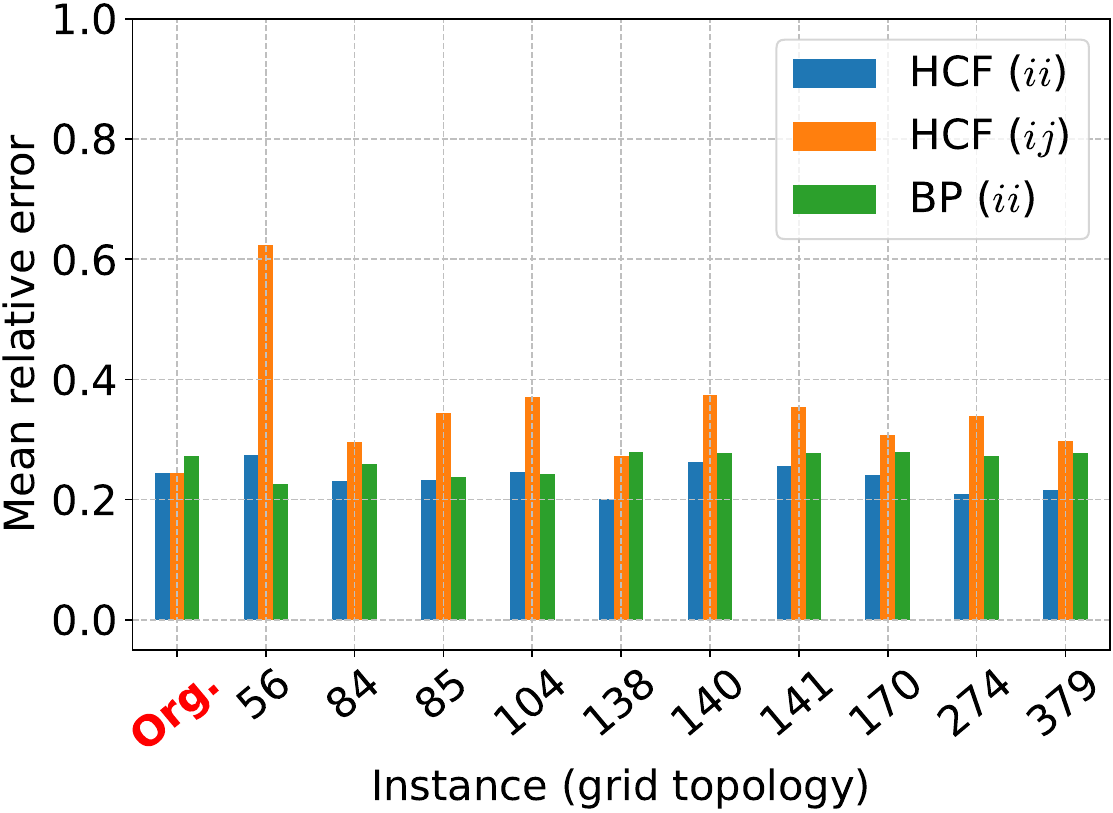}
        \caption{Original grid}
        \label{fig_vcnt_err_tc_0}
    \end{subfigure}
    \hfill
    \begin{subfigure}[t]{0.19\linewidth}\centering
        \includegraphics[width=\textwidth]{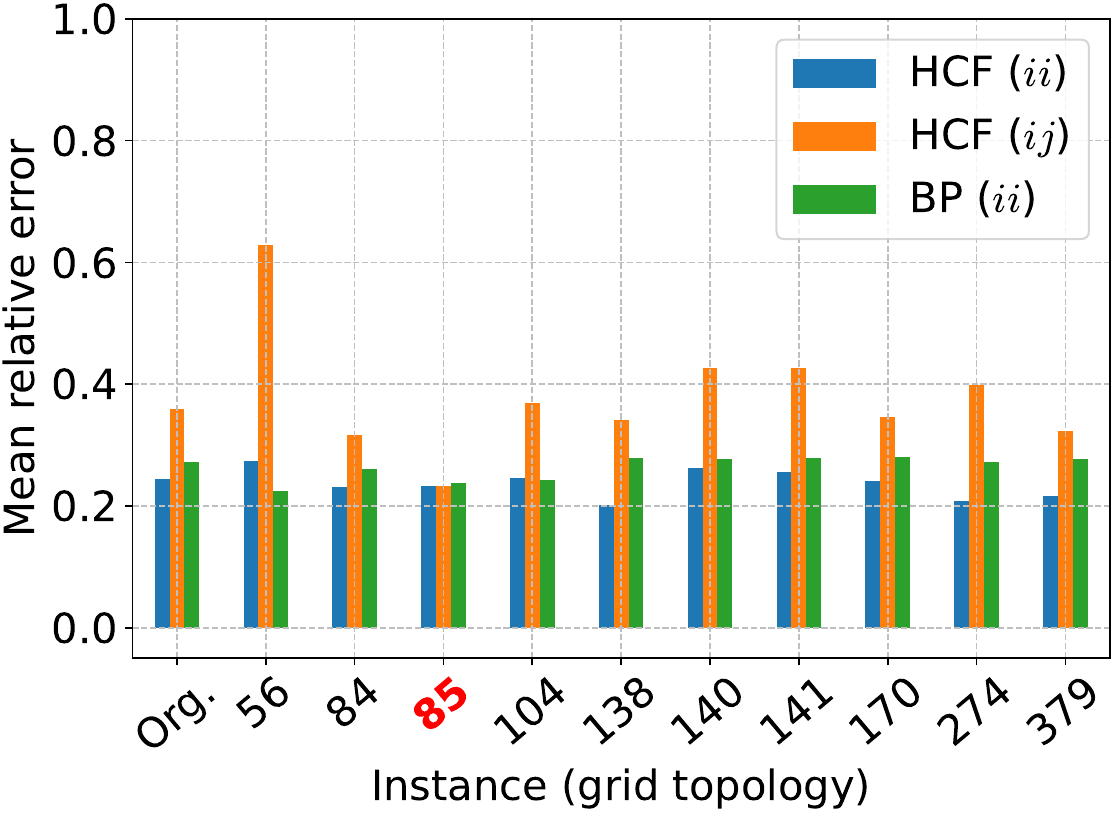}
        \caption{Grid (L-85 removed)}
        \label{fig_vcnt_err_tc_85}
    \end{subfigure}
    \hfill
    \begin{subfigure}[t]{0.19\linewidth}\centering
        \includegraphics[width=\textwidth]{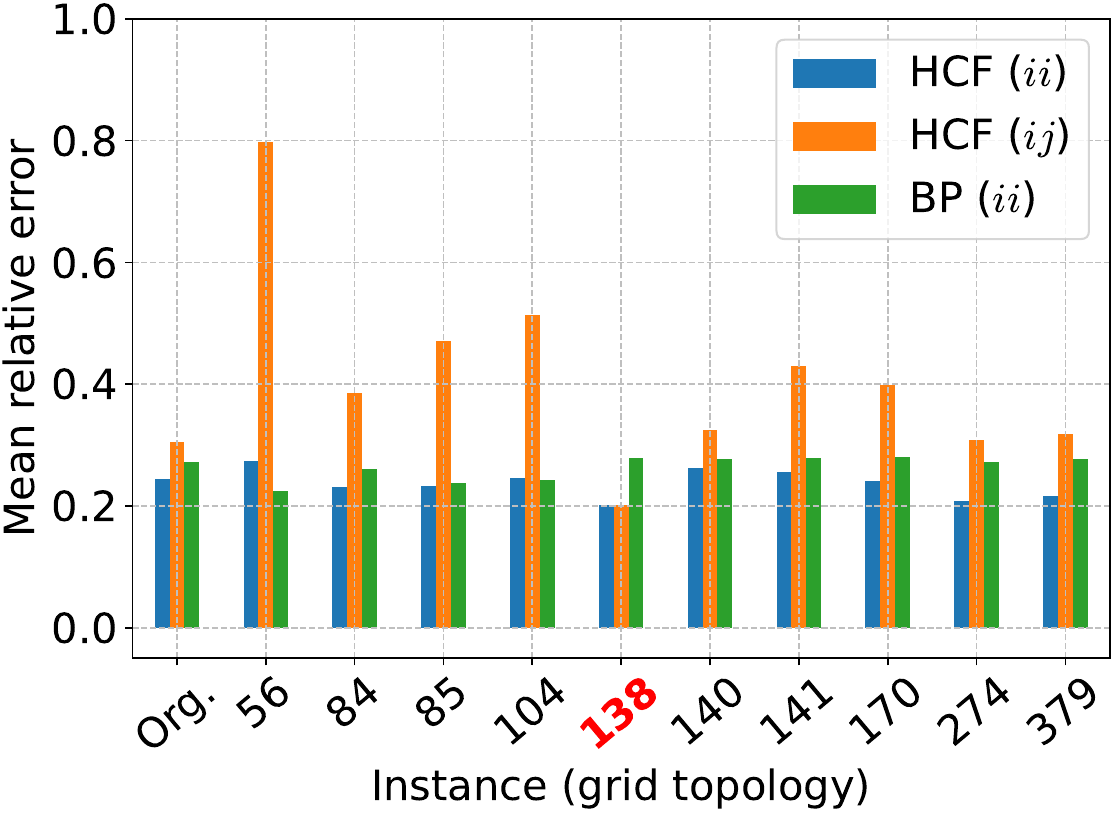}
        \caption{Grid (L-138 removed)}
        \label{fig_vcnt_err_tc_138}
    \end{subfigure}
    \hfill
    \begin{subfigure}[t]{0.19\linewidth}\centering
        \includegraphics[width=\textwidth]{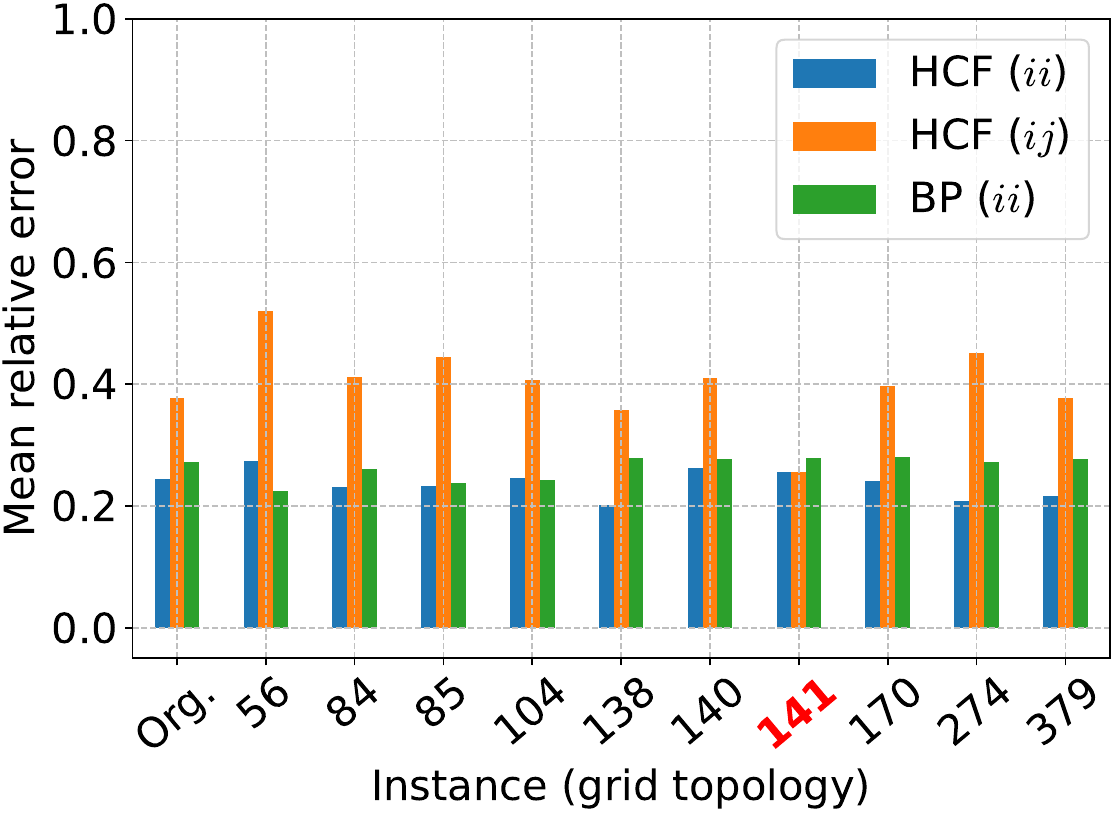}
        \caption{Grid (L-141 removed)}
        \label{fig_vcnt_err_tc_141}
    \end{subfigure}
    \hfill
    \begin{subfigure}[t]{0.19\linewidth}\centering
        \includegraphics[width=\textwidth]{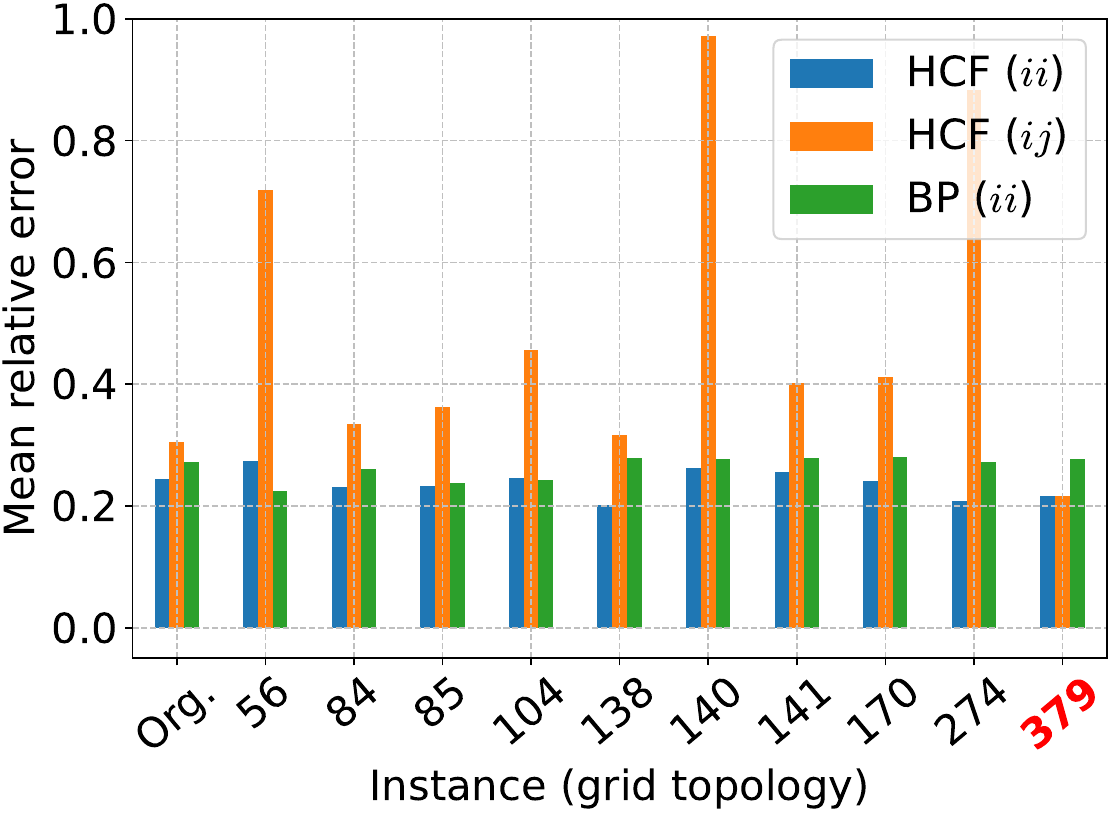}
        \caption{Grid (L-379 removed)}
        \label{fig_vcnt_err_tc_379}
    \end{subfigure} \\[1pt]
    \begin{subfigure}[t]{0.19\linewidth}\centering
        \includegraphics[width=\textwidth]{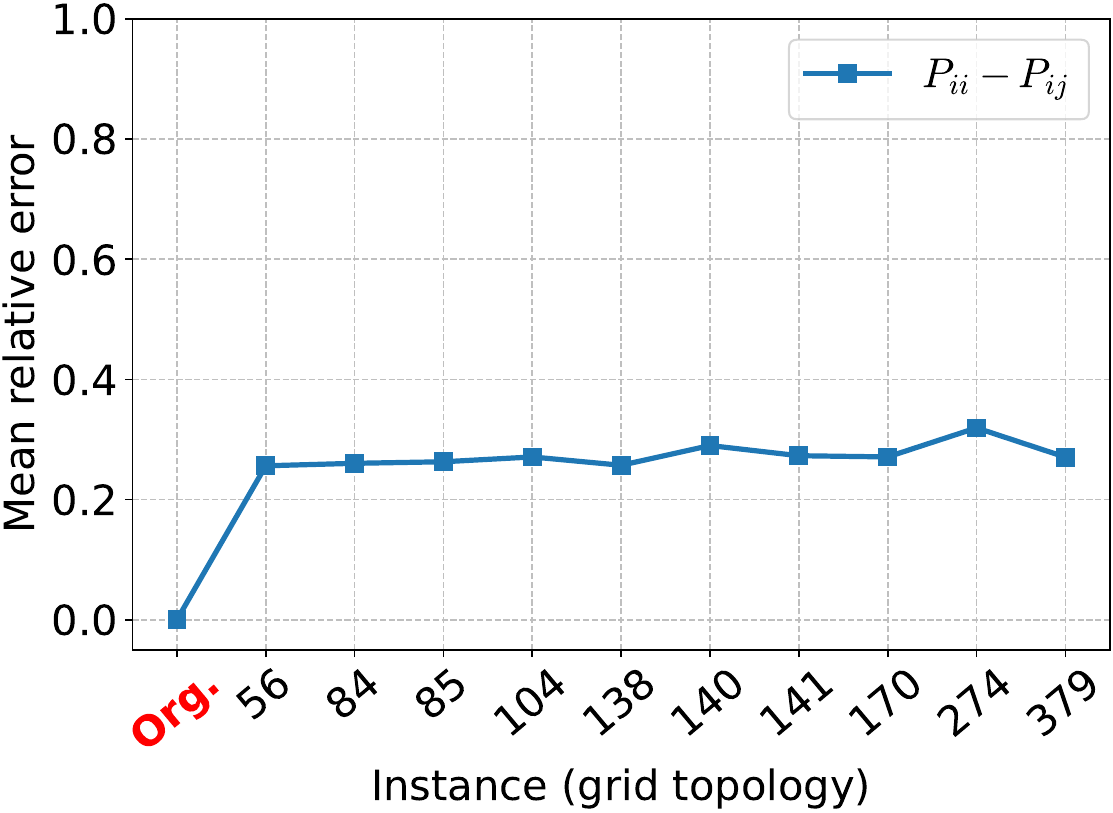}
        \caption{Original grid}
        \label{fig_pmat_err_tc_0}
    \end{subfigure}
    \hfill
    \begin{subfigure}[t]{0.19\linewidth}\centering
        \includegraphics[width=\textwidth]{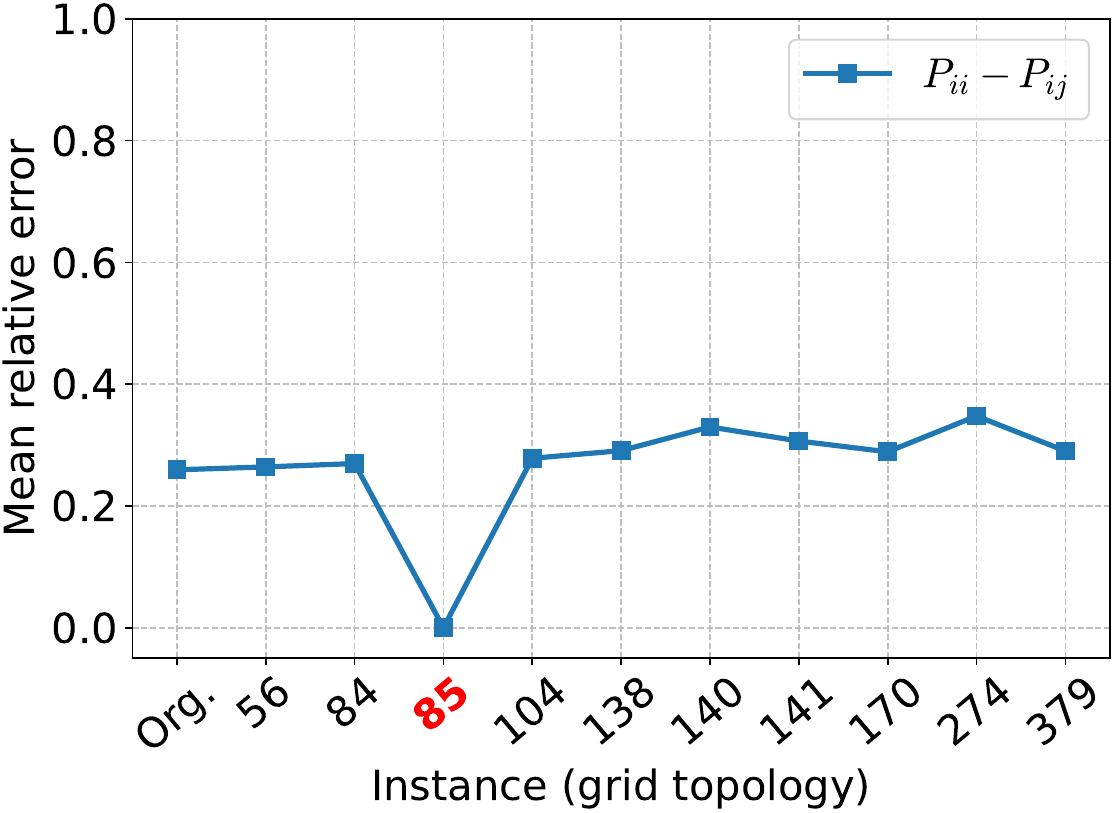}
        \caption{Grid (L-85 removed)}
        \label{fig_pmat_err_tc_85}
    \end{subfigure}
    \hfill
    \begin{subfigure}[t]{0.19\linewidth}\centering
        \includegraphics[width=\textwidth]{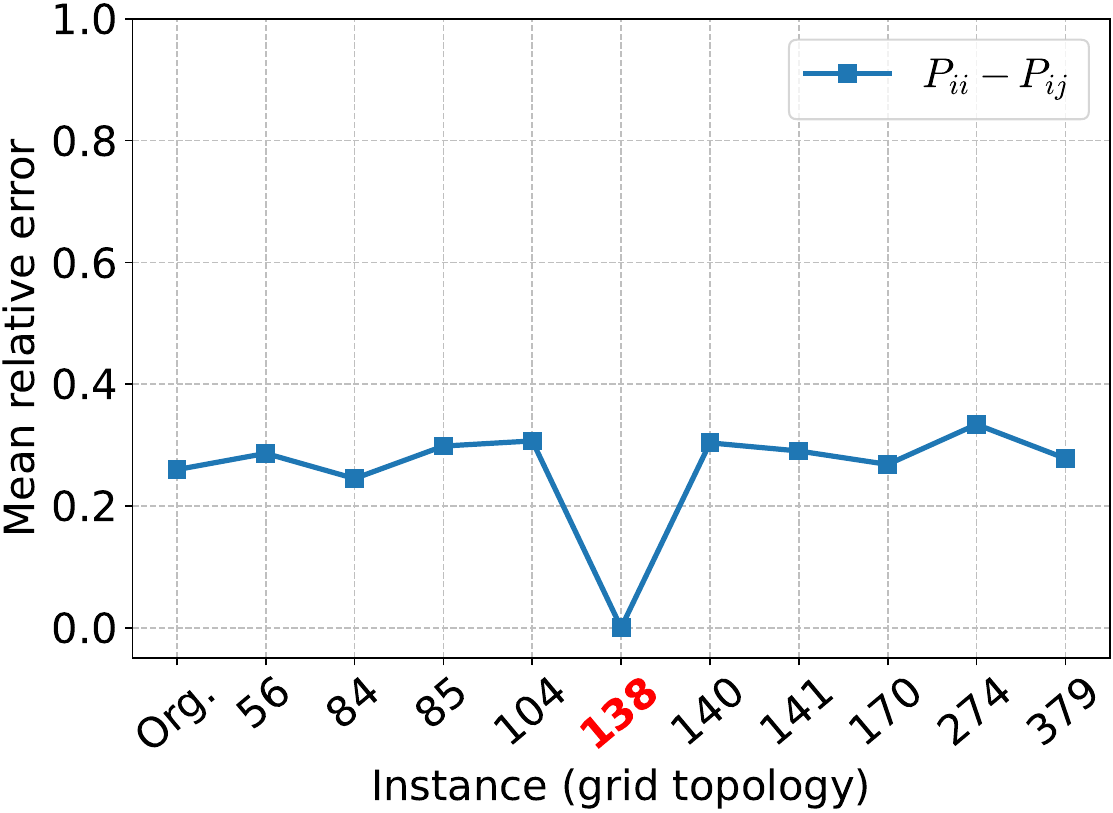}
        \caption{Grid (L-138 removed)}
        \label{fig_pmat_err_tc_138}
    \end{subfigure}
    \hfill
    \begin{subfigure}[t]{0.19\linewidth}\centering
        \includegraphics[width=\textwidth]{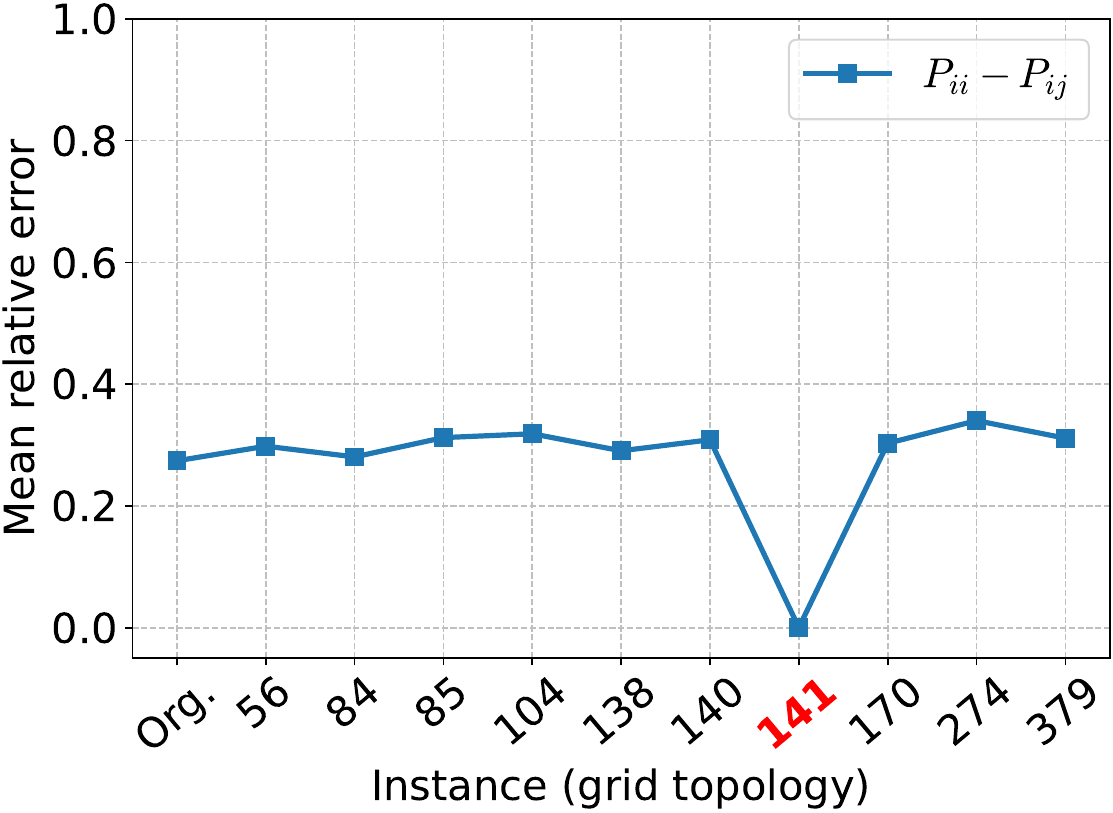}
        \caption{Grid (L-141 removed)}
        \label{fig_pmat_err_tc_141}
    \end{subfigure}
    \hfill
    \begin{subfigure}[t]{0.19\linewidth}\centering
        \includegraphics[width=\textwidth]{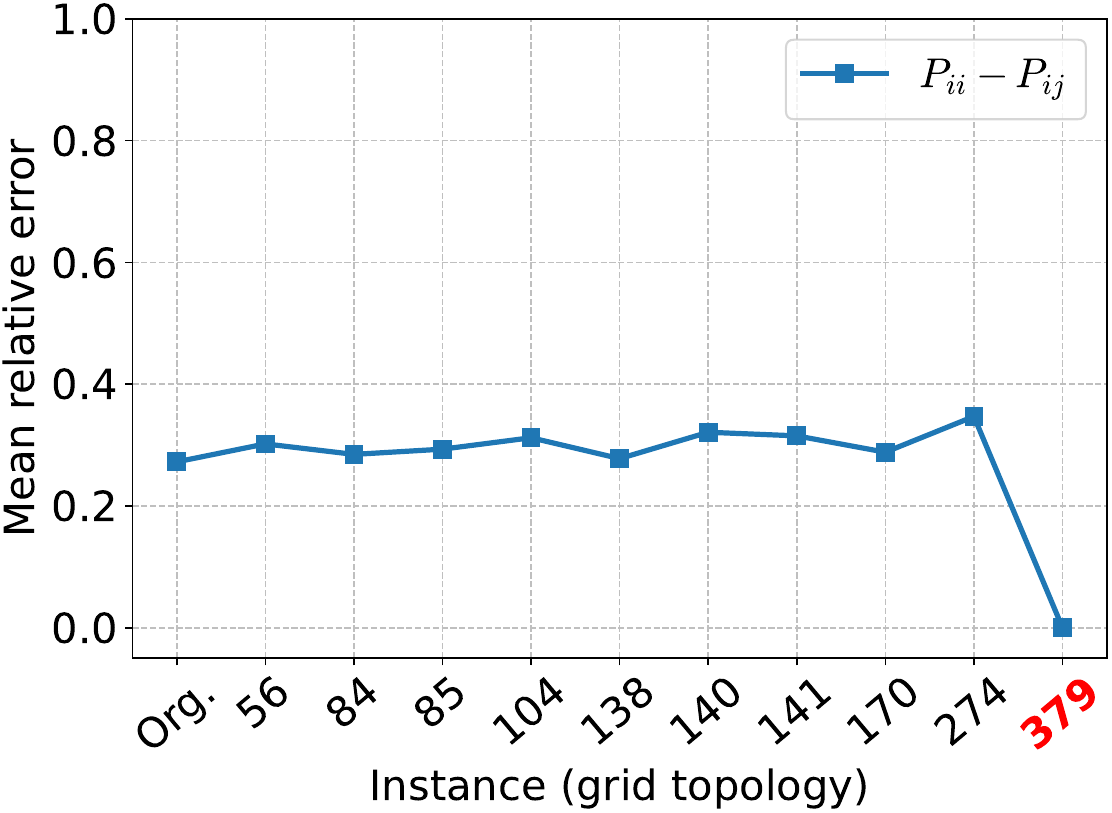}
        \caption{Grid (L-379 removed)}
        \label{fig_pmat_err_tc_379}
    \end{subfigure}
    \caption{Training the model with CF data from one grid, testing on the others (IEEE300 dataset).}
    \label{fig_err_tc}
    \vspace{-1pt}
\end{figure*}

\subsection{Model Generalization}

We study in this section the generalization ability of our model in the following directions. First, in Sec.~\ref{sec:change_power_demand},  given the original, standard IEEE300 power grid, and the CFs generated by the DCSepSim simulator, we increase the user power demands of a specific region in the grid to different levels -- moderate, heavy, or severe -- and regenerate corresponding CFs. Note that the increasing factors remain in a range that ensures the resulting power system is stable, i.e., works normally with a balanced power flow.  Second, in Sec.~\ref{sec:change_topology},  from the CFs of the original grid, we pick the $10$ most frequent lines therein, and remove each of them, one at a time, from the original grid, to generate 10 new grids having a slightly different physical topology; again, all these have a balanced power flow. The rationale is that these lines can be seen as highly critical.
We use a similar subscript notation as before, in the context of each figure: ${ii}$ for a model learned from the instance  $i$ and tested on that same instance, 
and ${ij}$ for a model learned on instance $j$ in that figure and tested on instance $i$. 

\subsubsection{Power demand changes}
\label{sec:change_power_demand}
In Fig.~\ref{fig_err_r_dc_1.0}, the power demands of the original power grid instance in the region $<20\%$ \emph{quantile} are scaled from $1$ to $10$ to produce $10$ different grid instances. When the increasing factor is higher than $10$, the power grid will not work normally. Therefore, the settings we selected provide a coarse-grained full-scope examination of performance by power demand. The plots show the mean relative DEs and PEs of all lines. 

\eat{
First, in the leftmost plot,  Fig.~\ref{fig_vcnt_err_r_dc_1.0}, we can observe mainly three regions of decrease in predictive accuracy, i.e., $[4,5]$, $[6,7]$ and $[9,10]$ (see Appendix~\ref{ap_add_exp} for the corresponding plots on absolute error). This motivated us to further investigate  these power settings regions, to which we refer as moderate, heavy, and severe changes. We re-evaluate the models by shifting the basis training point to $4.0$, $6.0$ and $9.0$, and testing on the other instances in the corresponding ranges. The results for the distribution errors are provided in Figs.~\ref{fig_vcnt_err_r_dc_1.0},~\ref{fig_vcnt_err_r_dc_4.0},~\ref{fig_vcnt_err_r_dc_6.0}, and~\ref{fig_vcnt_err_r_dc_9.0} respectively, and for the corresponding probability errors, they are in Figs.~\ref{fig_pmat_err_r_dc_1.0},~\ref{fig_pmat_err_r_dc_4.0},~\ref{fig_pmat_err_r_dc_6.0}, and~\ref{fig_pmat_err_r_dc_9.0} respectively.
}

Figs.~\ref{fig_vcnt_err_r_dc_1.0} and~\ref{fig_pmat_err_r_dc_1.0} show respectively the mean relative DEs and PEs for the full-scope demand change ($[1,10]$, with basis instance $1.0$). First, in Fig.~\ref{fig_vcnt_err_r_dc_1.0}, when the demand factor $<5.0$, our testing results show large relative errors; however,  the corresponding absolute errors in this region for all approaches are quite small (see Appendix~\ref{ap_add_exp}) and the relative error in this case is less indicative of performance as the values are relatively small. For the retraining results, the performance is quite constant.  Beyond the $<5.0$ region,  the testing results are very similar, but the errors become quite large. This is to be expected, since the power grid has experienced large power demand changes. 
 Indeed, it is challenging for the model that is trained on the initial instance ($1.00$) to capture a very ``distant'' instance such as $10.0$. As for the retraining results, when the demand factor is in the region $>5.0$, our model achieves lower relative errors (around $0.3$) than BP, with a gap around $0.35$. In Fig.~\ref{fig_pmat_err_r_dc_1.0} -- mean relative PEs -- we can notice that $P_{ii}-P_{i1}$, which is an indicator of our model generalization capability, follows a similar trend as with the DE values. Recall $P_{ii}-P_{11}$ indicates the changes in predicted probabilities when retraining on a different instance. We can notice that  the cascading behavior of the grid instances indeed experience significant changes when the demand factor is in the region $<5.0$. The model learned on the initial instance becomes less applicable as the power demand changes.

Regarding the \emph{moderate} demand changes (in the range $[4,5]$), zoomed upon in Fig.~\ref{fig_vcnt_err_r_dc_4.0} and~\ref{fig_pmat_err_r_dc_4.0}, the basis instance for training is moved to $4.0$. The results show a different pattern compared with the previous full-scope case. Now, prediction   on other instances becomes more accurate, as the  power changes are less drastic. 
In the region $>4.2$, all approaches have an error around $0.6$, and show a stable performance. For the retraining ($ii$) results, our model has relative errors around $0.2\sim 0.3$, and performs better than BP, by a margin of around $0.1$ (so around $50\%$). For the generalization testing results ($i1$), our model shows similar performance as BP (but outperforms it, by a margin of  around $18\%$, in terms of mean absolute error, see Appendix~\ref{ap_add_exp}). For the PEs, in Fig.~\ref{fig_pmat_err_r_dc_4.0}, the pattern is also rather different from the full-scope case. The results are aligned with the corresponding failure distribution errors. One conclusion we can draw here is that in practice it may be preferable to retrain in different regions in the spectrum of power demand changes, in order to have more robust / generalizable predictions. 
 
In the region of  \emph{heavy} power demand changes (range $[6,7]$), zoomed upon  in Figs.~\ref{fig_vcnt_err_r_dc_6.0} and~\ref{fig_pmat_err_r_dc_6.0}, the general trend is again quite different from the moderate case, showing a wave or step pattern. 
This is most likely indicative of the complex behaviors of CFs as the demand is changing. Regarding the DE metric, after retraining at $6.0$, all models have a longer applicable range from $6.0$ to $6.6$. In comparison, our model outperforms BP by a $0.2$ margin in average, when testing generalizability ($i1$), and a  $0.4$ margin when retraining ($ii$). Beyond point $6.6$, there is again a large increase of errors. Nevertheless, both our testing and retraining errors are lower than the ones of BP.  
The PE values shown in Fig.~\ref{fig_pmat_err_r_dc_6.0} follow a similar step pattern as the DE ones in  Fig.~\ref{fig_vcnt_err_r_dc_6.0}.

Finally, moving to the \emph{severe} demand changes (range $[9,10]$), shown in Figs.~\ref{fig_vcnt_err_r_dc_9.0} and~\ref{fig_pmat_err_r_dc_9.0}, in contrast to the \emph{heavy} case, the error trends for all approaches become smoother. For Fig.~\ref{fig_vcnt_err_r_dc_9.0}, the DEs present a Z-shape pattern. Our retraining ($ii$) results are stable around $0.28$, while the testing ones ($i1$) increase smoothly up to around $0.60$. Both the absolute and relative PEs of the BP model stay around $60\%$.
Our model outperforms BP with a margin of around $0.35$ when retraining, resp.  around $0.23$ when testing directly. 
The PEs in Fig.~\ref{fig_pmat_err_r_dc_9.0} remain lower than $0.4$ and lower than those observed in the previous three cases.
We believe that in this severe case, as the demands  increase towards the limit, the CF patterns converge as well,  leading to more stable performance.

One important conclusion from these three cases -- moderate, heavy, or severe power demand changes -- is that the distribution of CFs changes significantly; this leads to settings where the basic assumption of Poisson offspring distribution (on which BP relies) no longer holds, causing BP's decrease in performance. In contrast, since our hyperparametric diffusion model learns to predict the diffusion probabilities, and by extension the CFs, based on the physical and topological features, it can capture the underlying  physics dynamics better than BP's non-parametric approach.

\subsubsection{Topology changes}
\label{sec:change_topology}
When changing the topology of the power grid, by removing one transmission line, the state-of-the-art approaches cannot be applied without retraining (no $ij$ cross train-test experiment is possible). Therefore, only their retraining results ($ii$) can be obtained.  In Fig.~\ref{fig_err_tc}, we start from the original power grid (marked as ``Org.''), each time removing one line (shown in x-axis) and generate $10$ different power grids. In the figures, the value marked in red refers to the training instance. 
In terms of retraining results, our model slightly outperforms BP, with error around $0.2$-$0.3$.  For testing results ($ij$), our model has a general stable performance around $0.3$ on an average. There are a few peaks appearing for instances $56$, $140$, and $274$:  when the model is trained based  on instance ``Org.'', $85$, $138$, or $379$, the testing results on instance $56$ show a larger relative error, approximately in the range $0.6$-$0.8$.
When the model is trained on instance $379$, the test results show a large error on instances $140$ and $274$.

\begin{figure*}[t]
    \captionsetup{skip=3pt}
    \captionsetup[subfigure]{skip=3pt}
    \centering
    \begin{subfigure}[t]{0.33\linewidth}\centering
        \includegraphics[width=0.9\linewidth]{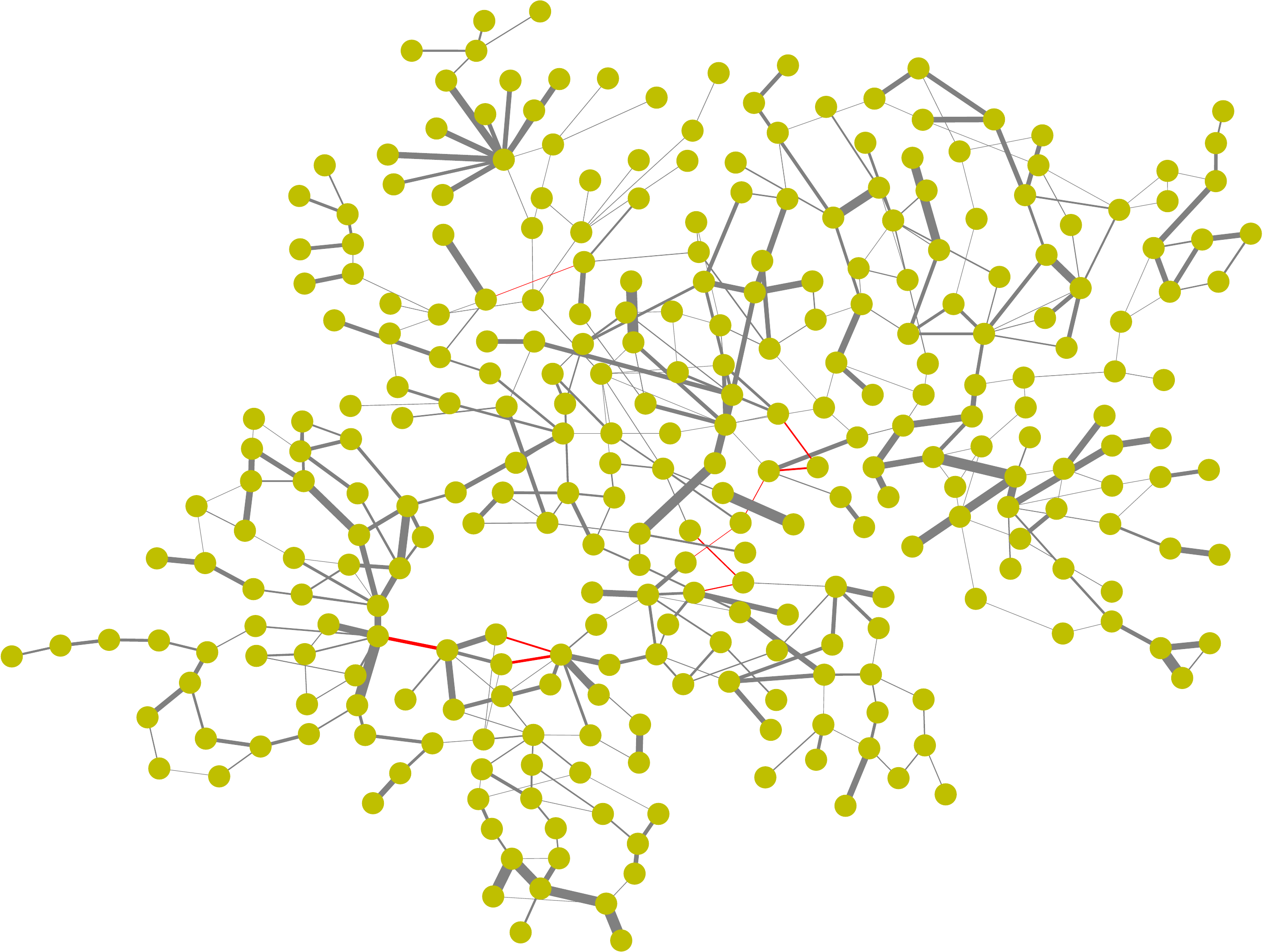}
        \caption{Physical graph.}
        \label{fig_phy_topo}
    \end{subfigure}
    \hfill
    \begin{subfigure}[t]{0.33\linewidth}\centering
        \includegraphics[width=0.9\linewidth]{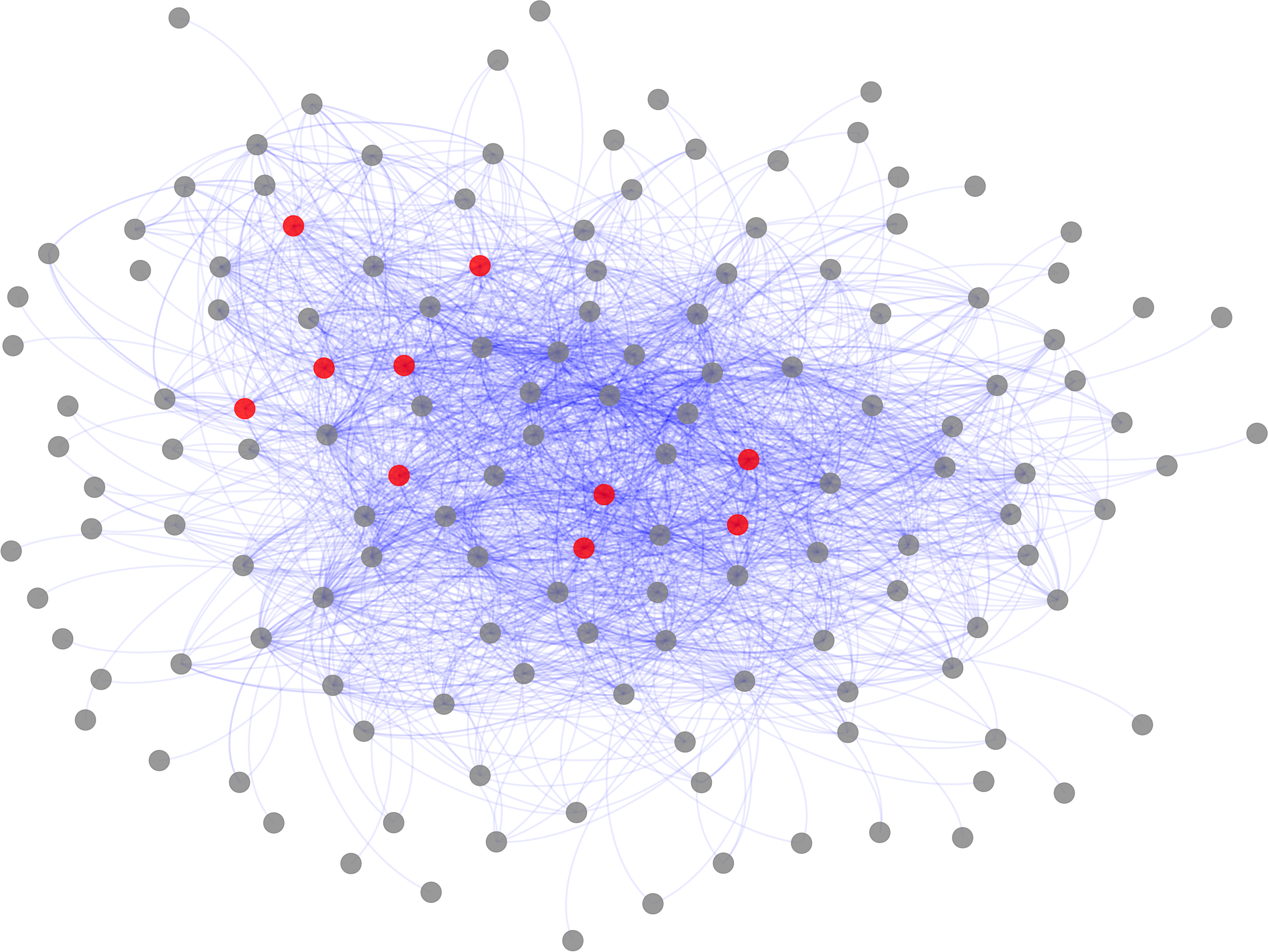}
        \caption{Diffusion graph.}
        \label{fig_diffusion}
    \end{subfigure}
    \hfill
    \begin{subfigure}[t]{0.33\linewidth}\centering
        \includegraphics[width=0.9\linewidth]{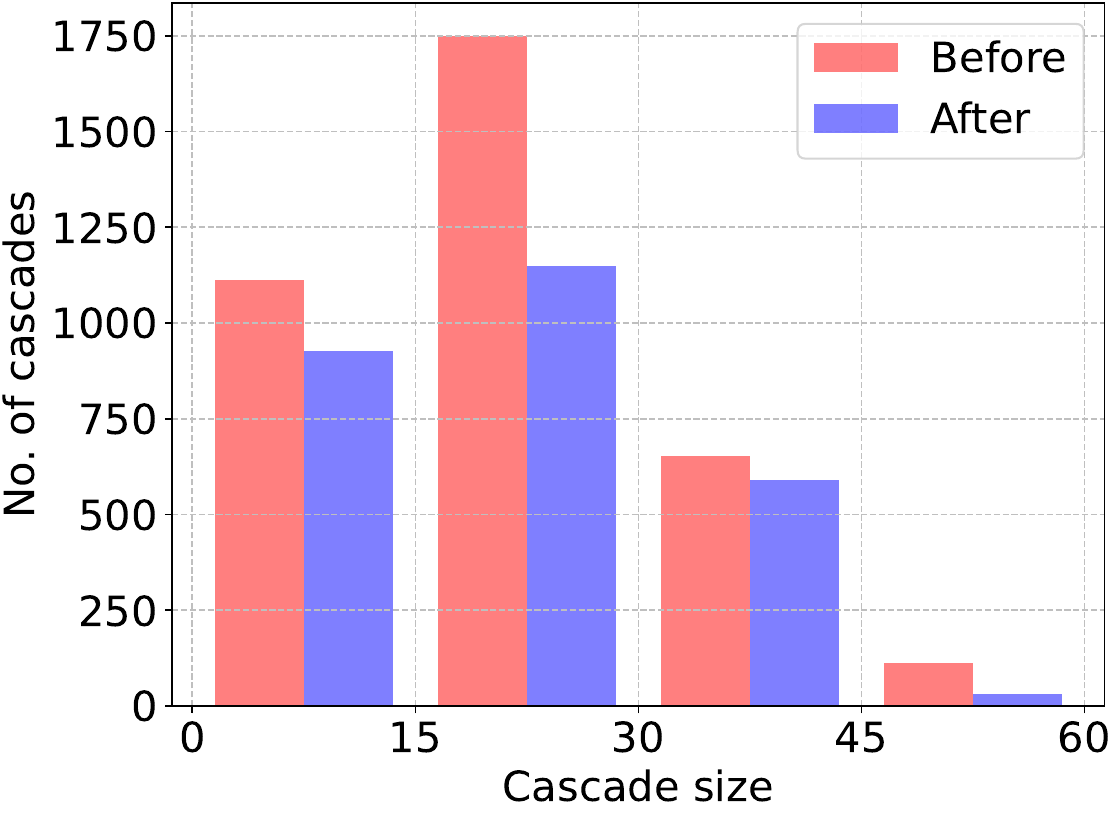}
        \caption{Cascading failures reduction.}
        \label{fig_cas_size_ba0}
    \end{subfigure}
    \caption{Physical graph of power grid, and the diffusion graph learned from the hyperparametric model, filtered with the probability value $p_{uv}\geqslant 0.01$. Cascade failures before and after increasing the capacities of ten most critical branches (IEEE300 dataset).}
    \vspace{-2mm}
\end{figure*}

\vspace{-3pt} 
\subsection{Mitigating CFs}
In this section, based on the diffusion probability matrix  we learned, we consider an initial attempt to mitigate the risks of CFs, by increasing the capacity of certain nodes in our diffusion graph (i.e., transmission lines in the original power grid). 
 In order to select the lines to be consolidated, we apply the  traditional IM algorithm CELF \cite{DBLP:conf/kdd/LeskovecKGFVG07} to retrieve the top $10$ seeds, corresponding in some sense to the $10$ most critical lines, their failure leading to the largest expected cascade. We  doubled the capacity of these lines, and re-simulate the CFs in order to observe to  which extent the CFs were mitigated. 
 
 Figures~\ref{fig_phy_topo} and~\ref{fig_diffusion} illustrate respectively the physical grid and the diffusion graph, along with the $10$ selected transmission lines.  They can be  viewed as ``dual'' graphs, where the edges in Fig.~\ref{fig_phy_topo} (marked in gray) correspond to  nodes (marked in red) in the diffusion graph. Recall that, in the power grid, nodes represent load or generator buses. We can observe that the grid is sparse, while the diffusion graph is rather dense (conceptually complete); in our illustration, we filtered out the edges with $p_{uv}\leq 0.01$.

We compare the distribution of cascade sizes, before and after doubling the capacity of the selected $10$ seed nodes, in Fig.~\ref{fig_cas_size_ba0}. There is a clear effect of reduction in CFs  for all size ranges (i.e., $[0,15]$, $[15,30]$, $[30,45]$, $[45,60]$, as the maximum cascade for this power grid is of size $60$). For each of these ranges, CFs are respectively decreased by around $18\%$, $36\%$, $9\%$, and $75\%$. We interpret this result as a promising one, and we intend to explore further options for reducing the potential spread in the resulting diffusion graph.

\subsection{Running Time}

Running time is mainly influenced by two factors: the specific grid instance and the size of the diffusion samples used for training.
Table~\ref{tab_time} shows both the training and testing time for 3 instances with power demand changes on the IEEE300 grid.
It is important to note that training for any method in this context is unlikely to be real-time.  \algoname's training is much longer compared with BP, when running on L. However, once our model is learned, it can be directly applied when the grid changes; the other approaches need to retrain the model on a new dataset, which requires a long time to collect (or, in some cases, a new dataset might not even be available). The ability to function effectively in new grid configurations is crucial, as decisions about grid provisioning are time-sensitive. This highlights once more the importance of model adaptation to unseen grid configurations.

\begin{table}[!th]
\center
\captionsetup{skip=3pt}
\caption{Training and testing time (seconds) per instance for the IEEE300 grid on laptop (L) and server (S).}
\label{tab_time}
\begin{tabularx}{\linewidth}{c|Y|Y|Y|Y|Y|Y}
\toprule
& \multicolumn{2}{c|}{1.0} & \multicolumn{2}{c|}{5.0} & \multicolumn{2}{c}{10.0} \\
\cmidrule(l{2pt}r{2pt}){2-3} \cmidrule(l{2pt}r{2pt}){4-5} \cmidrule(l{2pt}r{2pt}){6-7}
& Train. & Test. & Train. & Test. & Train. & Test. \\
\midrule
 \algoname\ (L) & 238  & 2  & 602 & 9  & 2521 & 16 \\
BP (L)  & 10   & 15 & 27 & 38 & 63  & 84 \\
 \algoname\ (S) & 28   & 4  & 57 & 4  & 161 &  5 \\
BP (S)  & 9    & 14 & 22 & 35 & 51  & 70 \\
\bottomrule
\end{tabularx}
\end{table}

\section{Conclusion}%
\label{sec-conclusion}
We present a new approach for predicting cascading failures in power grids by integrating physical and topological features from the grid into a hyperparametric model. Our methodology spans  the extraction of critical features,  construction of a Maximum Likelihood Estimation (MLE) optimization framework, and  application of the L-BFGS-B algorithm to learn the hyperparameters influencing the diffusion of failures. The resulting diffusion probability matrix is used to run Monte Carlo simulations, providing a robust predictive framework for cascading failures. In the future, we plan to explore how real-time data from fluctuating grid conditions can be integrated to dynamically adjust the predictive model.

\begin{acks}
This research is part of the programme DesCartes and is supported by the National Research Foundation, Prime Minister’s Office, Singapore under its Campus for Research Excellence and Technological Enterprise (CREATE) programme.
Lakshmanan's research was supported in part by a grant from NSERC (Canada).
\end{acks}

\bibliographystyle{ACM-Reference-Format}
\bibliography{refs}

\appendix
\section{Additional Experimental Results}\label{ap_add_exp}

\subsection{Other Experiments on the IEEE300 Dataset}

Fig.~\ref{fig_err_dc_1.0} shows the mean absolute error for line failure distribution and the diffusion probability matrix on the IEEE300 grid, complementing  the mean relative error results shown in Fig.~\ref{fig_err_r_dc_1.0}, once again evaluating the performance of the tested methods when the power demand changes. The variation trend of the error is consistent with the one in Fig.~\ref{fig_err_r_dc_1.0}, and in all cases, HCF exhibits a better generalization performance than the BP method.

\begin{figure*}[!t]
    \captionsetup{skip=3pt}
    \captionsetup[subfigure]{skip=3pt}
    \centering
    \begin{subfigure}[t]{0.24\linewidth}\centering
        \includegraphics[width=\textwidth]{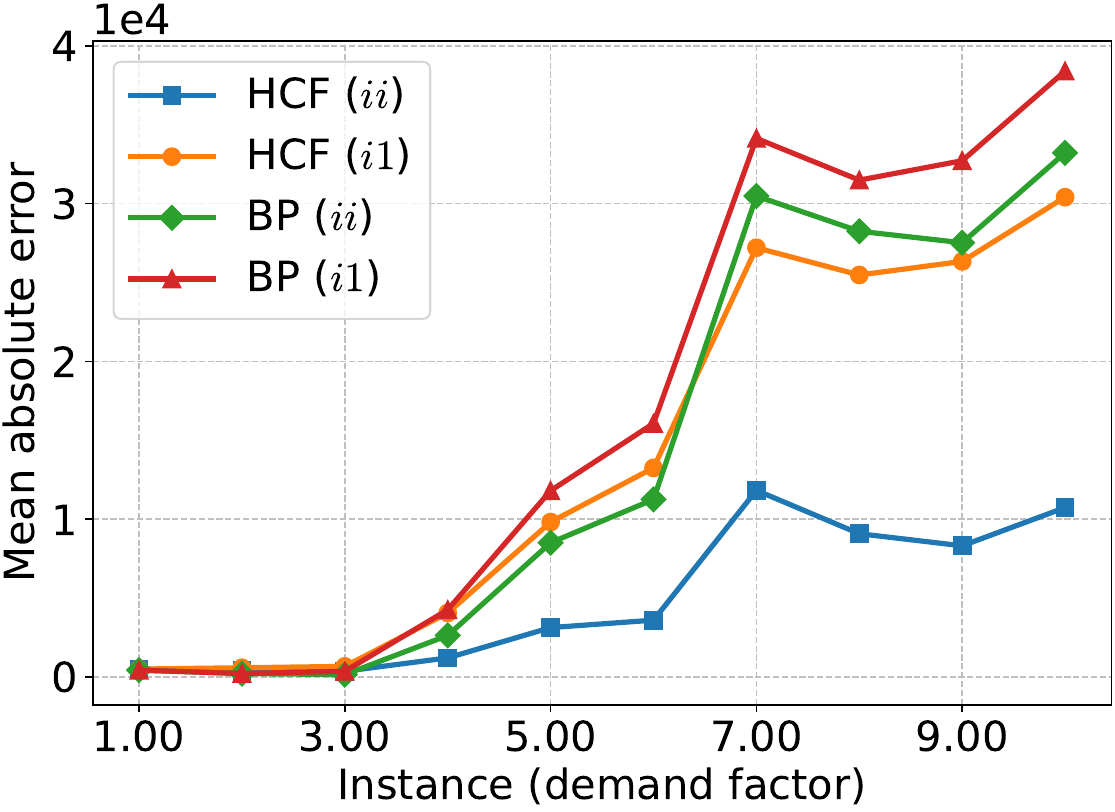}
        \caption{Distribution error}
        \label{fig_vcnt_err_dc_1.0}
    \end{subfigure}
    \hfill
    \begin{subfigure}[t]{0.24\linewidth}\centering
        \includegraphics[width=\textwidth]{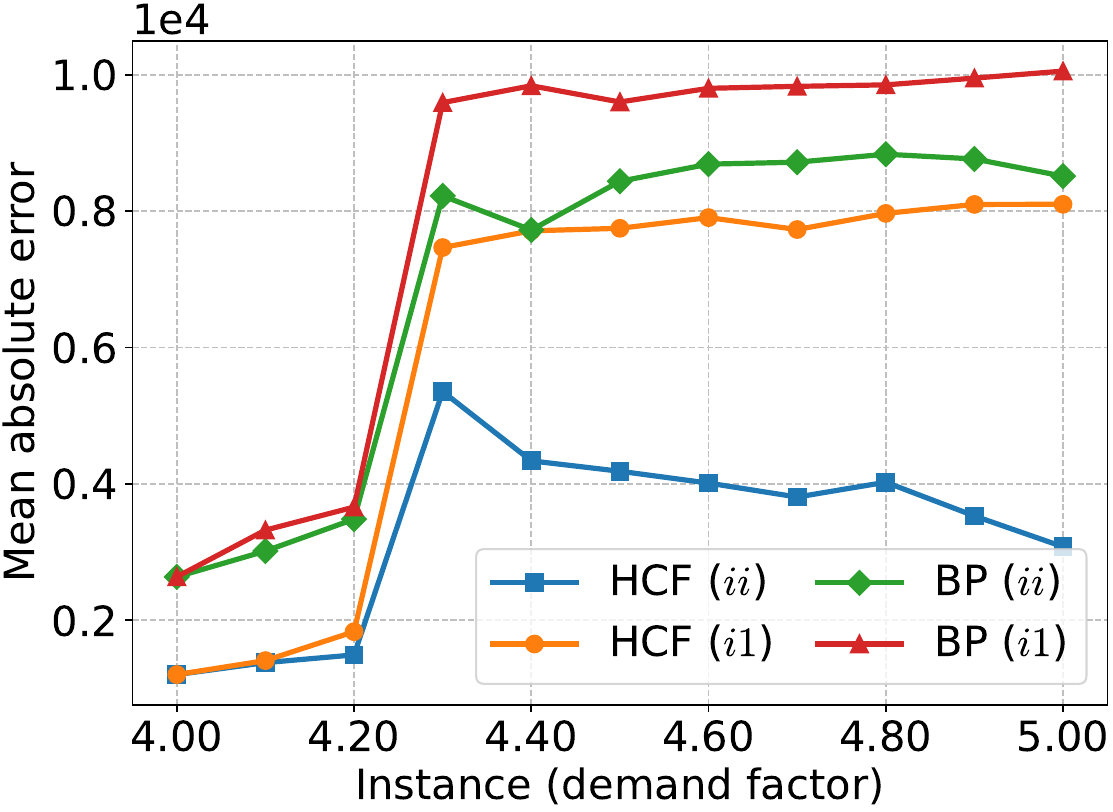}
        \caption{Distribution error}
        \label{fig_vcnt_err_dc_4.0}
    \end{subfigure}
    \hfill
    \begin{subfigure}[t]{0.24\linewidth}\centering
        \includegraphics[width=\textwidth]{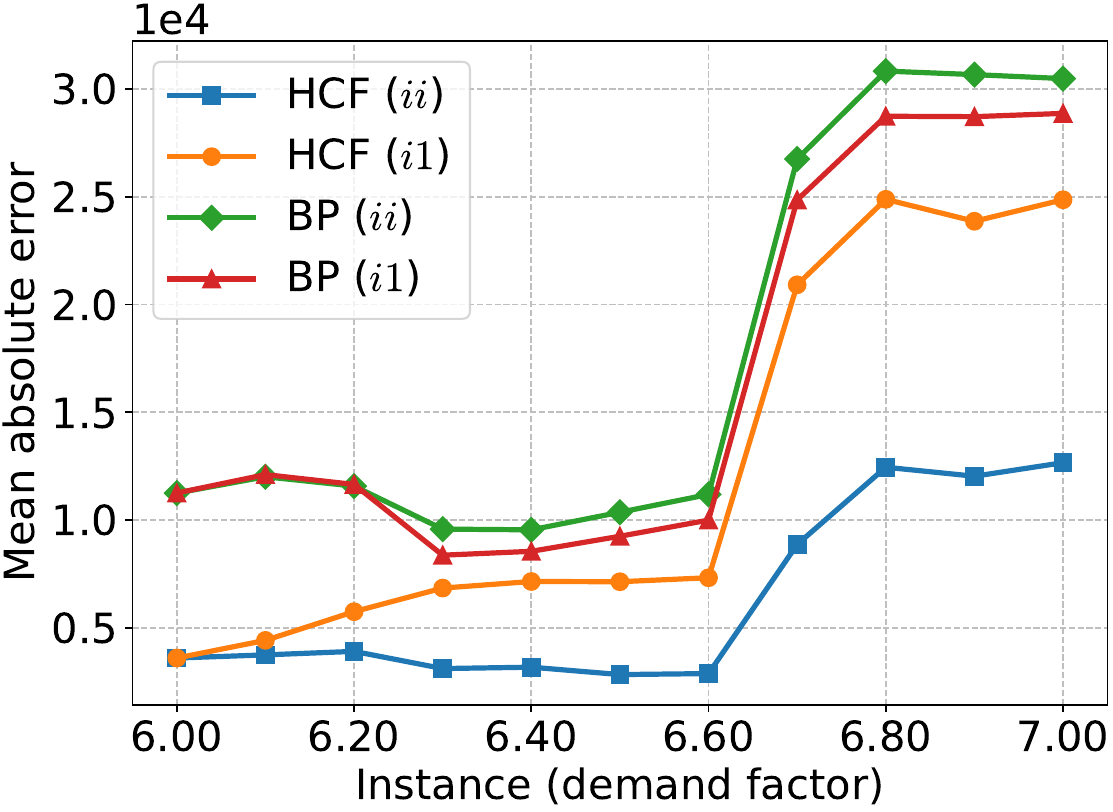}
        \caption{Distribution error}
        \label{fig_vcnt_err_dc_6.0}
    \end{subfigure}
    \hfill
    \begin{subfigure}[t]{0.24\linewidth}\centering
        \includegraphics[width=\textwidth]{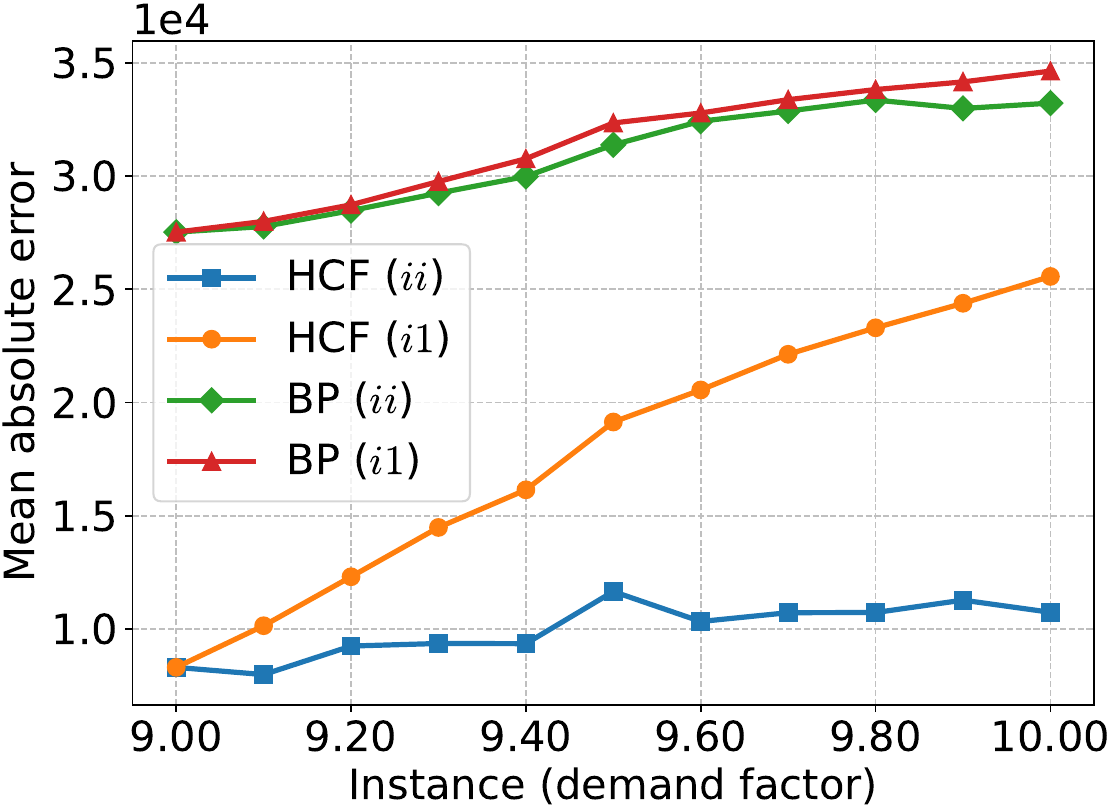}
        \caption{Distribution error}
        \label{fig_vcnt_err_dc_9.0}
    \end{subfigure}\\
    \begin{subfigure}[t]{0.24\linewidth}\centering
        \includegraphics[width=\textwidth]{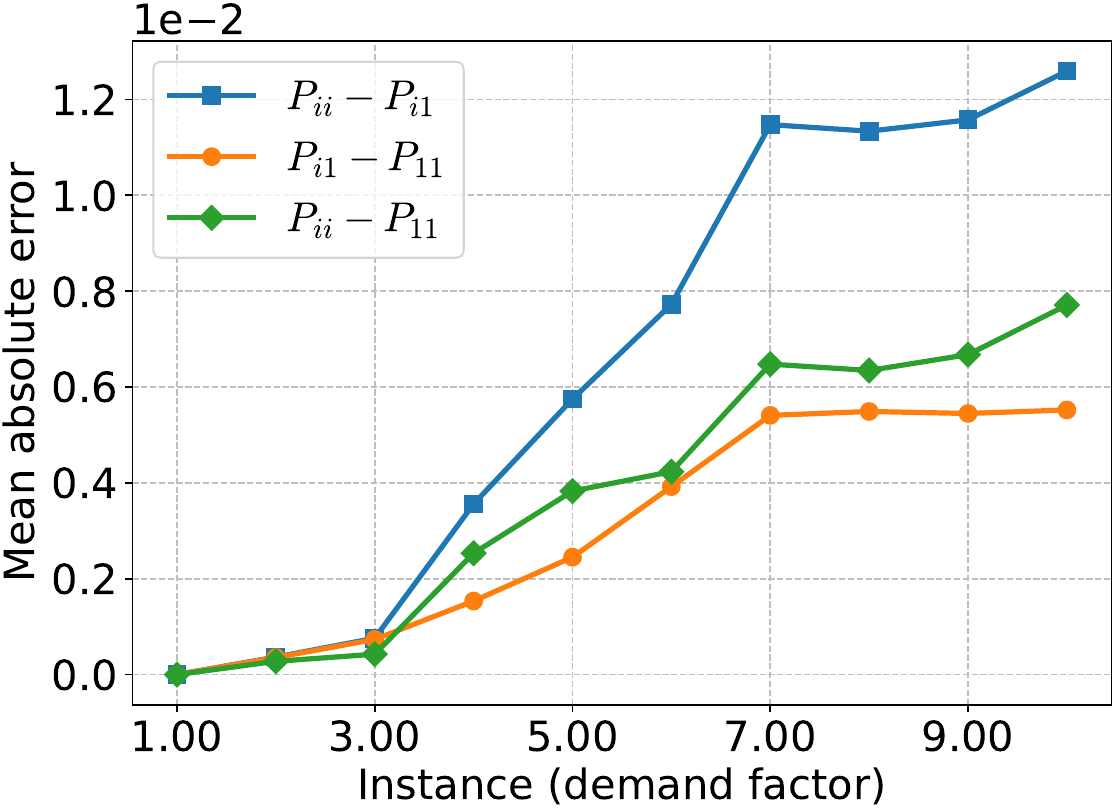}
        \caption{Probability error}
        \label{fig_pmat_err_dc_1.0}
    \end{subfigure}
    \hfill
    \begin{subfigure}[t]{0.24\linewidth}\centering
        \includegraphics[width=\textwidth]{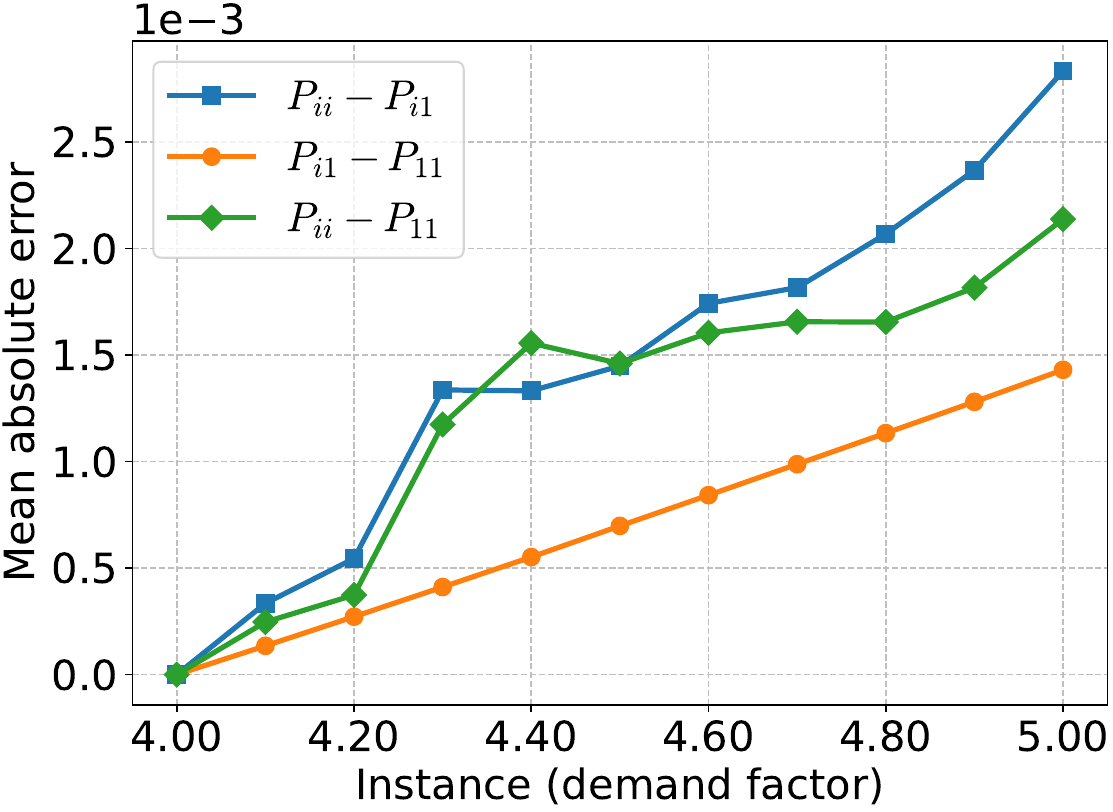}
        \caption{Probability error}
        \label{fig_pmat_err_dc_4.0}
    \end{subfigure}
    \hfill
    \begin{subfigure}[t]{0.24\linewidth}\centering
        \includegraphics[width=\textwidth]{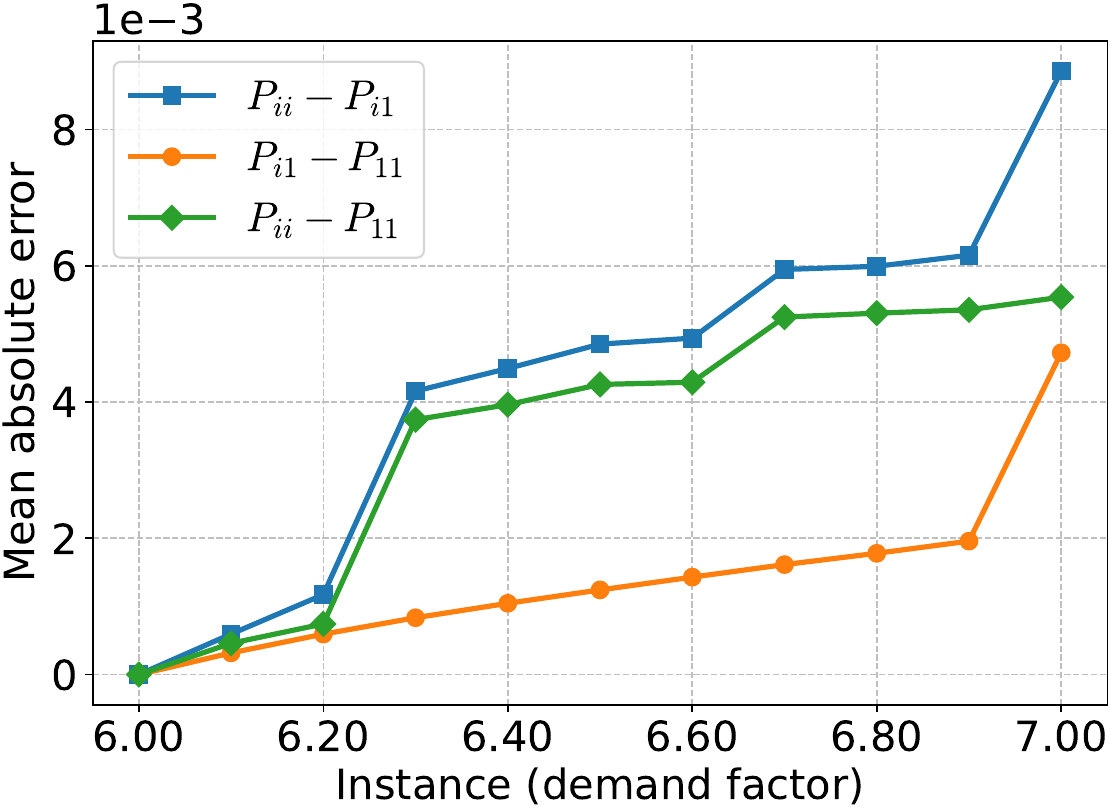}
        \caption{Probability error}
        \label{fig_pmat_err_dc_6.0}
    \end{subfigure}
    \hfill
    \begin{subfigure}[t]{0.24\linewidth}\centering
        \includegraphics[width=\textwidth]{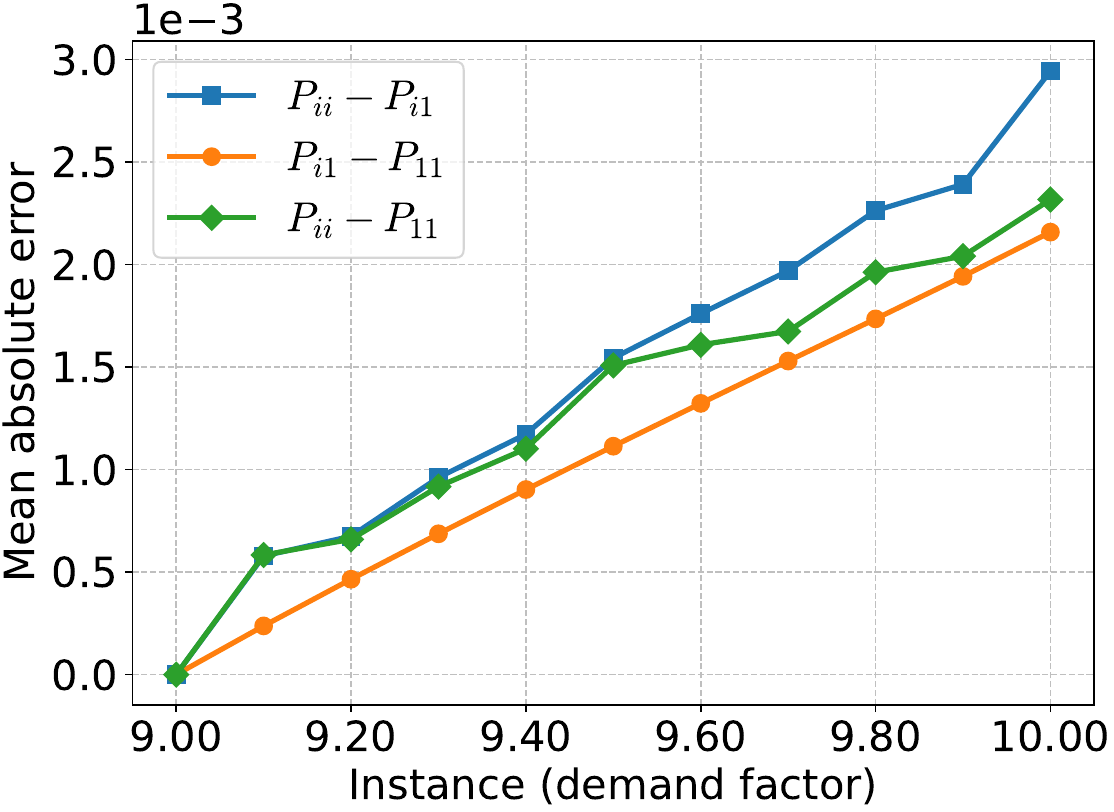}
        \caption{Probability error}
        \label{fig_pmat_err_dc_9.0}
    \end{subfigure}
    \caption{Mean absolute error for line failure distribution and diffusion probability matrix (IEEE300 dataset).}
    \label{fig_err_dc_1.0}
\end{figure*}

\subsection{Experiments on the Polish Grid}

In the domain of power systems, the power grid network IEEE300 (300 nodes, 516 edges) is a relatively large physical network corresponding to a metropolitan scale. To further evaluate the scalability of the compared methods, we expanded our experiments to test with a larger power system, the Polish power grid (2383 nodes, 2896 edges), obtained from the open-source MATPOWER framework\footnote{https://matpower.org}.

To obtain the dataset of cascading failures, the grid simulator is running up to 70h for generating 500K cascades per instance. Here we generated 10 instances with the demand factor ranging from 1.0 to 1.5. Note that when the increasing demand factor is above 1.5 the Polish power grid is over-saturated and cannot be balanced. Once more, following the evaluation design of~\cite{hines2016cascading}, only the most critical transmission lines w.r.t. the cascading failures size were considered for the evaluation (top 0.8\% lines).

From Fig.~\ref{fig_err_polish}, we can conclude that  \algoname\ has good predictive performance on this larger dataset, and achieves better results compared with BP. We would like to emphasize here that the errors of both approaches are relatively large mainly due to the filtered (critical) transmission lines on which the evaluation is done and to the rather limited sample size relative to this increased network scale.

\begin{figure*}[!t]
    \captionsetup{skip=3pt}
    \captionsetup[subfigure]{skip=3pt}
    \centering
    \begin{subfigure}[t]{0.24\linewidth}\centering
        \includegraphics[width=\textwidth]{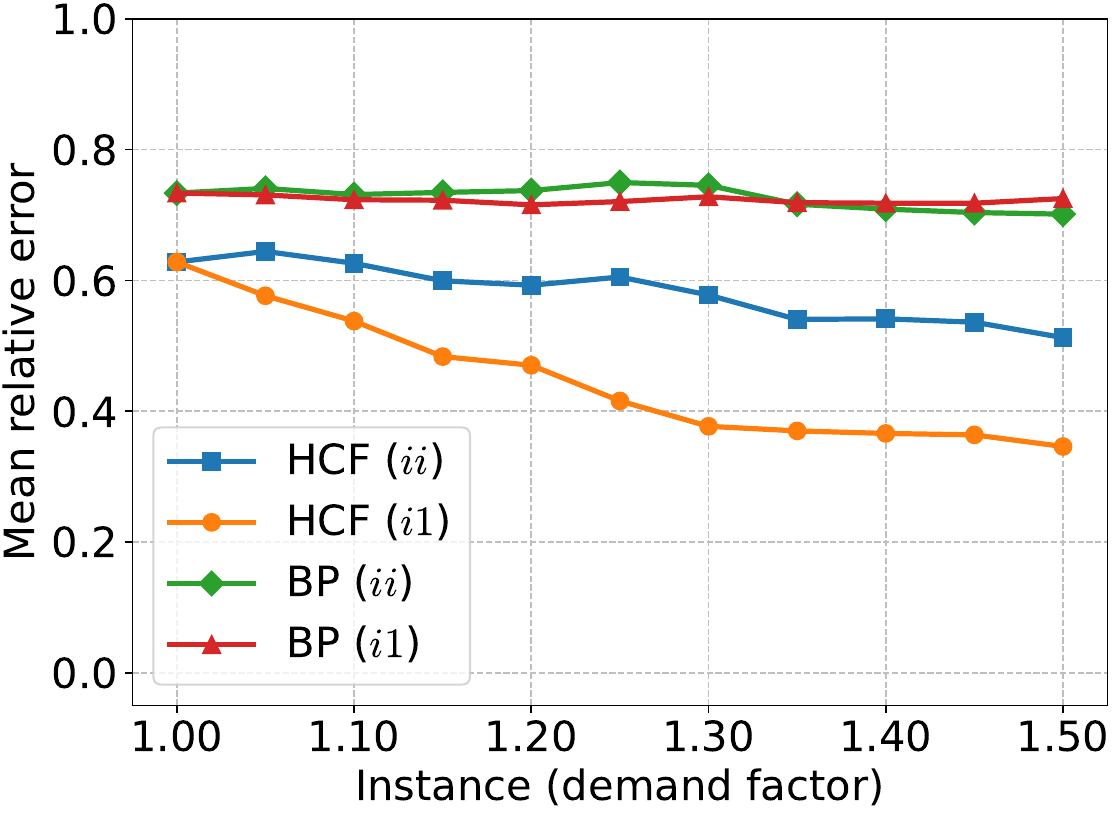}
        \caption{Relative distribution error}
        \label{fig_vcnt_err_r_polish}
    \end{subfigure}
    \hfill
    \begin{subfigure}[t]{0.24\linewidth}\centering
        \includegraphics[width=\textwidth]{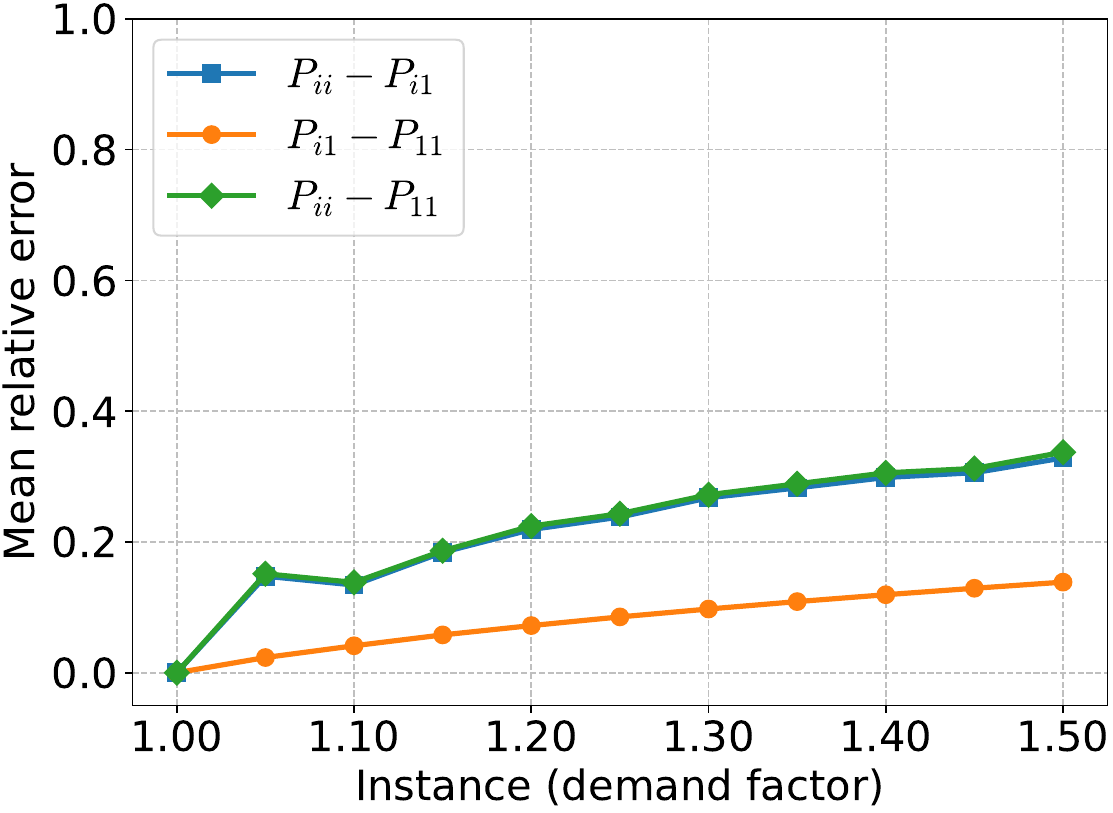}
        \caption{Relative probability error}
        \label{fig_pmat_err_r_polish}
    \end{subfigure}
    \hfill
    \begin{subfigure}[t]{0.24\linewidth}\centering
        \includegraphics[width=\textwidth]{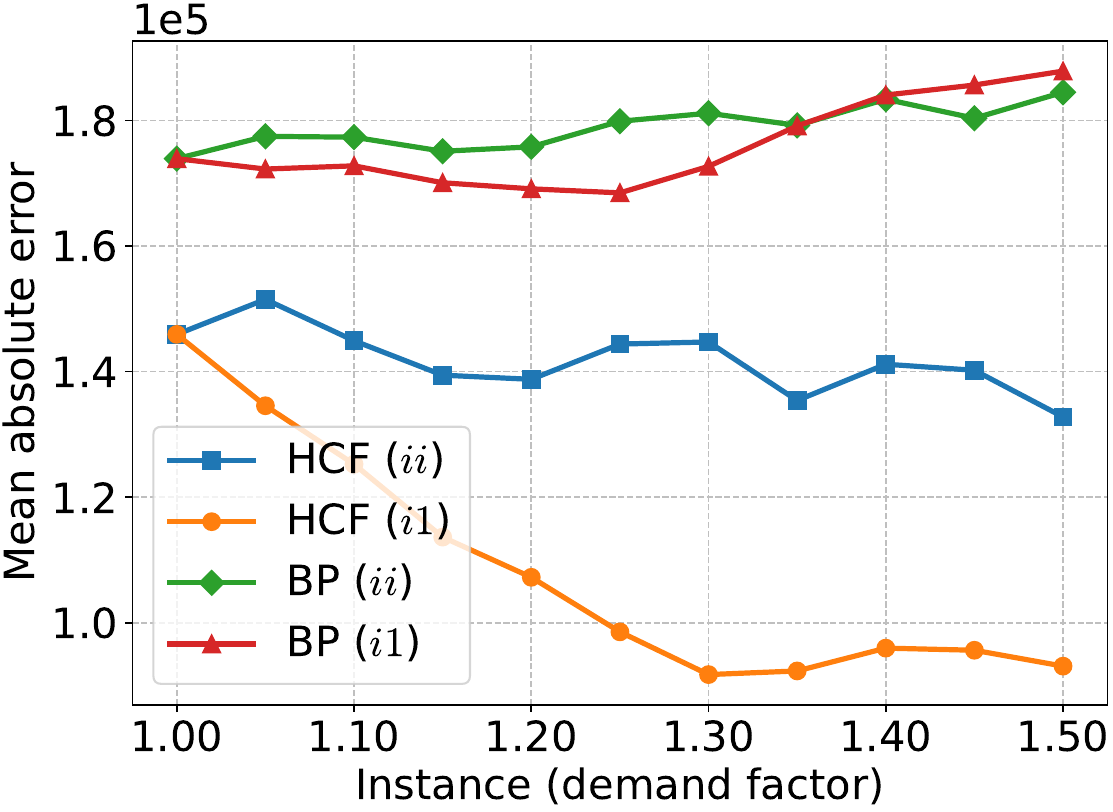}
        \caption{Absolute distribution error}
        \label{fig_vcnt_err_polish}
    \end{subfigure}
    \hfill
    \begin{subfigure}[t]{0.24\linewidth}\centering
        \includegraphics[width=\textwidth]{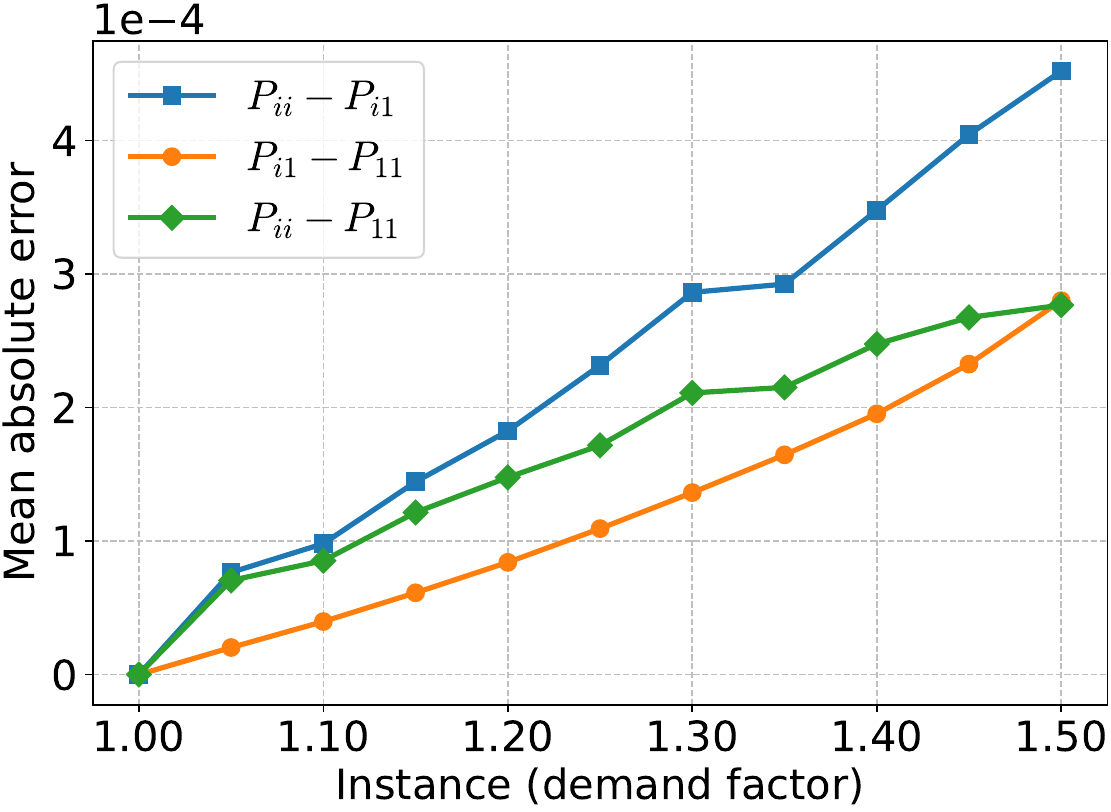}
        \caption{Absolute probability error}
        \label{fig_pmat_err_polish}
    \end{subfigure}
    \caption{Mean relative and absolute error for line failure distribution and diffusion probability matrix on Polish grid.}
    \label{fig_err_polish}
\end{figure*}

\vspace{-1mm}
\subsection{Experiments with SOTA method MK \cite{wu2021predicting}}

Despite the common limitation of existing algorithms in dealing with unseen configurations, for  a more comprehensive evaluation, we expand the experiments to include an additional method, MK~\cite{wu2021predicting} (Fig.~\ref{fig_err_r_mk}). It uses non-parametric regression to model the cascading processes. The performance of MK oscillates a lot, and for most instances, it exhibits on average poorer performance compared with the other methods, whenever the cascading failures distribution experiences significant changes as the demand factor changes. When the demand factor is within $[9,10]$ (see Fig.~\ref{fig_vcnt_err_r_mk_9.0}), cascading failures are large but the change in cascading failures distribution is small, thus MK performs better. Finally, training MK takes longer ($\sim$24h) and has a large memory footprint ($\sim$60GB) per instance, as MK entails solving a large-size constrained nonlinear optimization problem, and the testing time takes around 18 min per instance\footnote{We stress that we could not run MK on the Polish grid, due to its large computation time and memory requirements.}.

\begin{figure*}[!t]
    \captionsetup{skip=3pt}
    \captionsetup[subfigure]{skip=3pt}
    \centering
    \begin{subfigure}[t]{0.24\linewidth}\centering
        \includegraphics[width=\textwidth]{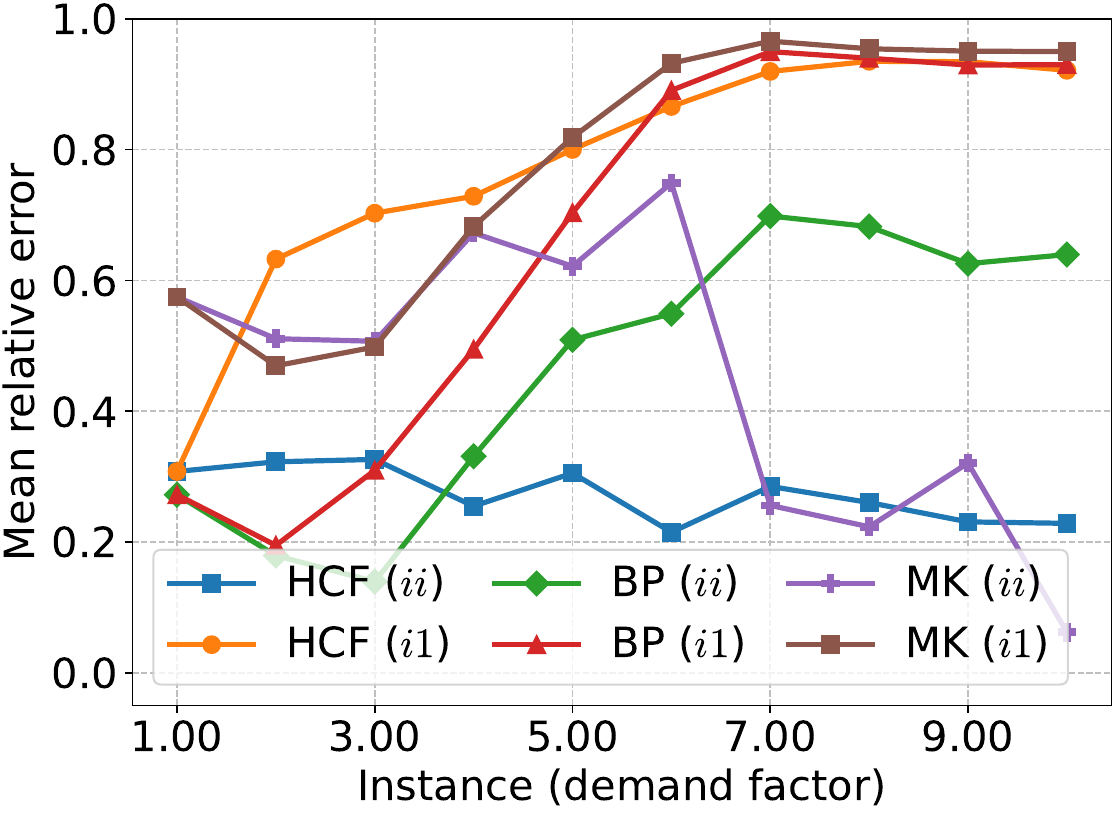}
        \caption{Distribution error}
        \label{fig_vcnt_err_r_mk_1.0}
    \end{subfigure}
    \hfill
    \begin{subfigure}[t]{0.24\linewidth}\centering
        \includegraphics[width=\textwidth]{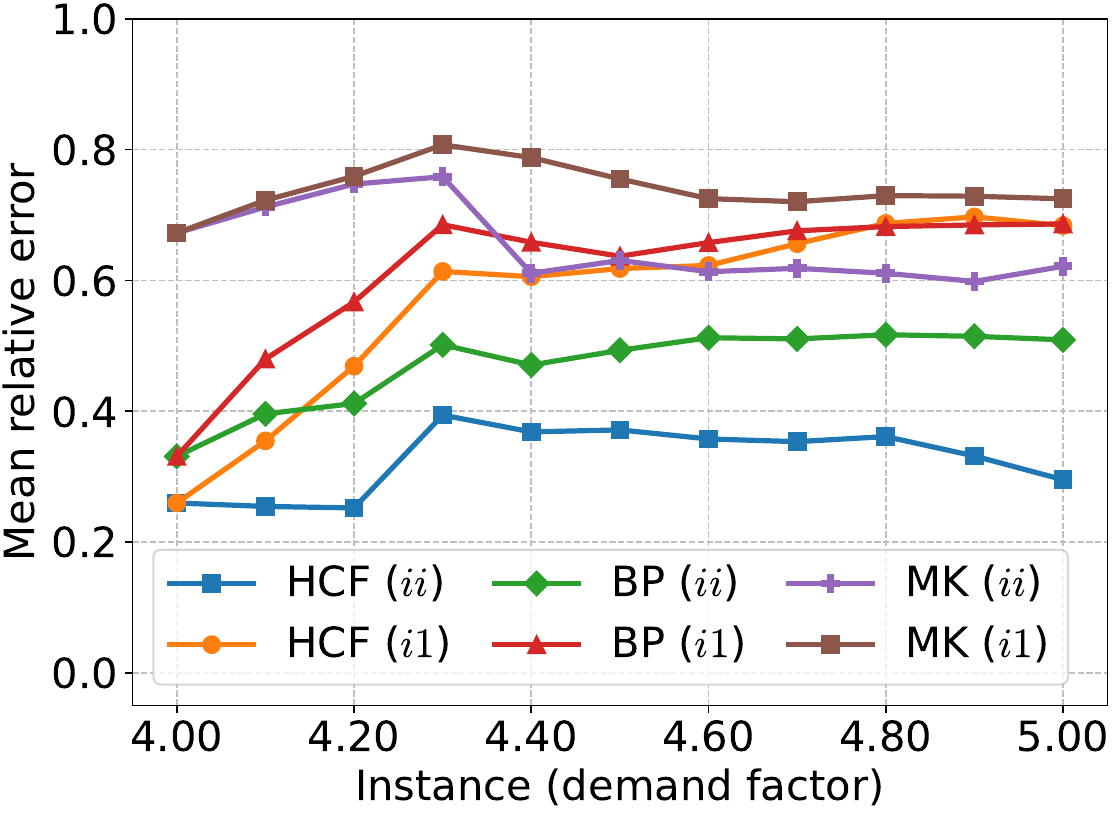}
        \caption{Distribution error}
        \label{fig_vcnt_err_r_mk_4.0}
    \end{subfigure}
    \hfill
    \begin{subfigure}[t]{0.24\linewidth}\centering
        \includegraphics[width=\textwidth]{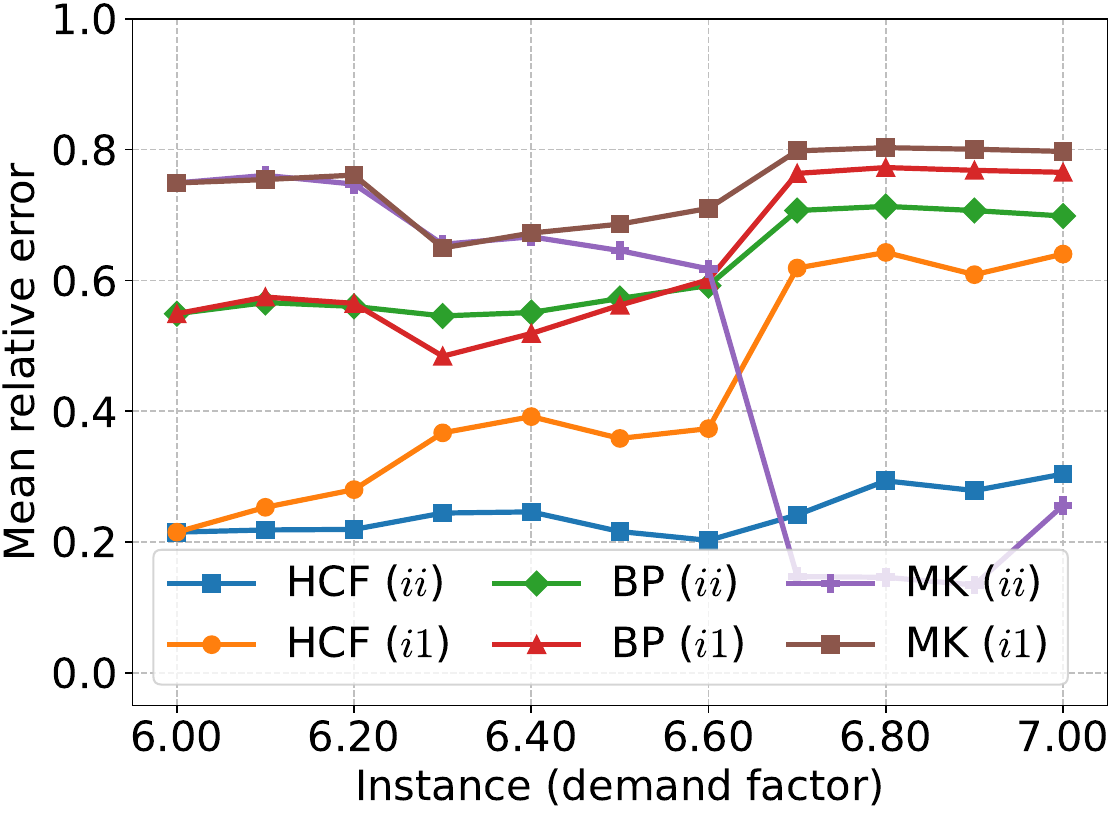}
        \caption{Distribution error}
        \label{fig_vcnt_err_r_mk_6.0}
    \end{subfigure}
    \hfill
    \begin{subfigure}[t]{0.24\linewidth}\centering
        \includegraphics[width=\textwidth]{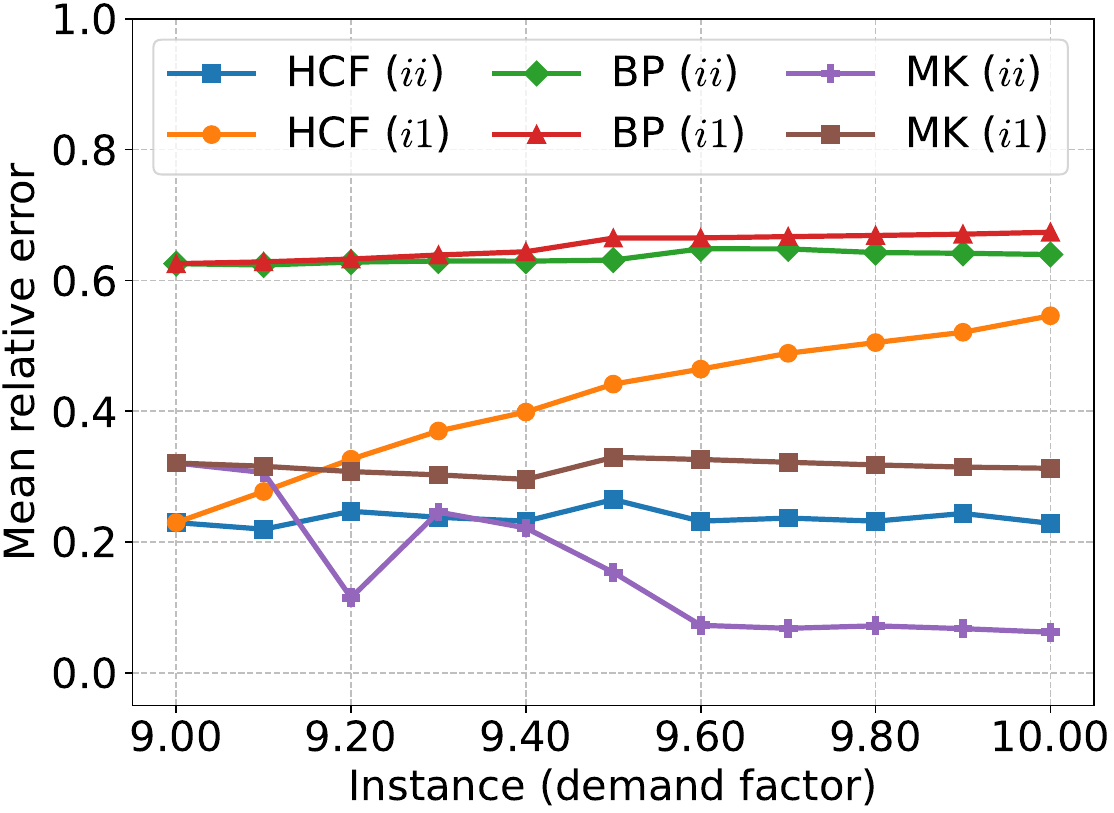}
        \caption{Distribution error}
        \label{fig_vcnt_err_r_mk_9.0}
    \end{subfigure}
    \caption{Mean relative error for line failure distribution and diffusion probability matrix on IEEE300 grid (compared with MK).}
    \label{fig_err_r_mk}
\end{figure*}

\begin{table}[!t]
\center
\captionsetup{skip=3pt}
\caption{Sensitivity w.r.t. hyperparameter dimension.}
\label{tab_sens}
\begin{tabularx}{\linewidth}{c|Y|Y|Y}
\toprule
Dimension $\backslash$ Instance & 1.0 & 5.0 & 10.0 \\
\midrule
$d=200$ & 0.33 & 0.38 & 0.37 \\
$d=250$ & 0.29 & 0.37 & 0.30 \\
$d=300$ & 0.27 & 0.36 & 0.29 \\
\bottomrule
\end{tabularx}
\end{table}

\vspace{-1mm}
\subsection{Sensitivity Analysis}

The main ingredient of our diffusion model is the hyperparameter vector, which is learned from data, where one controllable factor is its dimension. To study sensitivity to the hyperparameter's dimension, we evaluated in one experiment the changes of mean relative error when varying the dimension, as in Table~\ref{tab_sens}.

\section{Proofs}\label{proofs}

\subsection{Proof of Lemma~\ref{lemma_l_lip}}
\begin{proof}
    To study the Lipschitz continuity of the log-likelihood function, based on Eq.~\eqref{eq_dL}, we first check the gradient $\nabla_{\vec{\theta}} P(\vec{\theta}\,|\,\vec{x},\vec{s})$:
    \begin{align}
        \nabla_{\vec{\theta}} P(\vec{\theta}\,|\,\vec{x},\vec{s})
        &= \sum_{u\in\mathcal{V}_{\vec{s}}}\nabla_{\vec{\theta}}p_{uv} \prod_{u'\in\mathcal{V}_{\vec{s}}\backslash u}(1-p_{u'v})\\
        &= \left(\prod_{u\in\mathcal{V}_{\vec{s}}}(1-p_{uv})\right) \left(\sum_{u\in\mathcal{V}_{\vec{s}}}\frac{\nabla_{\vec{\theta}}p_{uv}}{1-p_{uv}}\right)\\
        &= (1-P(\vec{\theta}\,|\,\vec{x},\vec{s}))\sum_{u\in\mathcal{V}_{\vec{s}}}\frac{\nabla_{\vec{\theta}}p_{uv}}{1-p_{uv}}.
    \end{align}
    Let's define $\hat{\vec{x}}_{\vec{s}}(\vec{\theta})=\sum_{u\in\mathcal{V}_{\vec{s}}}\frac{\nabla_{\vec{\theta}}p_{uv}}{1-p_{uv}}$ (recall that $\vec{s}=(\mathcal{V}_{\vec{s}},v)$). Then,
    \begin{equation}
        \nabla_{\vec{\theta}} P(\vec{\theta}\,|\,\vec{x},\vec{s}) = (1-P(\vec{\theta}\,|\,\vec{x},\vec{s}))\hat{\vec{x}}_{\vec{s}}(\vec{\theta}).\label{eq_dP}
    \end{equation}

    For the sigmoid probability function $p_{uv}$, we have $\nabla_{\vec{\theta}}p_{uv}=(1-p_{uv})p_{uv}\vec{x}_{uv}$.
    Then,
    \begin{equation}
        \hat{\vec{x}}_{\vec{s}}(\vec{\theta})=\sum\nolimits_{u\in\mathcal{V}_{\vec{s}}}p_{uv}\vec{x}_{uv}.
    \end{equation}
    Then, we have
    \begin{align}
        \norm{\hat{\vec{x}}_{\vec{s}}(\vec{\theta})}_q
        &\leqslant\sum\nolimits_{u\in\mathcal{V}_{\vec{s}}}p_{uv}\max_{u\in\mathcal{V}_{\vec{s}}}\norm{\vec{x}_{uv}}_q\\
        &=\sum\nolimits_{u\in\mathcal{V}_{\vec{s}}}p_{uv}\sqrt[q]{d}.
    \end{align}
    For $z<1$, we have $z\leqslant -\log(1-z)$. Then,
    \begin{align}
        \sum\nolimits_{u\in\mathcal{V}_{\vec{s}}}p_{uv}
        &\leqslant-\sum\nolimits_{u\in\mathcal{V}_{\vec{s}}}\log(1-p_{uv})\\
        &=-\log(1-P(\vec{\theta}\,|\,\vec{x},\vec{s})).
    \end{align}
    Then, we have $\norm{\hat{\vec{x}}_{\vec{s}}(\vec{\theta})}_q\leqslant|\log(1-P(\vec{\theta}\,|\,\vec{x},\vec{s}))|\sqrt[q]{d}$.

    From Eq.~\eqref{eq_p_theta}, we have $P(\vec{\theta}\,|\,\vec{x},\vec{s})\in[\lambda,1-\lambda^{V_{\vec{s}}}]$, where $V_{\vec{s}}=|\mathcal{V}_{\vec{s}}|\geqslant1$ is the number of nodes activated at step $t$ in cascade $k$.

    Then, based on Eq.~\eqref{eq_l_loss}, $L(\vec{\theta}\,|\,\vec{x},\vec{s},y)\in[V_{\vec{s}}\log\lambda,\log(1-\lambda^{V_{\vec{s}}})]$ is bounded, and based on Eq.~\eqref{eq_dP}, the gradient $\nabla_{\vec{\theta}} L(\vec{\theta}\,|\,\vec{x},\vec{s},y)$ on $\vec{\theta}$ is derived as:
    \begin{align}
        \nabla_{\vec{\theta}} L(\vec{\theta}\,|\,\vec{x},\vec{s},y)
        &= \left(\frac{y}{P(\vec{\theta}\,|\,\vec{x},\vec{s})}-\frac{1-y}{1-P(\vec{\theta}\,|\,\vec{x},\vec{s})}\right)\nabla_{\vec{\theta}} P(\vec{\theta}\,|\,\vec{x},\vec{s})\\
        &= \left(\frac{y}{P(\vec{\theta}\,|\,\vec{x},\vec{s})}-1\right)\hat{\vec{x}}_{\vec{s}}(\vec{\theta}).
    \end{align}
    Then,
    \begin{align}
        \norm{\nabla_{\vec{\theta}} L(\vec{\theta}\,|\,\vec{x},\vec{s},y)}_q &= \left|\frac{y}{P(\vec{\theta}\,|\,\vec{x},\vec{s})}-1\right|\norm{\hat{\vec{x}}_{\vec{s}}(\vec{\theta})}_q\\
        &\leqslant\left|\frac{y}{P(\vec{\theta}\,|\,\vec{x},\vec{s})}-1\right| |\log(1-P(\vec{\theta}\,|\,\vec{x},\vec{s}))|\sqrt[q]{d}.
    \end{align}
    
    For $y=0$,
    \begin{equation}
        \norm{\nabla_{\vec{\theta}} L(\vec{\theta}\,|\,\vec{x},\vec{s},y)}_q \leqslant |\log(1-P(\vec{\theta}\,|\,\vec{x},\vec{s}))|\sqrt[q]{d} \leqslant V_{\vec{s}}\sqrt[q]{d}\log{\frac{1}{\lambda}}.
    \end{equation}
    
    For $y=1$,
    \begin{equation}
        \norm{\nabla_{\vec{\theta}} L(\vec{\theta}\,|\,\vec{x},\vec{s},y)}_q \leqslant \sqrt[q]{d}\frac{1-\lambda}{\lambda}\log\frac{1}{1-\lambda}. 
    \end{equation}
    
    Let $g(\lambda)=\log{\frac{1}{\lambda}}-\frac{1-\lambda}{\lambda}\log\frac{1}{1-\lambda},\lambda\in(0,0.5)$. $g(\lambda)$ is monotonically decreasing with $\lim_{\lambda\to 0.5}g(\lambda)=0$.
    Then, we have $\log{\frac{1}{\lambda}}>\frac{1-\lambda}{\lambda}\log\frac{1}{1-\lambda}$, and therefore,
    \[\norm{\nabla_{\vec{\theta}} L(\vec{\theta}\,|\,\vec{x},\vec{s},y)}_q \leqslant V_{\vec{s}}\sqrt[q]{d}\log{\frac{1}{\lambda}},\]
    which completes the proof.
\end{proof}

\subsection{Proof of Lemma~\ref{lemma_l_nonc}}
\begin{proof}
    To study the concavity of the log-likelihood function, we have to check the definiteness of the Hessian matrix $\nabla^2_{\vec{\theta}} L(\vec{\theta}\,|\,\vec{x},\vec{s},y)$.
    Based on $\nabla_{\vec{\theta}} L(\vec{\theta}\,|\,\vec{x},\vec{s},y)$ and $\nabla_{\vec{\theta}} P(\vec{\theta}\,|\,\vec{x},\vec{s})$, we derive
    \begin{align}
        \nabla^2_{\vec{\theta}} L(\vec{\theta}\,|\,\vec{x},\vec{s},y)
        &= (\frac{y}{P_{\vec{s}}}-1)\nabla_{\vec{\theta}}\hat{\vec{x}}_{\vec{s}} - \frac{y}{P_{\vec{s}}^2}\nabla_{\vec{\theta}}P_{\vec{s}}\,\hat{\vec{x}}^{\mathsf{T}}_{\vec{s}}\\
        &= (\frac{y}{P_{\vec{s}}}-1)\nabla_{\vec{\theta}}\hat{\vec{x}}_{\vec{s}} + \frac{y(P_{\vec{s}}-1)}{P_{\vec{s}}^2}\hat{\vec{x}}_{\vec{s}}\hat{\vec{x}}^{\mathsf{T}}_{\vec{s}}\\
        &=y\frac{1-P_{\vec{s}}}{P_{\vec{s}}^2}(P_{\vec{s}}\nabla_{\vec{\theta}}\hat{\vec{x}}_{\vec{s}}-\hat{\vec{x}}_{\vec{s}}\hat{\vec{x}}^{\mathsf{T}}_{\vec{s}})-(1-y)\nabla_{\vec{\theta}}\hat{\vec{x}}_{\vec{s}}\\
        &=y\frac{1-P_{\vec{s}}}{P_{\vec{s}}^2}\mathbf{Q}_{\vec{s}}-(1-y)\nabla_{\vec{\theta}}\hat{\vec{x}}_{\vec{s}},\label{eq_d2L}
    \end{align}
    where, for simplifying notations, $P_{\vec{s}}=P(\vec{\theta}\,|\,\vec{x},\vec{s})$, $\hat{\vec{x}}_{\vec{s}}=\hat{\vec{x}}_{\vec{s}}(\vec{\theta})$, and
    \begin{equation}
        \mathbf{Q}_{\vec{s}} = P_{\vec{s}}\nabla_{\vec{\theta}}\hat{\vec{x}}_{\vec{s}}-\hat{\vec{x}}_{\vec{s}}\hat{\vec{x}}^{\mathsf{T}}_{\vec{s}}.\label{eq_qs}
    \end{equation}
    In Eq.~\eqref{eq_qs},
    for the sigmoid function,
    \begin{equation}
        \nabla_{\vec{\theta}}\hat{\vec{x}}_{\vec{s}}=\sum\nolimits_{u\in\mathcal{V}_{\vec{s}}}p_{uv}(1-p_{uv})(\vec{x}_{uv}\vec{x}^{\mathsf{T}}_{uv}).\label{eq_dx_s}
    \end{equation}
    Then, $\nabla_{\vec{\theta}}\hat{\vec{x}}_{\vec{s}}$ is positive semi-definite by definition. Now we study the concavity of $L(\vec{\theta}\,|\,\vec{x},\vec{s},y)$ in terms of negative and positive samples.

    For a negative sample, $y=0$ and $\nabla^2_{\vec{\theta}}L(\vec{\theta}\,|\,\vec{x},\vec{s},y)=-\nabla_{\vec{\theta}}\hat{\vec{x}}_{\vec{s}}$ is negative semi-definite based on the definition, implying that $L(\vec{\theta}\,|\,\vec{x},\vec{s},y)$ is concave.

    For a positive sample, $y=1$ and $\nabla^2_{\vec{\theta}}L(\vec{\theta}\,|\,\vec{x},\vec{s},y)=\frac{1-P_{\vec{s}}}{P_{\vec{s}}^2}\mathbf{Q}_{\vec{s}}$.
    Now we study the definiteness of $\mathbf{Q}_{\vec{s}}$.

    Due to the linear independence of the set of features $\{\vec{x}_{uv}\,|\,\forall u\in\mathcal{V}_{\vec{s}}\}$, we have
    \begin{equation}
        \rank(\nabla_{\vec{\theta}}\hat{\vec{x}}_{\vec{s}})=|\mathcal{V}_{\vec{s}}|\geqslant
        \rank(\hat{\vec{x}}_{\vec{s}}\hat{\vec{x}}^{\mathsf{T}}_{\vec{s}})=1.
    \end{equation}
    When $\exists \vec{s}\in\mathcal{S}^+, |\mathcal{V}_{\vec{s}}|>1$,
    $\rank(\nabla_{\vec{\theta}}\hat{\vec{x}}_{\vec{s}})>\rank(\hat{\vec{x}}_{\vec{s}}\hat{\vec{x}}^{\mathsf{T}}_{\vec{s}})$.
    Then, $\exists\vec{z}\perp\hat{\vec{x}}_{\vec{s}}$ such that $\vec{z}^\mathsf{T}(\hat{\vec{x}}_{\vec{s}}\hat{\vec{x}}^{\mathsf{T}}_{\vec{s}})\vec{z}=0$ and $\vec{z}^\mathsf{T}(\nabla_{\vec{\theta}}\hat{\vec{x}}_{\vec{s}})\vec{z}\geqslant 0$. Then $\vec{z}^\mathsf{T}\mathbf{Q}_{\vec{s}}\vec{z}\geqslant 0$, $\mathbf{Q}_{\vec{s}}$ is not negative semi-definite, thus $L(\vec{\theta}\,|\,\vec{x},\vec{s},y)$ is not concave.

    For a special sample, i.e., when $|\mathcal{V}_{\vec{s}}|=1,\forall\vec{s}\in\mathcal{S}^+$, we study the definiteness of $\mathbf{Q}_{\vec{s}}$:

    we have $P_{\vec{s}}=1-\prod_{u\in\mathcal{V}_{\vec{s}}}\frac{e^{-\vec{\theta}^\mathsf{T}\vec{x}_{uv}}}{1+e^{-\vec{\theta}^\mathsf{T}\vec{x}_{uv}}}$. Then,
    $\mathbf{Q}_{\vec{s}}=\omega\vec{x}_{uv}\vec{x}^{\mathsf{T}}_{uv}$, where $\omega=-p^3_{uv}\leqslant 0$.

    Therefore, $\mathbf{Q}_{\vec{s}}$ is negative semi-definite, and $L(\vec{\theta}\,|\,\vec{x},\vec{s},y)$ is concave. The proof is completed.
\end{proof}

\subsection{Proof of Lemma~\ref{lemma_c_num}}
\begin{proof}
    Based on the definition of $\mathcal{F(S)}$, $\forall\vec{f},\vec{f}'\in\mathcal{F(S)}$ and the associated hypotheses $\vec{\theta},\vec{\theta}'\in\mathcal{H}$, we have
    \begin{equation}
        \norm{\vec{f}-\vec{f}'}_{\infty}=\max_{\vec{s},y\in\mathcal{S}}|L(\vec{\theta}\,|\,\vec{x},\vec{s},y)-L(\vec{\theta}'\,|\,\vec{x},\vec{s},y)|
    \end{equation}
    Based on the Lipschitz bound of $L(\vec{\theta}\,|\,\vec{x},\vec{s},y)$ in Lemma~\ref{lemma_l_lip}, we have:
    \begin{equation}
        \norm{\vec{f}-\vec{f}'}_{\infty}\leqslant(V\sqrt[q]{d}\log\frac{1}{\lambda})\norm{\vec{\theta}-\vec{\theta}'}_q.
    \end{equation}

    For hypothesis $\mathcal{H}=[-B,B]^d$, the $\ell_q$-norm $\varepsilon'$-cover can be created with $\mathcal{N}_q(\varepsilon',\mathcal{H})\leqslant\left(\frac{2B}{\varepsilon'}+1\right)^d$.
    Note that the radius $\varepsilon'$ is used to distinguish with the $\varepsilon$ in $\mathcal{N}_{\infty}(\varepsilon,\mathcal{F(S)})$.

    Then, for the $\varepsilon'$-covering of hypothesis $\mathcal{H}$, let $\vec{\theta}'$ be the center of one covering ball, $\norm{\vec{\theta}-\vec{\theta}'}_q\leqslant\varepsilon'$, correspondingly, we have $\norm{\vec{f}-\vec{f}'}_{\infty}\leqslant V\sqrt[q]{d}\varepsilon'\log\frac{1}{\lambda}$,
    which forms the $\ell_{\infty}$-norm covering of $\mathcal{F(S)}$ with radius $V\sqrt[q]{d}\varepsilon'\log\frac{1}{\lambda}$, and implies that
    \begin{equation}
        \mathcal{N}_{\infty}(V\sqrt[q]{d}\varepsilon'\log\frac{1}{\lambda},\mathcal{F(S)})\leqslant\mathcal{N}_q(\varepsilon',\mathcal{H}).
    \end{equation}

    Note that, when $q=\infty$, to construct the $\varepsilon'$-cover of the hypothesis $\mathcal{H}$, the space can be discretized ``evenly'' into $\mathcal{N}_{\infty}(\varepsilon',\mathcal{H})=\ceil{\frac{B}{\varepsilon'}}^d$ small $d$-dimensional hypercubes with edge length $2\varepsilon'$, which leads to a tighter bound.

    Then, based on above statements together with $\sqrt[q]{d}=1$, setting $\varepsilon'=\frac{\varepsilon}{V\log\frac{1}{\lambda}}$, $\mathcal{N}_{\infty}(\varepsilon,\mathcal{F(S)})$ can be derived, which completes the proof.
\end{proof}

\subsection{Proof of Lemma~\ref{lemma_r_bound}}
\begin{proof}
    Let $\mathcal{F}_\varepsilon(\vec{z})$ denote a subset of values $\vec{f}\in\mathcal{F(S)}$ $\varepsilon$-covered by $\vec{z}\in\mathcal{Z}$, where $\mathcal{Z}$ is a minimum cover set with $|\mathcal{Z}|=\mathcal{N}_{\infty}(\varepsilon,\mathcal{F(S)})$, then $\mathcal{F(S)}=\cup_{\vec{z}\in\mathcal{Z}}\mathcal{F}_\varepsilon(\vec{z})$.

    The Rademacher complexity of $\mathcal{F(S)}$ is defined as:
    \begin{equation}
        \mathcal{R(F(S))} = \frac{1}{m}\mathop{\mathbb{E}}_{\vec{\sigma}\sim\{\pm 1\}^m}\sup_{\vec{f}\in\mathcal{F(S)}}\innerp{\vec{\sigma},\vec{f}},
    \end{equation}
    where $\vec{\sigma}=(\sigma_i)_{i\in[m]}\in\{\pm 1\}^m$ is drawn i.i.d. from a random process with probability $Pr(\sigma_i=+1)=Pr(\sigma_i=-1)=0.5$.
    Then, based on the composition of $\mathcal{F(S)}$, we have
    \begin{align}
        m\mathcal{R(F(S))} &= \mathop{\mathbb{E}}_{\vec{\sigma}\sim\{\pm 1\}^m}\sup_{\vec{z}\in\mathcal{Z}}\sup_{\vec{f}\in\mathcal{F}_\varepsilon(\vec{z})}\innerp{\vec{\sigma},\vec{f}}\\
                        &= \mathop{\mathbb{E}}_{\vec{\sigma}\sim\{\pm 1\}^m}\sup_{\vec{z}\in\mathcal{Z}}\sup_{\vec{f}\in\mathcal{F}_\varepsilon(\vec{z})}\innerp{\vec{\sigma},\vec{z}}+\innerp{\vec{\sigma},\vec{f}-\vec{z}},\\
                        \shortintertext{based on subadditivity of $\mathop{\mathbb{E}}_{\vec{\sigma}}\sup_{\vec{z}}\sup_{\vec{f}}$:}
                        &\leqslant \mathop{\mathbb{E}}_{\vec{\sigma}\sim\{\pm 1\}^m}\sup_{\vec{z}\in\mathcal{Z}}\innerp{\vec{\sigma},\vec{z}} +\!\! \mathop{\mathbb{E}}_{\vec{\sigma}\sim\{\pm 1\}^m}\sup_{\vec{z}\in\mathcal{Z}}\sup_{\vec{f}\in\mathcal{F}_\varepsilon(\vec{z})}\innerp{\vec{\sigma},\vec{f}-\vec{z}}\\
                        &\leqslant m\mathcal{R(Z)} + \mathop{\mathbb{E}}_{\vec{\sigma}\sim\{\pm 1\}^m}\sup_{\vec{z}\in\mathcal{Z}}\sup_{\vec{f}\in\mathcal{F}_\varepsilon(\vec{z})}\norm{\vec{\sigma}}_{\infty}\norm{\vec{f}-\vec{z}}_1,\\
                        \shortintertext{based on Massart's Lemma:}
                        &\leqslant \max_{\vec{z}\in\mathcal{Z}}\norm{\vec{z}}_2\sqrt{2\log{|\mathcal{Z}|}} + m\varepsilon\\
                        &\leqslant \sqrt{m}V\log{\frac{1}{\lambda}}\sqrt{2\log{\mathcal{N}_{\infty}(\varepsilon,\mathcal{F(S)})}} + m\varepsilon.
    \end{align}
    The proof is completed by dividing $m$ from both sides of last inequality.
\end{proof}

\subsection{Proof of Lemma~\ref{lemma_s_complx}}
\begin{proof}
    Based on Rademacher complexity theory~\cite{shalev2014understanding},
    \begin{equation}
        \sup\nolimits_{\vec{\theta}\in\mathcal{H}}\mathcal{L}_{\mathcal{D}}(\vec{\theta}\,|\,\vec{x})-\mathcal{L}_{\mathcal{D}}(\hat{\vec{\theta}}\,|\,\vec{x})\leqslant 2\mathcal{R(F(S))} + 5V\log\frac{1}{\lambda}\sqrt{\frac{2\log{\frac{8}{\delta}}}{m}}.
    \end{equation}

    Based on Lemmas~\ref{lemma_c_num} and \ref{lemma_r_bound},
    \begin{align}
        \MoveEqLeft
        \sup\nolimits_{\vec{\theta}\in\mathcal{H}}\mathcal{L}_{\mathcal{D}}(\vec{\theta}\,|\,\vec{x})-\mathcal{L}_{\mathcal{D}}(\hat{\vec{\theta}}\,|\,\vec{x})\\
        &\leqslant \frac{V\log{\frac{1}{\lambda}}}{\sqrt{m}}\left(2\sqrt{2d\log\ceil*{\frac{B V\log\frac{1}{\lambda}}{\varepsilon'}}} + 5\sqrt{2\log{\frac{8}{\delta}}}\right) + 2\varepsilon'\\
        &\leqslant \frac{2V\log{\frac{1}{\lambda}}}{\sqrt{m}}\sqrt{4d\log\ceil*{\frac{B V\log\frac{1}{\lambda}}{\varepsilon'}} + 25\log{\frac{8}{\delta}}} + 2\varepsilon'.
    \end{align}
    For above inequality, let $\mathsf{RHS}\leqslant\varepsilon=3\varepsilon'$, rearranging the formulation, we have
    \begin{align}
        m&\geqslant\left(\frac{6V\log{\frac{1}{\lambda}}}{\varepsilon}\right)^2\left(4d\log\ceil*{\frac{3B V\log\frac{1}{\lambda}}{\varepsilon}} + 25\log{\frac{8}{\delta}}\right).
    \end{align}
    The proof is completed.
\end{proof}

\section{Additional Formal Results}\label{ap_add_formal}

\subsection{General Non-concavity}

 Lemma~\ref{lemma_l_nonc} analyzes the concavity of one-sample log-likelihood. Then, the expected log-likelihood $\mathcal{L}_{\mathcal{S}}(\vec{\theta}\,|\,\vec{x})$ depends on the dataset $\mathcal{S}$. Here we construct two examples showing that $\mathcal{L}_{\mathcal{S}}(\vec{\theta}\,|\,\vec{x})$ can be concave or non-concave on different datasets.
 i) $\mathcal{S}$ contains only negative or one-effective positive samples directly following Lemma~\ref{lemma_l_nonc}, then $\mathcal{L}_{\mathcal{S}}(\vec{\theta}\,|\,\vec{x})$ over $\mathcal{S}$ is concave.
 ii) $\mathcal{S}$ contains only one specific type of positive samples, written as $\{((\mathcal{V}\backslash u,\,u), 1), \forall u\in\mathcal{V}\}$, where $|\mathcal{V}|>1$. In this case, there is no implication of other negative samples. Then, based on Lemma~\ref{lemma_l_nonc}, $\mathcal{L}_{\mathcal{S}}(\vec{\theta}\,|\,\vec{x})$ over $\mathcal{S}$ is non-concave.

\subsection{Conditional Concavity}
We revisit Lemma~\ref{lemma_l_nonc} to study in depth the concavity of $\mathcal{L}_{\mathcal{S}}(\vec{\theta}\,|\,\vec{x})$ on a more general dataset instead of special cases, and achieve the following results.
\begin{lemma}\label{lemma_phi}
    Let $\mathbf{I}_{\vec{s}}(uv)$ indicate if an interaction $uv$ is covered by $\vec{s}$ or not, i.e., let $\vec{s}=(\mathcal{V}_{\vec{s}},v')$, $\mathbf{I}_{\vec{s}}(uv)=1$ if $u\in\mathcal{V}_{\vec{s}}\land v=v'$ and $0$ otherwise, and
\begin{equation}
    \phi_{uv} = \sum_{\vec{s}\in\mathcal{S}^+}\frac{1}{P_{\vec{s}}}\mathbf{I}_{\vec{s}}(uv)-\sum_{\vec{s}\in\mathcal{S}}\mathbf{I}_{\vec{s}}(uv),\label{eq_phi}
\end{equation}
where $\sum_{\vec{s}\in\mathcal{S}^+}\frac{1}{P_{\vec{s}}}\mathbf{I}_{\vec{s}}(uv)$ is the weighted size of positive samples $\mathcal{S}^+$ that cover $uv$, and $\sum_{\vec{s}\in\mathcal{S}}\mathbf{I}_{\vec{s}}(uv)$ the size of all samples that cover $uv$.
Let $\mathcal{C}^+=\{uv,\,\forall(\mathcal{V}_{\vec{s}},v)\in\mathcal{S}^+,u\in\mathcal{V}_{\vec{s}}\}$ represent the set of effective interactions covered by positive samples.
Then, if $\phi_{uv}\leqslant 0,\forall uv\in\mathcal{C}^+$, $\mathcal{L}_{\mathcal{S}}(\vec{\theta}\,|\,\vec{x})$ is concave.
\end{lemma}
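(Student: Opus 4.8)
The plan is to compute the Hessian $\nabla^2_{\vec{\theta}}\mathcal{L}_{\mathcal{S}}(\vec{\theta}\,|\,\vec{x})$ as the average of the per‑sample Hessians from the proof of Lemma~\ref{lemma_l_nonc}, and then show that the hypothesis $\phi_{uv}\leqslant 0$ forces it to be negative semi‑definite. First I would recall from Eq.~\eqref{eq_d2L}, after re‑expanding $\mathbf{Q}_{\vec{s}}=P_{\vec{s}}\nabla_{\vec{\theta}}\hat{\vec{x}}_{\vec{s}}-\hat{\vec{x}}_{\vec{s}}\hat{\vec{x}}_{\vec{s}}^{\mathsf{T}}$, that a positive sample contributes $\nabla^2_{\vec{\theta}} L = -\tfrac{1-P_{\vec{s}}}{P_{\vec{s}}^2}\hat{\vec{x}}_{\vec{s}}\hat{\vec{x}}_{\vec{s}}^{\mathsf{T}} + \tfrac{1-P_{\vec{s}}}{P_{\vec{s}}}\nabla_{\vec{\theta}}\hat{\vec{x}}_{\vec{s}}$ and a negative sample contributes $\nabla^2_{\vec{\theta}} L = -\nabla_{\vec{\theta}}\hat{\vec{x}}_{\vec{s}}$. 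Summing over $\mathcal{S}$ and substituting the sigmoid form $\nabla_{\vec{\theta}}\hat{\vec{x}}_{\vec{s}}=\sum_{u\in\mathcal{V}_{\vec{s}}}p_{uv}(1-p_{uv})\vec{x}_{uv}\vec{x}_{uv}^{\mathsf{T}}$ from Eq.~\eqref{eq_dx_s} gives $|\mathcal{S}|\,\nabla^2_{\vec{\theta}}\mathcal{L}_{\mathcal{S}} = -A + C$, where $A=\sum_{\vec{s}\in\mathcal{S}^+}\tfrac{1-P_{\vec{s}}}{P_{\vec{s}}^2}\hat{\vec{x}}_{\vec{s}}\hat{\vec{x}}_{\vec{s}}^{\mathsf{T}}\succeq 0$ gathers the rank‑one ``square'' terms and $C=\sum_{\vec{s}\in\mathcal{S}^+}\tfrac{1-P_{\vec{s}}}{P_{\vec{s}}}\nabla_{\vec{\theta}}\hat{\vec{x}}_{\vec{s}} - \sum_{\vec{s}\in\mathcal{S}^-}\nabla_{\vec{\theta}}\hat{\vec{x}}_{\vec{s}}$ gathers the remaining terms.

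The crux is to rewrite $C$ as a sum over interactions. Expanding each $\nabla_{\vec{\theta}}\hat{\vec{x}}_{\vec{s}}$ and regrouping the rank‑one matrices $\vec{x}_{uv}\vec{x}_{uv}^{\mathsf{T}}$ by the interaction $uv$ they carry, the scalar coefficient of $\vec{x}_{uv}\vec{x}_{uv}^{\mathsf{T}}$ is $p_{uv}(1-p_{uv})\bigl(\sum_{\vec{s}\in\mathcal{S}^+}\tfrac{1-P_{\vec{s}}}{P_{\vec{s}}}\mathbf{I}_{\vec{s}}(uv)-\sum_{\vec{s}\in\mathcal{S}^-}\mathbf{I}_{\vec{s}}(uv)\bigr)$. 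Using $\tfrac{1}{P_{\vec{s}}}-1=\tfrac{1-P_{\vec{s}}}{P_{\vec{s}}}$ together with $\sum_{\vec{s}\in\mathcal{S}}\mathbf{I}_{\vec{s}}(uv)=\sum_{\vec{s}\in\mathcal{S}^+}\mathbf{I}_{\vec{s}}(uv)+\sum_{\vec{s}\in\mathcal{S}^-}\mathbf{I}_{\vec{s}}(uv)$, this bracket collapses to exactly $\phi_{uv}$ from Eq.~\eqref{eq_phi}, so $C=\sum_{uv}p_{uv}(1-p_{uv})\,\phi_{uv}\,\vec{x}_{uv}\vec{x}_{uv}^{\mathsf{T}}$, the sum running over all interactions covered by at least one sample.

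Finally I would note that any interaction $uv\notin\mathcal{C}^+$ is covered only by negative samples, hence $\phi_{uv}=-\sum_{\vec{s}\in\mathcal{S}}\mathbf{I}_{\vec{s}}(uv)\leqslant 0$ automatically; combined with the hypothesis $\phi_{uv}\leqslant 0$ for $uv\in\mathcal{C}^+$, we get $\phi_{uv}\leqslant 0$ for every interaction. Since $p_{uv}(1-p_{uv})\geqslant 0$ and $\vec{x}_{uv}\vec{x}_{uv}^{\mathsf{T}}\succeq 0$, each summand of $C$ is negative semi‑definite, so $C\preceq 0$; therefore $\nabla^2_{\vec{\theta}}\mathcal{L}_{\mathcal{S}}=\tfrac{1}{|\mathcal{S}|}(-A+C)\preceq -\tfrac{1}{|\mathcal{S}|}A\preceq 0$, i.e.\ $\mathcal{L}_{\mathcal{S}}(\vec{\theta}\,|\,\vec{x})$ is concave. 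The main obstacle is the bookkeeping in the regrouping step — verifying that the per‑interaction coefficient collapses precisely to $p_{uv}(1-p_{uv})\phi_{uv}$, in particular handling terms $\vec{x}_{uv}\vec{x}_{uv}^{\mathsf{T}}$ arising from both $\mathcal{S}^+$ and $\mathcal{S}^-$ and reconciling the $\tfrac{1}{P_{\vec{s}}}$ appearing in $\phi_{uv}$ with the $\tfrac{1-P_{\vec{s}}}{P_{\vec{s}}}$ appearing in the Hessian; the semidefiniteness facts are then immediate.
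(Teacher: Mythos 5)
Your proposal is correct and follows essentially the same route as the paper's proof: it assembles the Hessian $\nabla^2_{\vec{\theta}}\mathcal{L}_{\mathcal{S}}$ from the per-sample Hessians of Lemma~\ref{lemma_l_nonc}, splits off the negative semi-definite rank-one part, and regroups the remaining terms by interaction $uv$ so that each coefficient collapses to $p_{uv}(1-p_{uv})\phi_{uv}$ (your $C$ is exactly the paper's $\mathbf{R}_{\mathcal{S}}$), then uses $\phi_{uv}\leqslant 0$ on $\mathcal{C}^+$ together with the automatic non-positivity outside $\mathcal{C}^+$ to conclude negative semi-definiteness. No gaps; the bookkeeping step you flag is precisely the computation the paper carries out.
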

\begin{proof}
    Given a general cascading failures dataset $\mathcal{S}$, we have
    \begin{align}
        \rank\left(\sum\nolimits_{\vec{s},y\in\mathcal{S}}\nabla_{\vec{\theta}}\hat{\vec{x}}_{\vec{s}}\right) &= \min(d,\,\mathsf{perm}(|\mathcal{V}|,2)),\\
        \rank\left(\sum\nolimits_{\vec{s},y\in\mathcal{S}}\hat{\vec{x}}_{\vec{s}}\hat{\vec{x}}^{\mathsf{T}}_{\vec{s}}\right) &= \rank([\hat{\vec{x}}_{\vec{s}}]_{\mathcal{S}}),
    \end{align}
    where $d=\mathsf{dim}(\vec{x})$ and $\mathsf{perm}(|\mathcal{V}|,2)=|\mathcal{V}|(|\mathcal{V}|-1)$ is the total number of feature vectors. For a large dataset $\mathcal{S}$, with high probability, we have $\rank([\hat{\vec{x}}_{\vec{s}}]_{\mathcal{S}})=d\ll\mathsf{perm}(|\mathcal{V}|,2)$, then,
    \begin{equation}
        \rank\left(\sum\nolimits_{\vec{s},y\in\mathcal{S}}\nabla_{\vec{\theta}}\hat{\vec{x}}_{\vec{s}}\right)
        =\rank\left(\sum\nolimits_{\vec{s},y\in\mathcal{S}}\hat{\vec{x}}_{\vec{s}}\hat{\vec{x}}^{\mathsf{T}}_{\vec{s}}\right).
    \end{equation}
    Then the rank analysis approach used in Lemma~\ref{lemma_l_nonc} is not applicable here.
    Now we further check the definiteness of the Hessian matrix:
    \begin{align}
        \MoveEqLeft
        \nabla^2_{\vec{\theta}}\mathcal{L}_{\mathcal{S}}(\vec{\theta}\,|\,\vec{x})
        =\frac{1}{|\mathcal{S}|}\left(\sum_{\vec{s}\in\mathcal{S}^+}\frac{1-P_{\vec{s}}}{P_{\vec{s}}^2}\mathbf{Q}_{\vec{s}}-\sum_{\vec{s}\in\mathcal{S}^-}\nabla_{\vec{\theta}}\hat{\vec{x}}_{\vec{s}}\right)\\
        &=\frac{1}{|\mathcal{S}|}\left(\sum_{\vec{s}\in\mathcal{S}^+}\frac{1}{P_{\vec{s}}}\nabla_{\vec{\theta}}\hat{\vec{x}}_{\vec{s}}-\sum_{\vec{s}\in\mathcal{S}}\nabla_{\vec{\theta}}\hat{\vec{x}}_{\vec{s}}-\sum_{\vec{s}\in\mathcal{S}^+}\frac{1-P_{\vec{s}}}{P_{\vec{s}}^2}\hat{\vec{x}}_{\vec{s}}\hat{\vec{x}}^{\mathsf{T}}_{\vec{s}}\right)\\
        &=\frac{1}{|\mathcal{S}|}\left(\mathbf{R}_{\mathcal{S}}-\sum_{\vec{s}\in\mathcal{S}^+}\frac{1-P_{\vec{s}}}{P_{\vec{s}}^2}\hat{\vec{x}}_{\vec{s}}\hat{\vec{x}}^{\mathsf{T}}_{\vec{s}}\right),\label{eq_d2L_r}
    \end{align}
    where
    \begin{equation}
        \mathbf{R}_{\mathcal{S}}=\sum_{\vec{s}\in\mathcal{S}^+}\frac{1}{P_{\vec{s}}}\nabla_{\vec{\theta}}\hat{\vec{x}}_{\vec{s}}-\sum_{\vec{s}\in\mathcal{S}}\nabla_{\vec{\theta}}\hat{\vec{x}}_{\vec{s}}.
    \end{equation}
    Based on Eq.~\eqref{eq_dx_s},
    \begin{equation}
        \nabla_{\vec{\theta}}\hat{\vec{x}}_{\vec{s}}=\sum\nolimits_{u\in\mathcal{V}_{\vec{s}}}\psi_{uv}(\vec{x}_{uv}\vec{x}^{\mathsf{T}}_{uv}),
    \end{equation}
    where
    $\psi_{uv}=p_{uv}(1-p_{uv})>0$ for sigmoid. Then,
    \begin{align}
        \mathbf{R}_{\mathcal{S}}
        &=\sum_{\vec{s}\in\mathcal{S}^+}\sum_{u\in\mathcal{V}_{\vec{s}}}\frac{1}{P_{\vec{s}}}\psi_{uv}(\vec{x}_{uv}\vec{x}^{\mathsf{T}}_{uv})-\sum_{\vec{s}\in\mathcal{S}}\sum_{u\in\mathcal{V}_{\vec{s}}}\psi_{uv}(\vec{x}_{uv}\vec{x}^{\mathsf{T}}_{uv})\\
        &=\sum_{uv\in\mathcal{V}^2,u\neq v}\left\{\psi_{uv}(\vec{x}_{uv}\vec{x}^{\mathsf{T}}_{uv})\left(\sum_{\vec{s}\in\mathcal{S}^+}\frac{1}{P_{\vec{s}}}\mathbf{I}_{\vec{s}}(uv)-\sum_{\vec{s}\in\mathcal{S}}\mathbf{I}_{\vec{s}}(uv)\right)\right\}\\
        &=\sum_{uv\in\mathcal{V}^2,u\neq v}\phi_{uv}\psi_{uv}(\vec{x}_{uv}\vec{x}^{\mathsf{T}}_{uv}).
    \end{align}
    Let $\mathcal{C}=\{uv,\,\forall uv\in\mathcal{V}^2,u\neq v\}$ represent the set of all interactions.
    Then, $\forall uv\in\mathcal{C}-\mathcal{C}^+,\,\vec{s}\in\mathcal{S}^+$, $\mathbf{I}_{\vec{s}}(uv)=0$, i.e., $uv$ is not covered by the positive samples, and $\phi_{uv}=-\sum_{\vec{s}\in\mathcal{S}}\mathbf{I}_{\vec{s}}(uv)\leqslant 0$. 
    
    Then, if $\phi_{uv}\leqslant 0,\,\forall uv\in\mathcal{C}^+$, $\mathbf{R}_{\mathcal{S}}$ will be negative semi-definite by definition, and $\nabla^2_{\vec{\theta}}\mathcal{L}_{\mathcal{S}}(\vec{\theta}\,|\,\vec{x})$ will be negative semi-definite.
    The proof is completed.
\end{proof}

Lemma~\ref{lemma_phi} provide a condition for the expected log-likelihood being concave. Then, given a dataset, it's always able to check beforehand the concavity and decide which algorithm to use. Note that Lemma~\ref{lemma_phi} evaluates the concavity by each time looking at the entire set or subset of dataset that has been selected for computing. Now we start to review Lemma~\ref{lemma_phi} on a general dataset.

    Let $N_{\mathcal{A}}(uv)$ denote the number of samples in $\mathcal{A}$ that cover $uv$. Then, we have $N_{\mathcal{S}}(uv)=\sum_{\vec{s}\in\mathcal{S}}\mathbf{I}_{\vec{s}}(uv)$ in Lemma~\ref{lemma_phi}, and $N_{\mathcal{S}^+}(uv)=\sum_{\vec{s}\in\mathcal{S}^+}\mathbf{I}_{\vec{s}}(uv)$.
    Let $\overline{P}^{-1}_{uv}=\frac{\sum_{\vec{s}\in\mathcal{S}^+}{P^{-1}_{\vec{s}}}\mathbf{I}_{\vec{s}}(uv)}{\sum_{\vec{s}\in\mathcal{S}^+}\mathbf{I}_{\vec{s}}(uv)}$.   
Based on Eq.~\eqref{eq_phi}, we have: 
\begin{align}
    \phi_{uv} &=\overline{P}^{-1}_{uv}N_{\mathcal{S}^+}(uv)-N_{\mathcal{S}}(uv)\\
              &=\frac{1}{N_{\mathcal{S}}(uv)}\left(\frac{\tilde{p}_{uv}}{\overline{P}_{uv}}-1\right),\label{eq_phi_n}
\end{align}
where $\tilde{p}_{uv}=\frac{N_{\mathcal{S}^+}(uv)}{N_{\mathcal{S}}(uv)}$ is an empirical estimation of $p_{uv}$.
By definition, we have
\begin{equation}
    \overline{P}_{uv}\geqslant\min_{\vec{s}\in\mathcal{S}^+:\mathbf{I}_{\vec{s}}(uv)=1}P_{\vec{s}}\geqslant p_{uv}.\label{ineq_p}
\end{equation}
Then, we have
\begin{equation}
    \phi_{uv}\leqslant\frac{1}{N_{\mathcal{S}}(uv)}\left(\frac{\tilde{p}_{uv}}{p_{uv}}-1\right).
\end{equation}

Without any details on a general dataset, given a positive sample $\vec{s}\in\mathcal{S}^+$ with $|\mathcal{V}_{\vec{s}}|>1$, the empirical estimation $\tilde{p}_{uv}$ of $p_{uv}$ is larger than $p_{uv}$ for some $uv$, i.e., $\tilde{p}_{uv}\geqslant p_{uv}$ due to the combination effect from $\mathcal{V}_{\vec{s}}$ and the over-counting effect of empirical estimation from $\frac{N_{\mathcal{S}^+}(uv)}{N_{\mathcal{S}}(uv)}$. In this case, $\text{RHS}\geqslant 0$, hence 
we cannot determine the sign of $\phi_{uv}$, thus the concavity of $\mathcal{L}_{\mathcal{S}}(\vec{\theta}\,|\,\vec{x})$.
However, we stress that $\text{RHS}$ is a quite loose 
estimated bound for $\phi_{uv}$, taking the minimum value $p_{uv}$.

    For a general dataset, maximizing $\mathcal{L}_{\mathcal{S}}(\vec{\theta}\,|\,\vec{x})$ will make the sample probability $P_{\vec{s}},\forall \vec{s}\in\mathcal{S}^+$ approach to 1, thus, $\overline{P}_{uv}$ also approach to 1, and $\phi_{uv}\approx N_{\mathcal{S}^+}(uv)-N_{\mathcal{S}}(uv)\leqslant 0$. Therefore, if $|\mathcal{S}|\gg|\mathcal{S}^+|$, which could be frequently seen in cascading failures scenario, with certain probability, $\mathcal{L}_{\mathcal{S}}(\vec{\theta}\,|\,\vec{x})$ tend to be concave.

\subsection{Discussion on General Dataset}
Here we analyze the property of a general dataset $\mathcal{S}$.
The cascading influence graph is a complete graph in this context, then every event will involve interactions from some nodes to the rest in $\mathcal{V}$ (see Section~\ref{sec-system-model}), i.e., given a sample $(\vec{s},y)\in\mathcal{S}$ where $\vec{s}=(\mathcal{V}_{\vec{s}},v)$, it implies that $\mathcal{S}$ also contains other positive or negative samples $\{(\vec{s}',y')\,|\,\vec{s}'=(\mathcal{V}_{\vec{s}},v'),v'\in\mathcal{V}\backslash v-\mathcal{V}_{\vec{s}}\}$.

Then, for a general dataset $\mathcal{S}$ with $|\mathcal{S}|\gg |\mathcal{V}|$, with very high probability, $\mathcal{S}$ can cover all the interactions among $|\mathcal{V}|$. This covering problem can be described as: given $\mathcal{V}$, each node has equal probability being taken; if every round randomly picking nodes from $\mathcal{V}$ then returning them, what is the probability for $|\mathcal{S}|$ rounds picking to ensure every node being taken at least one time?
The covering probability is then $P_{cover}=(1-(1-\frac{1}{|\mathcal{V}|})^{|\mathcal{S}|})^{|\mathcal{V}|}\geqslant 1-|\mathcal{V}|e^{-{|\mathcal{S}|}/{|\mathcal{V}|}}$. For instance, taking a large power network with $|\mathcal{V}|=1000$ if $|\mathcal{S}|=21|\mathcal{V}|$, $P_{cover}$ is larger than $1-10^{-6}$.

\end{document}